\newcommand{\NoD}{{\sf No-Detour }}
\newcommand{\OneD}{{\sf 1-Detour }}
\newcommand{\TwoD}{{\sf 2-Detour }}
\newcommand{\XthreeC}{{\sf X3C }}
\newcommand{\XoneC}{{\sf X1C }}
\newcommand{\CXP}{{\sf CXP }}
\newcounter{trajectory}
\newenvironment{trajectory}
    {
    \refstepcounter{trajectory}
    \begin{center}
    \begin{tabular}{|p{0.9\textwidth}|}
    \hline
    \textbf{Trajectories \thetrajectory.}
    }
    { 
    \\\hline
    \end{tabular} 
    \end{center}
    }
\newtheorem{theorem}{Theorem}[section]
\newtheorem{lemma}{Lemma}[section]
\newtheorem{obs}{Observation}
\newtheorem{remark}{Remark}[section]
\begin{document}
\newenvironment{proof}{\noindent{\bf Proof.} }{\null\hfill$\Box$\par\medskip}
\newenvironment{proofi}{\par{\it Proof}. \ignorespaces}{}

\title{Evacuation of equilateral triangles by mobile agents \\of limited 
  communication range\footnote{This research was
    supported by NSERC of Canada}}
\author{Iman Bagheri, Lata  Narayanan, Jaroslav Opatrny\\
Department of Computer Science and Software Engineering,\\
  Concordia University, Montreal, Canada\\
       {\textit {\{imanbag@gmail.com,
lata@encs.concordia.ca,opatrny@cs.concordia.ca\}}}}
\maketitle

\maketitle

\begin{abstract}
  We consider the problem of evacuating $k \geq 2$ mobile agents from  a unit-sided equilateral triangle through an exit located at an unknown location on the perimeter of the triangle. The agents are initially located at the centroid of the triangle and they can communicate with other agents at distance at most $r$ with $0\leq r \leq 1$.  An agent can move at speed at most one, and  finds the exit only when it reaches the point where the exit is located. The agents can collaborate in the search for the exit. The goal of the {\em evacuation problem} is to minimize the evacuation time, defined as the worst-case time for {\em all} the agents to reach the exit.
  
  We propose and analyze several algorithms for the problem of evacuation by $k \geq 2$   agents; our results indicate that the best strategy to be used varies depending on the values of  $r$ and $k$. For two agents, we give three algorithms, each of which achieves the best performance for different sub-ranges of $r$ in  the range $0 \leq r \leq 1$. 
We also show a lower bound on the evacuation time of two agents for any $r < 0.336$. For $k >2$ agents, we study three strategies for evacuation: in the first strategy, called {\sf X3C}, agents explore all three sides of the triangle before connecting to exchange information; in the second strategy,  called {\sf X1C}, agents explore a single side of the triangle before connecting; in the third strategy, called {\sf CXP}, the agents travel to the perimeter to locations in which they are connected, and explore it while always staying connected.  For 3 or 4 agents, we show that \XthreeC works better than \XoneC for small values of $r$, while \XoneC works better for larger values of $r$. Finally, we show that for any $r$, evacuation of $k=6 +2\lceil(\frac{1}{r}-1)\rceil$ agents can be done using the \CXP strategy  in time $1+\sqrt{3}/3$, which is optimal in terms of time, and asymptotically optimal in terms of the number of agents. 
\end{abstract}


\section{Introduction}
Consider the situation where several mobile agents/robots are located inside a closed region, that has a single exit point on the perimeter of the region at a location unknown to the agents. Due to some emergency, the agents all need to leave this region as quickly as possibly. Thus the agents need to collaboratively search for the exit and  minimize the time that is needed for {\em all of them} to reach the exit. This  {\em evacuation problem} has already been considered
for several different regions and agents of different capabilities.

Two models of communication between the agents have been  considered  in the context of group search and evacuation. 
In the first model, called the {\em face-to-face} model, the agents can communicate only when they are in the same place at the same time. 
In the second model, called the {\em wireless} model, the agents can communicate at any time and over any distance. The algorithms for evacuation in the wireless and face-to-face models are in general, quite different. It is natural to ask how the agents would perform evacuation if their communication range was limited to some  $r$ with $0 \leq  r \leq \infty$ where the diameter of the region is assumed to be 1. Note that $r=0$ gives the face-to-face model, and $r = \infty$ corresponds to the wireless model. In any
regions of diameter 1, since agents never need to go outside the region to communicate with other agents, a communication range of $r >1$ confers no advantages. Thus the wireless model is equivalent to the case $r=1$.

In this paper we study the problem of evacuating
an equilateral triangle whose sides are of size $1$ with agents located initially in the centroid of the triangle,
and whose communication capabilities are limited to a given, fixed  distance $0< r<1$. To the best of our knowledge, the case of  limited range communication of agents in evacuation problems has not been considered yet.
Since the evacuation of the equilateral triangle was previously studied for
the face-to-face model \cite{ChuangpishitMNO17} and the wireless model \cite{EtrSq}, it will allow us to evaluate the impact of the limited transmission range on the evacuation algorithms.  
When there are three or more agents, then an agent can act as a  {\em relay} between two other agents, thereby increasing the {\em effective communication range} of the agents. Indeed a {\em virtual meeting} can occur between many agents, even when they are not co-located, so long as the network of communication they create is connected. This suggests that the  interplay between the communication range $r$ and the number of agents $k$ should be considered in the design of evacuation algorithms.

\subsection{Our Results}
We first study in detail the case of $k=2$ agents. In Section \ref{sec:2a}  we propose three evacuation algorithms for two agents, parametrized by $r$, and we establish the ranges of $r$ for which each gives the 
best results. As shown in Table ~\ref{table-summary}, throughout the entire range $0 \leq r \leq 1$, our algorithms take advantage of increased communication range to achieve lower evacuation time. 
In \cite{ChuangpishitMNO17}, it was shown that for $r=0$, it helps for the agents to make {\em detours} into the interior of the triangle, and in fact, more detours are shown to always improve the evacuation time, though it should be noted that the improvement for more than 2 detours is negligible. We show here that for $r > 0.7375$, using {\em any} detour worsens the evacuation time, while using more than one detour is not useful for $r > 0.4725$. We also show a lower bound of $1+2/\sqrt{3} -r$ on the evacuation time of two agents for any $r<0.366$. 

For $k > 2$ agents, we investigate three different strategies for evacuation. In the first strategy, called {\em Explore 3 sides before Connecting }({\sf X3C}), the perimeter of the triangle is partitioned into $k+1$ segments. 
The agents move to explore $k$ segments on all three sides, subsequently entering the interior of the triangle to form a connected network in order to communicate the results to the other agents, after which they either move to the exit or they all 
explore the remaining segment. In the second strategy, called {\em Explore 1 Side before Connecting } ({\sf X1C}) only one of the sides of the triangle is partitioned  into multiple segments, each to be explored by an agent. 
At the end of the exploration of the edge, two of the agents explore the remaining two sides of the triangle,  
while the other agents move inside to create and maintain connectivity of all agents. As soon as the exit is found, all agents can move to the exit. In the final strategy (which is only possible if the number of agents is large enough relative to $r$), called {\em Connected Exploration of Perimeter }({\sf CXP}), the agents move to positions  over two sides of the perimeter to ensure that the agents are connected {\em before} they start exploration, and they stay connected during the entire exploration.

We study in detail the case of 3 and 4 agents in Section \ref{sec:3a}. Note that the \CXP   strategy cannot apply in these cases, and thus we study only the {\sf X3C} and {\sf X1C} strategies. 
Our results show that \XthreeC works better than \XoneC for smaller values of $r$ and \XoneC is better for larger values of $r$; see Table ~\ref{table-summary}. 

Finally we consider in Section \ref{sec:ka} the problem of the optimal evacuation of $k$ agents. It was shown in \cite{EtrSq} that for any $r$, regardless of the number of agents, evacuation cannot be done in time less that 
$1 + \sqrt{3}/3$; on the other hand, this time can be achieved by 6 agents and $r=1$.  In this paper we show that for any $r >  0$, 
evacuation can achieved in the optimal time of $1+\sqrt{3}/3$ if the number of agents is $6 +2\lceil(\frac{1}{r}-1)\rceil$. Indeed for $r=1/2$, eight agents suffice, and for $r=1/3$, ten agents suffice, for $r=1/4$, twelve agents suffice. We also show that $\Omega(1/r)$ agents are required to evacuate in time $1+ \sqrt{3}/3$.

We conjecture that for any $k \geq 6$ agents, there exist $r_1, r_2$ with $0 < r_1 < r_2 < 1$ such that \XthreeC is the best strategy of the three for $0 \leq r \leq r_1$, \XoneC is the best strategy for $r_1 < r \leq r_2$, and 
\CXP is the best strategy for $r_2 < r \leq 1$.

\begin{table}
  \begin{center}
\begin{tabular}{|c||c|c||c|c||c|c|}\hline
  &\multicolumn{2}{c||}{Two Agents}
  &\multicolumn{2}{c||}{Three Agents}
  &\multicolumn{2}{c|}{Four Agents}\\ \hline
 r&$\ $ evac. time$\ $ &$\ $ alg.$\ $ &$\ $ evac. time$\ $ &$\ $ alg. $\ $&$\ $ evac. time$\ $ &$\ $ alg. $\ $\\ \hline
0& 2.3367& see \cite{ChuangpishitMNO17}&2.0887& see \cite{ChuangpishitMNO17}
 & 1.98157& see \cite{ChuangpishitMNO17}\\ \hline
 0.1&2.25424&\TwoD&2.08871& \XthreeC & 1.96199& \XthreeC\\ \hline
 0.2&2.18584&\TwoD&2.07642&\XthreeC & 1.88392& \XoneC\\ \hline
 0.3&2.12325&\TwoD&1.93620&\XoneC &1.67649 & \XoneC\\ \hline
 0.4&2.06506&\TwoD&1.78880&\XoneC &1.62573 & \XoneC\\ \hline
 0.5&2.01050&\OneD&1.68958&\XoneC & 1.61912 & \XoneC \\ \hline
 0.6&1.95926&\OneD&1.67532&\XoneC & 1.61302 & \XoneC \\ \hline
 0.7&1.91169&\OneD&1.66666&\XoneC & 1.61050 & \XoneC \\ \hline
 0.8&1.86559&\NoD&1.66666&\XoneC & 1.61050 & \XoneC \\ \hline
  0.9&1.82439&\NoD&1.66666&\XoneC & 1.61050 & \XoneC \\ \hline
1& 1.78867& see \cite{CKKNOS} & 1.66666 &see \cite{EtrSq} &1.61050 &see \cite{EtrSq} \\ \hline 
\end{tabular}
  \end{center}
  \caption{A summary of the evacuation times of our algorithms.}
  \vspace*{-10mm}
 \label{table-summary}
  \end{table}

 \subsection{Related work}

The evacuation problem was introduced in  \cite{CGGKMP} for agents inside a disk in both the wireless and face-to-face communication models. The authors gave optimal algorithms for 2 agents in the wireless model, and proved upper and lower bounds for the evacuation time for 2 agents in the face-to-face model. They also considered the problem for 3 agents and showed asymptotically tight bounds for $k$ robots in both models. The problem for the face-to-face model  was revisited in
\cite{CGKNOV}, and the results further improved in \cite{zurich}.
The evacuation of an equilateral triangle with agents in the wireless model  was considered in \cite{EtrSq}, and in the face-to-face
communication model in \cite{ChuangpishitMNO17}.  
  We should also mention the work on polygons \cite{FGK10}, evacuation of circle with faulty agents \cite{CGKNOV}, and
 the case of multiple exits on a circle \cite{DBLP:conf/icdcn/CzyzowiczDGKM16,pattanayak2017evacuating}.

 The evacuation problem is related to many other problems that have been considered previously. It can be seen as a variation of a search problem. In this context we should mention the classical {\em cow-path} problem, i.e., a problem of searching on a line  \cite{baezayates1993searching,B64,beck1970yet}, several of its versions  \cite{demaine2006online,kao1996searching,koutsoupias1996searching},
 a  group search on a line \cite{Groupsearch}, and a search on a line with
 faulty agents \cite{PODC16}.
 There are many studies involving mobile, autonomous agents  in the plane \cite{FPS12}.
 The problem of search \cite{EmekLSUW15,Brandt2018}, gathering of agents \cite{agmon2006fault,dieudonne2014gathering} in the plane, pattern formation \cite{FPSW2008}, etc.,
 have been done. The cop-and robber games \cite{bookBN}, and graph searches \cite{Fraigniaud2005} are also related.

\section{Model and Notation}
\begin{figure}
	\centering
	\includegraphics[width=6cm]{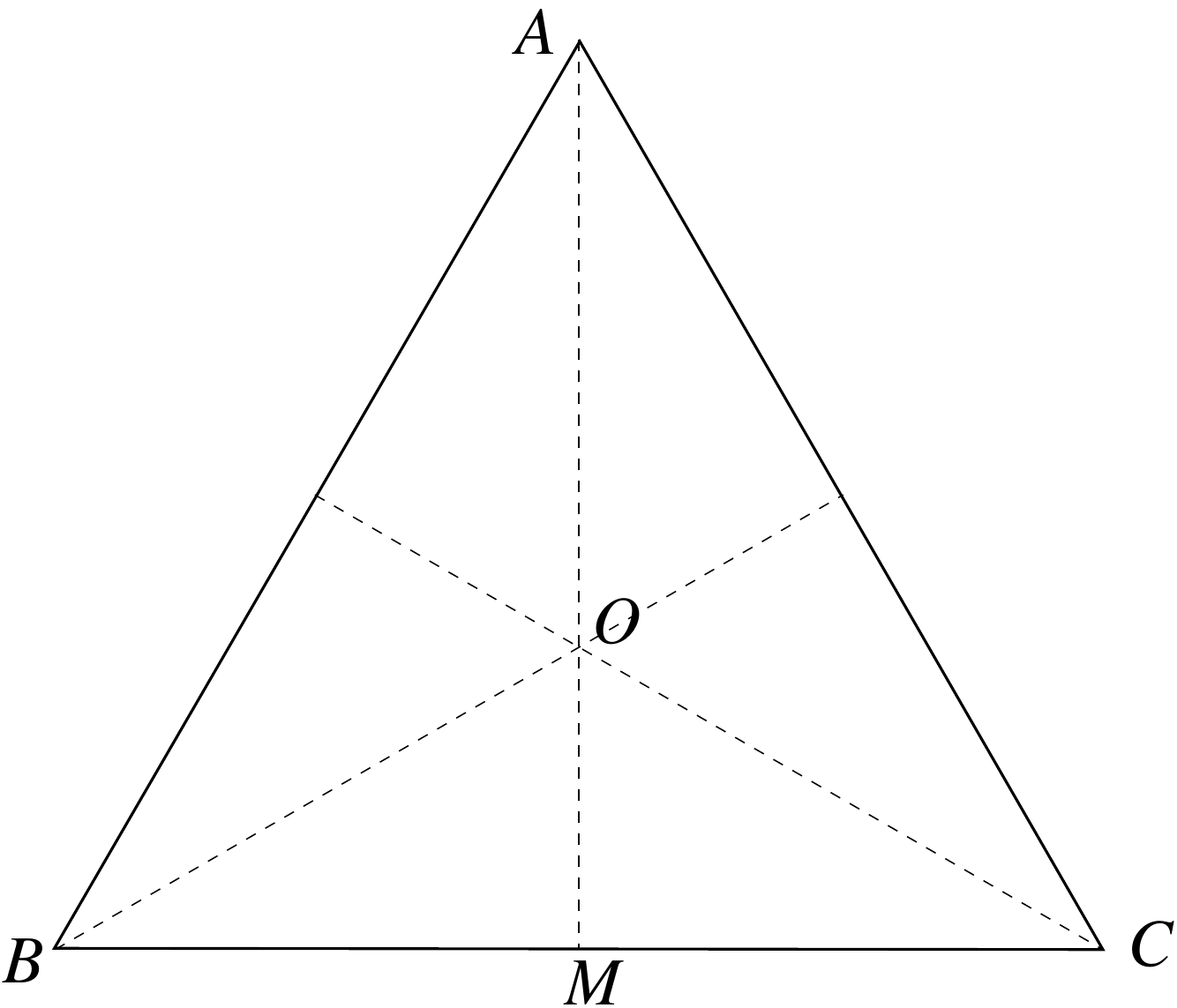}
	\caption{Equilateral triangle $T$.}
	\label{fig:triangle}
        \end{figure}
The search domain considered in this paper is the perimeter of an equilateral triangle with side 1.
We denote the triangle by $T$, with  vertices $A$, $B$ and $C$ starting at the top of the triangle, going counter-clockwise,  and  the centroid of the triangle by $O$, as  in  Figure
\ref{fig:triangle}.
Point $M$ is the midpoint of the segment $BC$. The \textit{height} of the triangle is denoted by $h$ and $y=h/3$.
The line segment connecting any two points $P$ and $Q$ is denoted by $PQ$ and its length by $|PQ|$. 
  Agents are initially located at the centroid $O$ of the triangle. 
Each agent can move at speed at most $1$, and it has a wireless transmitter/receiver with range $r\leq 1$. Unless specified otherwise in the algorithm, agents always move with speed 1.
 Agents are able to 
 carry out simple computations, e.g., if an agent finds the exit it can calculate the path to follow in order to inform other agents about the exit.
 In this paper agents are assumed to be non-faulty, meaning that they:
 follow their assigned trajectory, 
 recognize the exit if they reach its location, 
 and they can always exchange information if their distance is less than or equal to $r$.

Each agent follows a path, called its {\em trajectory}, assigned to it before the exploration begins.
We specify each evacuation algorithm by specifying a trajectory of each agent and its actions. 
An agent may leave its predetermined trajectory only if 
either it has found the exit point, or it has been notified by another agent about the location of the exit.
For each of these two situations the algorithm specifies the action to be followed.
e.g., when an agent finds the exit it specifies a point to go to from which
it can notify other agents about the exit, or if it has been notified about the exit location the action is  ``go to that point''. 
	
We denote the time that point $x$ is seen for the first time by either of the agents by $t_x$.
By $E_{\mathcal{A}}(k,r)$ we mean the worst-case evacuation time of algorithm $\mathcal{A}$ with $k$ agents, $k\geq2$ and communication range of $r$, $0\leq r \leq 1$. We denote the optimal evacuation time by $k$ agents by $E^*(k,r)$, that is:
$$E^*(k,r) = min_{\mathcal{A}}E_{\mathcal{A}}(k,r)$$

We define an {\em $r$-interception} to be the action of moving to a point in which the agent is at distance at most $r$  of the other agent(s) so that it can transmit the location of the exit point to them.
In all our algorithms, the trajectory of each agent is a sequence of line segments. To analyze the algorithms, we identify on each segment a {\em critical point},
defined to be the point or the immediate neighbourhood of a point where
the evacuation time on the segment is maximized. In order to minimize the maximum evacuation time, after identifying these critical points,  we optimize the algorithms by adjusting some parameters in the trajectories.

\section{Evacuation of Two Agents}\label{sec:2a}

In this section we give upper and lower bounds on the evacuation time for two agents with $0<r<1$.  Recall that the best known algorithm described in 
\cite{ChuangpishitMNO17} for the face-to-face model ($r=0$), evacuates two agents in time $2.3367$ and employs two detours per agent. 
In \cite{CKKNOS}, an optimal algorithm for the wireless model ($r=1$) with evacuation time of $\nicefrac{3}{2}+y\approx 1.78867$ is described. Hence if the agents are capable of communication within a certain range $0<r<1$, it is clear that the evacuation time should lie between these two values. 
We divide the triangle into two halves by a vertical line through $A$ and $O$, as shown in Figure \ref{fig:2r-nodetour-traj}. The trajectories of the two agents presented in this section are {\em symmetric} with respect to line $AO$. Thus the trajectory of the first agent $R_1$ includes exploration of the left half of the perimeter, and the second agent $R_2$ is responsible for exploration of the right half of $T$.  Therefore, without loss of generality, in the analysis of algorithms we will assume that the exit is located in the right half of the triangle throughout this section. 

The three evacuation algorithms for two agents presented in  this section use the same generic Algorithm \ref{alg:2r-nodetour-traj} given below. They only differ in the trajectories of the agents.   
\begin{algorithm}[h]
	\caption{Generic 2-agent Evacuation Algorithm Followed by an Agent.}\label{alg:2r}
		\begin{algorithmic}
			\Function{Exploration}{}
				\State found$\gets$ false
				\While{not\textless found\textgreater \, and not\textless msg\_recd \textgreater}
					\State move along the predetermined trajectory
				\EndWhile				
				\State \Call{action}{}
			\EndFunction
		\end{algorithmic}
		\begin{algorithmic}
			\Function{Action}{}
				\If{found}
					\State P $\gets$ current location
					\If{the other agent is not within communication range}
						\State calculate the closest point $U$, where the other agent can be 
						$r$-intercepted
						\State go to $U$
					\EndIf
					\State $send(P)$ to the other agent
				\EndIf
                                \State go to $P$ and exit
			\EndFunction
		\end{algorithmic}
		\label{alg:2r-nodetour-traj}
\end{algorithm}

Let $S_1$ and $S_2$ be points on the sides $AB$ and $AC$ at distance $r$ from $A$, shown in Figure \ref{fig:2r-nodetour-traj}. Points $S_1,S_2,A$ form an equilateral triangle 
at the top of $T$ with side $r$.
If the agents do not find the exit outside $\Delta S_1AS_2$ and enter this smaller 
triangle, they are always within communication range with each other, and the evacuation time for the three algorithms described in this section, is independent of the exit position and will always be $t_{S_1}+r$.

\begin{figure}[h]
  \centering
  \includegraphics[scale=0.3]{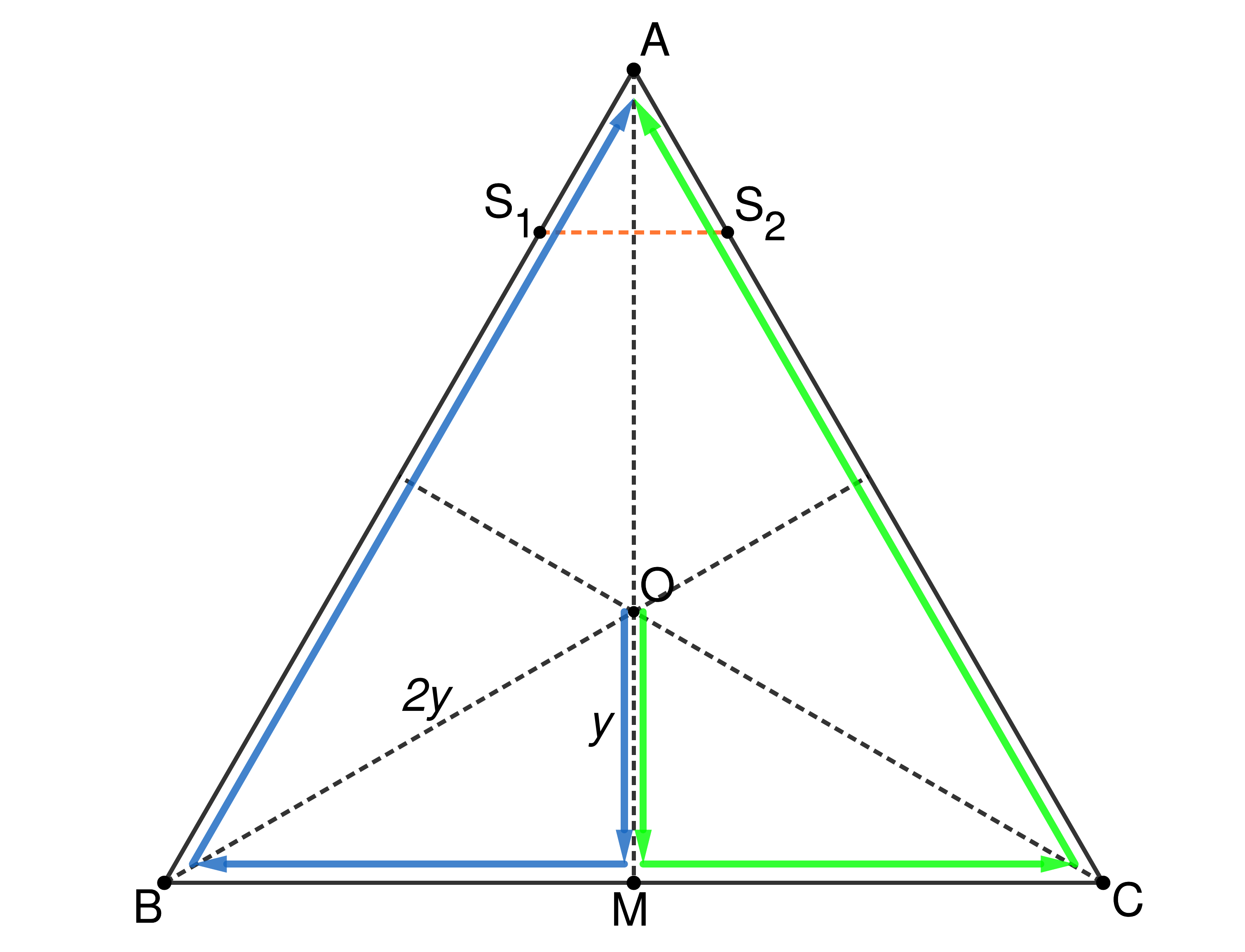}
 \caption{\NoD algorithm}
 \label{fig:2r-nodetour-traj}
 \end{figure}

\subsection{The \NoD Algorithm}

The trajectories of both agents are shown in Figure \ref{fig:2r-nodetour-traj} and defined in Trajectories \ref{traj:2r1}.
The trajectory of $R_1$ is shown in blue, 
the green trajectory is for $R_2$. Clearly, these trajectories do the fastest
possible exploration of the perimeter of $T$, and these trajectories are
known to give the  optimal time of $y+1.5$ for the wireless evacuation of $T$ by two agents starting in $O$. 
 \begin{trajectory} \label{traj:2r1}
$\ $ \NoD \\
\hspace*{2cm} $R_1\mbox{ follows the trajectory}:<O,M,B,A>$\\
\hspace*{2cm}  $R_2\mbox{ follows the trajectory}:<O,M,C,A>$ 
\end{trajectory}

The Algorithm \NoD uses the generic Algorithm \ref{alg:2r} with respect to Trajectories \ref{traj:2r1}. For the analysis of this algorithm, we assume the exit is found by $R_2$. Then we show that the maximum evacuation time is when the exit is located at point $C$.

In order to determine the critical point of some segments in $T$ we use the following lemma, which is a simple generalization of Theorem 1 in \cite{zurich}
for the case $r>0$.

\begin{lemma} \label{lem:betagamma}
  \cite{zurich} Suppose $R_1$ and $R_2$ with $r>0$ are looking for an exit on lines $L_1$ and $L_2$ respectively, as on Figure \ref{fig:betagamma}. Assume the exit is found by $R_2$ at point $N$, and  $Q$ be the point where $R_1$ is r-intercepted. Let $S$ be the line connecting $N$ and $Q$, $\beta$ be  the angle between $L_2$ and $S$, and $\gamma$ be the angle between $L_1$ and $S$ by $\gamma$. 

If $2\cos\beta+\cos\gamma < 1$ then
shifting the exit in the direction of the movement of $R_2$ yields a larger evacuation time, while  if
		$2\cos\beta+\cos\gamma>1$, then shifting the exit in the opposite direction of the movement of $R_2$ yields a larger evacuation time.
\end{lemma}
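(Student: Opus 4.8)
The plan is to regard the evacuation time as a function of a single scalar that locates the exit, and to show that the sign of its derivative is controlled exactly by $2\cos\beta+\cos\gamma-1$. First I would parametrize the exit $N$ by the arc length $n$ that $R_2$ travels along $L_2$ to reach it, so that $t_N=n$ up to an additive constant, and parametrize the position $Q$ of $R_1$ at the moment of interception by the arc length $q$ along $L_1$, so that $t_Q=q$. Writing $D(n,q)=|NQ|$ for the distance between the two corresponding points, the defining property of an $r$-interception is that the informing agent reaches a point within distance $r$ of $R_1$ exactly when $R_1$ is at $Q$; in the critical configuration it leaves $N$, heads straight toward $Q$, and travels $D-r$, which gives the constraint $q=n+D-r$ and thereby determines $q$ implicitly as a function of $n$. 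A one-line comparison of the two arrival times shows that $R_1$ is always the last to evacuate (its time is $q+D=n+2D-r$, whereas $R_2$'s is $n+2D-2r$), so the quantity to analyze is $E(n)=q(n)+D\big(n,q(n)\big)$.

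The second step is to identify the two partial derivatives of the distance function with the cosines appearing in the statement. Moving $N$ a unit distance along $L_2$ in $R_2$'s direction of travel changes $|NQ|$ at rate $-\cos\beta$, since $\beta$ is the angle between $R_2$'s direction and the segment $S$ oriented from $N$ to $Q$; symmetrically, moving $Q$ a unit distance along $L_1$ in $R_1$'s direction of travel changes $|NQ|$ at rate $-\cos\gamma$, where $\gamma$ is the angle between $R_1$'s direction and $S$ oriented from $Q$ to $N$. Hence $\partial D/\partial n=-\cos\beta$ and $\partial D/\partial q=-\cos\gamma$, with the precise orientations read off Figure \ref{fig:betagamma}. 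Pinning down these two sign conventions (together with the interception constraint above) is the only delicate point; everything else is routine.

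Finally I would differentiate. Implicit differentiation of $q-n-D+r=0$ gives $\frac{dq}{dn}=\frac{1+\partial D/\partial n}{1-\partial D/\partial q}=\frac{1-\cos\beta}{1+\cos\gamma}$, and then the chain rule applied to $E=q+D$ yields
\[
\frac{dE}{dn}=\frac{dq}{dn}\,(1+\partial D/\partial q)+\partial D/\partial n=\frac{(1-\cos\beta)(1-\cos\gamma)}{1+\cos\gamma}-\cos\beta=\frac{1-\cos\gamma-2\cos\beta}{1+\cos\gamma}.
\]
Since $1+\cos\gamma>0$, the sign of $dE/dn$ equals the sign of $1-(2\cos\beta+\cos\gamma)$. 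Therefore, if $2\cos\beta+\cos\gamma<1$ then $dE/dn>0$, so increasing $n$, i.e.\ shifting the exit in the direction of $R_2$'s movement, increases the evacuation time; and if $2\cos\beta+\cos\gamma>1$ then $dE/dn<0$, so shifting the exit the opposite way increases it. Note that $r$ enters only through the constant in the interception constraint and drops out upon differentiation, which is exactly why the criterion coincides with the $r=0$ statement of Theorem 1 in \cite{zurich}; this is the sense in which the result is a simple generalization. The main obstacle is not the calculus but justifying the interception constraint $q=n+D-r$ and the claim that $R_1$ is the binding agent, since once those are fixed the conclusion follows immediately.
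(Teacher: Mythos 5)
Your derivation is correct: the interception constraint $q=n+D-r$, the identifications $\partial D/\partial n=-\cos\beta$ and $\partial D/\partial q=-\cos\gamma$, the fact that $R_1$ is the binding agent (its arrival time $n+2D-r$ exceeds $R_2$'s by exactly $r$), and the resulting $dE/dn=\frac{1-2\cos\beta-\cos\gamma}{1+\cos\gamma}$ all check out, provided the orientations of $\beta$ and $\gamma$ are taken as you specify (angle at $N$ with the ray toward $Q$, angle at $Q$ with the ray toward $N$), which is consistent with how the lemma is later applied. The paper offers no proof of its own here --- it simply defers to \cite{zurich} --- so your write-up is in fact more complete than the source; it is the standard first-variation argument behind the cited $r=0$ theorem, and your remark that $r$ enters only as an additive constant and vanishes upon differentiation is exactly why the criterion is unchanged for $r>0$.
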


 \begin{figure}[ht]
	\centering
	 \includegraphics[scale=0.6]{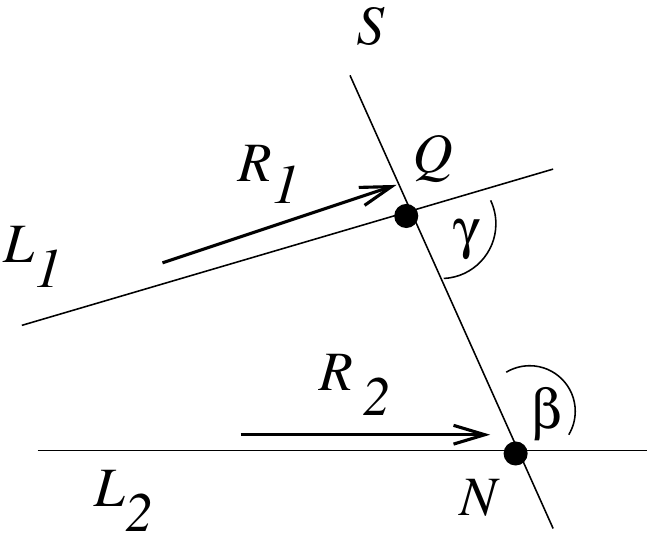}
	\caption{Illustration for Lemma \ref{lem:betagamma}.}
	\label{fig:betagamma}
\end{figure}

 \begin{proof} See \cite{zurich} \end{proof}

   \begin{figure}[!h]
\begin{subfigure}{.5\textwidth}
  \centering
  \includegraphics[scale=0.22]{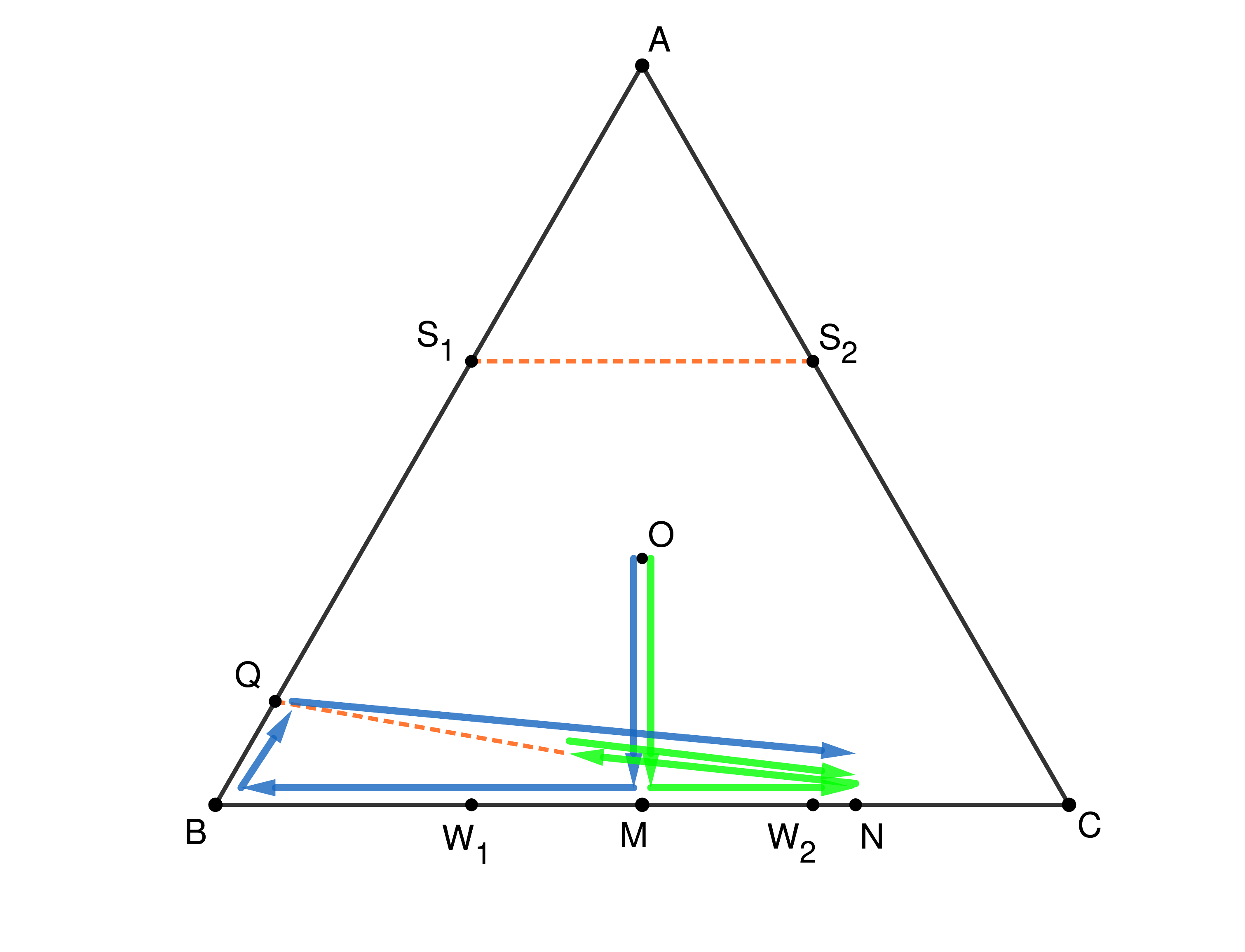}
  \caption{Exit is located on segment $MC$}
  \label{subfig:2r-nodetour-t2}
\end{subfigure}
\begin{subfigure}{.5\textwidth}
  \centering
  \includegraphics[scale=0.22]{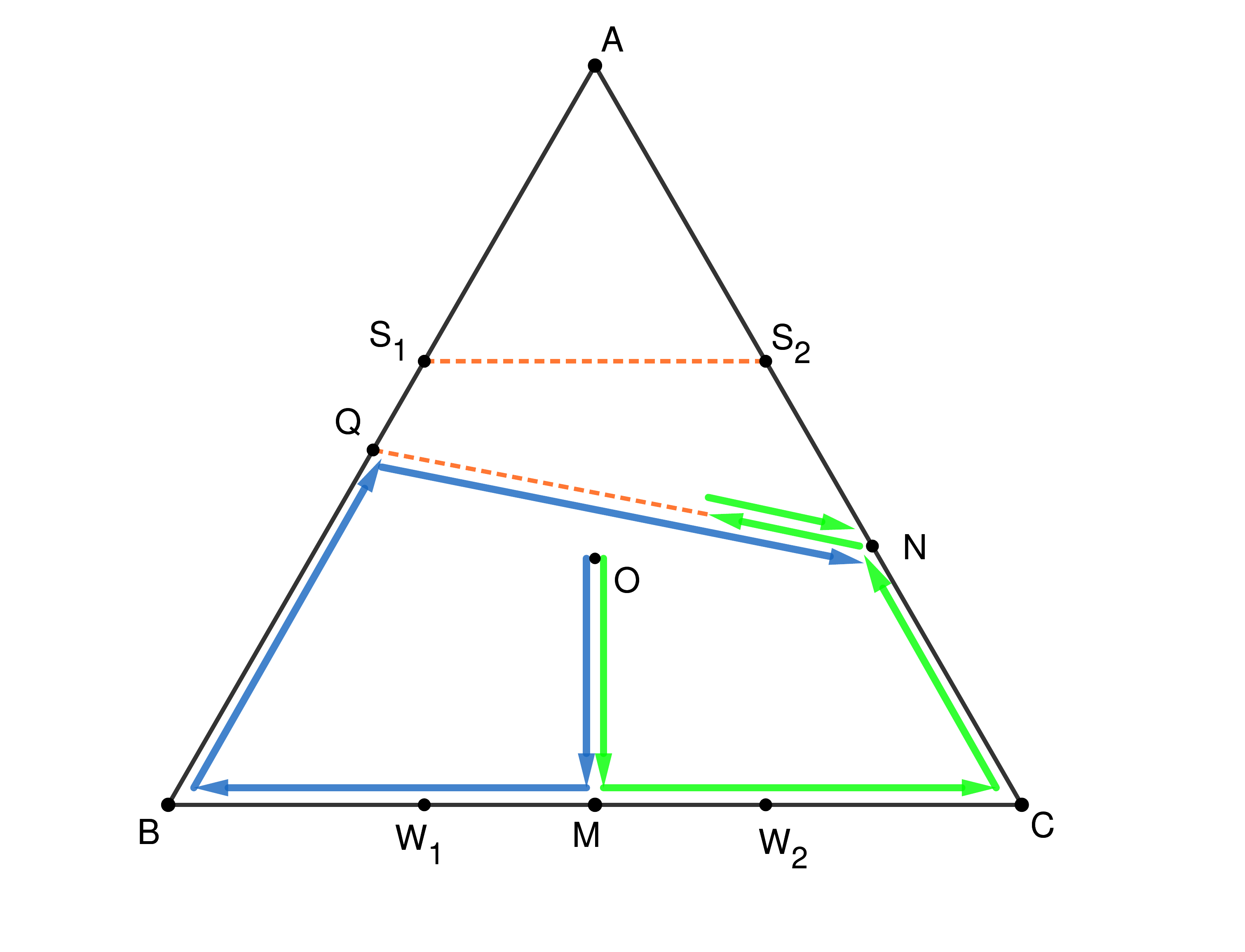}
  \caption{Exit is located on segment $CS_2$}
  \label{subfig:2r-nodetour-t3}
\end{subfigure}
\caption{Trajectories of agents based on the position of the exit point.}
\label{fig:4seg-traj}
\end{figure}

\begin{lemma} \label{lem:nodetour-segment1}
  Vertex $C$ is the critical point on segments $MC$.
\end{lemma}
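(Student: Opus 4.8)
The plan is to parametrize the exit position on $MC$ by its distance $d$ from $M$ (so $0\le d\le \frac{1}{2}$) and to show that the worst-case evacuation time $E(d)$ is maximized at $d=\frac{1}{2}$, i.e. at $C$. Place coordinates with $M$ at the origin, $C=(\frac{1}{2},0)$, $B=(-\frac{1}{2},0)$ and $A=(0,h)$, so that $AO$ is the $y$-axis and $O=(0,y)$. By the symmetry of the trajectories and the assumption that the exit lies in the right half, $R_2$ is the finder; when it reaches the exit $X=(d,0)$ at time $t_X=y+d$, agent $R_1$ occupies the mirror point $(-d,0)$ on $MB$, at distance $2d$ from $R_2$. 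The argument then splits according to whether the two agents are within range at that instant.

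First I would treat the case $2d\le r$. Here $R_1$ is already within communication range, so $R_2$ transmits the exit immediately and $R_1$ walks the distance $2d$ to $X$; the evacuation time is $E(d)=t_X+2d=y+3d$, which is strictly increasing and is therefore bounded on this sub-range by its value $y+\tfrac{3}{2}r$ at the boundary $d=r/2$.

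The substantive case is $2d>r$, where the agents are out of range at discovery. I would first observe that $R_2$ cannot catch $R_1$ while $R_1$ is still on the bottom edge: both move at unit speed and $R_1$ recedes toward $B$, so the separation stays equal to $2d>r$. Hence the $r$-interception can only occur after $R_1$ rounds the corner $B$ and begins to ascend $BA$; a short calculation confirms that the interception point $Q$ lies on the open segment $BA$ for every admissible $d$ and every $r\in(0,1)$. This places the situation exactly in the setting of Lemma~\ref{lem:betagamma}, with $L_2$ the line $MC$, $L_1$ the line $BA$, $N=X$ the exit, and $S$ the chord $NQ$. I would then compute the angles $\beta$ (between $L_2$ and $S$) and $\gamma$ (between $L_1$ and $S$) from the interception geometry and verify that $2\cos\beta+\cos\gamma<1$ throughout this sub-range; by Lemma~\ref{lem:betagamma} this means that displacing the exit in the direction of $R_2$'s motion, i.e. toward $C$, always increases the evacuation time, so $E$ is increasing here as well and is maximized at $C$.

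Finally I would compare the two sub-ranges: one checks that the value of $E$ at $C$ (an interception configuration) strictly exceeds the in-range maximum $y+\tfrac{3}{2}r$ for every $r<1$, so the global maximum on $MC$ occurs at $C$, identifying $C$ as the critical point. The main obstacle is the verification of $2\cos\beta+\cos\gamma<1$ in the interception case: it requires solving the tangency condition $|X-\rho(T)|=T+r$ (where $\rho$ denotes $R_1$'s trajectory) for the interception time $T(d)$, locating $Q$, and then checking the angle inequality uniformly over the relevant ranges of $d$ and $r$. Equivalently, and perhaps more transparently, one can use the interception identity $|QX|=T+r$ to write $E(d)=y+1+r-d+2u(d)$ with $u(d)=\frac{(2d-r)(1+r)}{\frac{3}{2}+2r-d}$ and check directly that $E'(d)>0$; either route reduces the claim to a single monotonicity inequality.
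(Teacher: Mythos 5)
Your proposal is correct and follows essentially the same route as the paper: dispose of the in-range case, show the interception must occur on $BA$, and apply Lemma~\ref{lem:betagamma} to push the worst case toward $C$. The only shortcut you miss is that since the chord $NQ$ points back toward edge $BA$ while $R_2$ moves toward $C$, one has $\beta>\pi/2$, so $2\cos\beta+\cos\gamma<\cos\gamma\le 1$ immediately and neither the explicit angle verification nor the monotonicity check of your closed-form $E(d)$ (which does check out, with $E'(d)=-1+6(1+r)^2/(\tfrac{3}{2}+2r-d)^2>0$) is actually needed.
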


\begin{proof}
  Suppose the exit is located at some point $N\neq C$ on segment $MC$,
   see Figure \ref{subfig:2r-nodetour-t2}.
  We only need
  to consider the case when the agents are at distance greater that $r$. 
  Since at that point the agents are moving in opposite directions on the
  segment $BC$, it is obvious that $R_1$ will be $r$-intercepted while travelling on edge $BA$. Hence according to Lemma \ref{lem:betagamma}, since $\beta> {\pi}/{2}$,
 point $N$ cannot yields the maximum evacuation time since there exists another point to the right of $N$ with a larger evacuation time. We conclude that C is the critical point in this segment. 
 \end{proof}

\begin{lemma} \label{lem:nodetour-segment3}
	On segment $CS_2$ see Figure \ref{subfig:2r-nodetour-t3}, vertex $C$ is the critical point.
\end{lemma}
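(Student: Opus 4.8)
The plan is to apply Lemma~\ref{lem:betagamma} in exactly the same spirit as the proof of Lemma~\ref{lem:nodetour-segment1}, but now in the opposite angular regime. By the symmetry of the two trajectories about the line $AO$, at the instant $R_2$ discovers the exit at a point $N$ on $CS_2$ the agent $R_1$ occupies the mirror point $N'$ on side $AB$, and both agents are moving \emph{upward} toward $A$ along their respective sides (see Figure~\ref{subfig:2r-nodetour-t3}). I would take $L_2=AC$ (the line carrying $R_2$) and $L_1=AB$ (the line carrying $R_1$), so that Lemma~\ref{lem:betagamma} controls how the evacuation time changes as the exit slides along $CS_2$. Since $R_2$ travels from $C$ toward $S_2$, establishing that $C$ is the critical point reduces to showing that shifting the exit \emph{against} $R_2$'s direction of motion always increases the evacuation time; by Lemma~\ref{lem:betagamma} this is exactly the condition $2\cos\beta+\cos\gamma>1$ at every point of $CS_2$.

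First I would pin down the $r$-interception point $Q$. Placing coordinates with $A$ at the apex and parametrising $N$ by its distance along $AC$ (with $N'$ its reflection on $AB$), the agent $R_1$ continues up $AB$ at unit speed while $R_2$ leaves $N$ and heads for the earliest point of $R_1$'s moving $r$-disk that it can reach. This identifies the interception time as the first root $\tau$ of $|N-R_1(\tau)|=\tau+r$, a single quadratic in $\tau$; one then checks that the resulting point $Q=R_1(\tau)$ indeed lies on side $AB$ (so that Lemma~\ref{lem:betagamma} applies with $L_1=AB$), in particular verifying that $Q$ does not leave $AB$ even as $N\to C$. The chord $S=NQ$ then fixes $\beta$ (its angle with $AC$ at $N$) and $\gamma$ (its angle with $AB$ at $Q$).

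The key computation is to evaluate $2\cos\beta+\cos\gamma$ along $CS_2$ and verify it stays above $1$. A useful sanity check sits at the upper endpoint: when $N=S_2$ the interception is immediate ($\tau=0$, $Q=S_1$, $S$ horizontal), giving $\beta=\gamma=\pi/3$ and hence $2\cos\beta+\cos\gamma=3/2>1$. This is precisely the reverse of the situation in Lemma~\ref{lem:nodetour-segment1}, where $\beta>\pi/2$ forced the sum below $1$; here both $R_2$'s motion and the chord $NQ$ point up and to the left, so $\beta$ remains acute and I expect $\cos\beta$ and $\cos\gamma$ to stay large enough for the inequality to persist all the way down to $C$. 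Applying Lemma~\ref{lem:betagamma} then gives that the evacuation time strictly increases as the exit is pushed toward $C$, so $C$ is the critical point on $CS_2$. I expect the main obstacle to be the \emph{uniformity} of $2\cos\beta+\cos\gamma>1$ over the entire segment: the two angles depend on both $r$ and the position of $N$ through the interception root $\tau$, so I would either prove monotonicity of this expression in the exit position or bound it below by its value at the extreme endpoint, while keeping track of the fact that $Q$ must remain on side $AB$ as $N$ approaches $C$.
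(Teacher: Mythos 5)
Your overall strategy is the one the paper uses: apply Lemma~\ref{lem:betagamma} with $L_2=AC$, $L_1=AB$, and show $2\cos\beta+\cos\gamma>1$ everywhere on $CS_2$, so that pushing the exit back toward $C$ always increases the evacuation time. However, the step you yourself flag as ``the main obstacle'' --- the uniform verification of $2\cos\beta+\cos\gamma>1$ over the whole segment --- is exactly the content of the lemma, and your proposal does not actually establish it: you verify it only at the endpoint $N=S_2$ and then ``expect'' it to persist, with a fallback plan (track the interception root $\tau$ as a function of the exit position and prove monotonicity of the angle expression through it) that is far heavier than needed and not carried out.

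The gap closes immediately once you notice a constraint you never exploit: because $N$ lies on side $AC$ and the interception point $Q$ lies on side $AB$, the triangle $ANQ$ has angle $\pi/3$ at $A$, so $\beta+\gamma=\tfrac{2\pi}{3}$ identically, independently of $r$, of the interception time, and of where exactly $Q$ sits on $AB$. Substituting $\gamma=\tfrac{2\pi}{3}-\beta$ gives
\begin{equation*}
2\cos\beta+\cos\gamma \;=\; 2\cos\beta-\cos\!\left(\tfrac{\pi}{3}+\beta\right)\;=\;\tfrac{3}{2}\cos\beta+\tfrac{\sqrt{3}}{2}\sin\beta\;=\;\sqrt{3}\,\cos\!\left(\beta-\tfrac{\pi}{6}\right)\;\geq\;\tfrac{3}{2}
\end{equation*}
for all $\beta\in[0,\pi/3]$, which is the range of $\beta$ on $CS_2$. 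This one-line identity replaces your entire second and third paragraphs (the quadratic for $\tau$, the monotonicity analysis, the endpoint bound); the only geometric fact you still need is that $Q$ does remain on side $AB$ for every exit position on $CS_2$, which the paper asserts and which your proposal rightly identifies as something to check. So: same approach as the paper, but as written the proof is incomplete at its central inequality, and the missing ingredient is the angle-sum relation $\beta+\gamma=2\pi/3$.
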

\begin{proof}
	Let the exit be at some point $N$ on segment $CS_2$ and the $r$-interception point be $Q$ on side $BA$, see Figure \ref{subfig:2r-nodetour-t3}. 
		We know that $\beta +\gamma =\frac{2\pi}{3}$,
		and for all points on segment $CS_2$, the angle $\beta$ is between $0$ and $\nicefrac{\pi}{3}$.
		Then we get $2\cos(\beta)+\cos(\gamma)=2\cos(\beta)-\cos(\nicefrac{\pi}{3}+\beta)$ which is strictly greater than one.
		Hence by Lemma \ref{lem:betagamma}, placing the exit at a point closer to $C$ will result in higher evacuation time.
		We can conclude that vertex $C$ is the critical point in this segment.
\end{proof}


\begin{theorem} \label{th:nodetour-c}
  $E_{\mbox{\NoD}}=y+0.5+r+\frac{2(1-r^2)}{2r+1}$.
\end{theorem}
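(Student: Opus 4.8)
The plan is to combine the two preceding lemmas to reduce the analysis to a single worst case and then compute the evacuation time in that case by a direct interception calculation. By Lemma \ref{lem:nodetour-segment1} and Lemma \ref{lem:nodetour-segment3}, the critical point over the entire right half of the perimeter explored by $R_2$ is the vertex $C$, so it suffices to evaluate the evacuation time when the exit sits exactly at $C$, and then to check that this value dominates the exit-independent time $t_{S_1}+r$ incurred when the exit lies inside $\Delta S_1AS_2$. I would fix coordinates $B=(0,0)$, $C=(1,0)$, $A=(1/2,\sqrt3/2)$, so that $M=(1/2,0)$ and $O=(1/2,\sqrt3/6)$, giving $|OM|=y$ and $|MB|=|MC|=1/2$.

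First I would locate both agents at the instant $t_C$ when $R_2$ discovers the exit. Both traverse $O\to M$ in time $y$ and then move in opposite directions along $BC$; since $|MC|=1/2$, agent $R_2$ reaches $C$ at time $t_C=y+1/2$, and at that same instant $R_1$ has travelled distance $1/2$ from $M$ and therefore sits exactly at $B$, poised to climb edge $BA$. Because $R_1$ is unaware of the exit, it keeps moving up $BA$ at unit speed while $R_2$ performs the $r$-interception prescribed by the \textsc{Action} routine of Algorithm \ref{alg:2r}.

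Next I would compute the interception. Parametrising $R_1$'s position for $\tau\ge 0$ as $Q(\tau)=(\tau/2,\tau\sqrt3/2)$ (distance $\tau$ up $BA$ from $B$), the interception occurs at the first time $\tau$ for which $R_2$, having travelled distance $\tau$ from $C$, can be within $r$ of the \emph{moving} agent $R_1$; equivalently $|C-Q(\tau)|=\tau+r$. Since $|C-Q(\tau)|=\sqrt{1-\tau+\tau^2}$, squaring and solving the resulting linear equation yields $\tau=\frac{1-r^2}{2r+1}$, and one checks $\tau\le 1$ for $0\le r\le 1$, so the interception indeed takes place on segment $BA$, consistent with the hypothesis of Lemma \ref{lem:nodetour-segment3}.

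Finally I would assemble the two arrival times at the exit $C$. The interception point $U$ lies at distance $\tau$ from $C$, so $R_2$ returns to $C$ at time $t_C+2\tau$, whereas $R_1$, notified at $Q(\tau)$ at distance $|C-Q(\tau)|=\tau+r$ from $C$, reaches $C$ at time $t_C+2\tau+r$; since straight-line motion inside the obstacle-free triangle is optimal, these are the true arrival times, and the worst agent is $R_1$. This gives $E_{\mbox{\NoD}}=y+0.5+r+2\tau=y+0.5+r+\frac{2(1-r^2)}{2r+1}$, and a quick comparison (using $1-r^2=(1-r)(1+r)$) confirms this exceeds $t_{S_1}+r=y+3/2$ for $r<1$, so the exit at $C$ is genuinely the worst case. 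The only real subtlety is modelling the interception correctly — that $R_1$ continues to move during the chase, so the governing constraint is the moving-target equation $|C-Q(\tau)|=\tau+r$ rather than a static distance — together with confirming that $R_1$, not $R_2$, is the last to reach the exit; the remaining algebra is routine.
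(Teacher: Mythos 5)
Your proof is correct and takes essentially the same approach as the paper: reduce to the exit at $C$ via the two critical-point lemmas, solve the interception equation $|BQ|+r=|QC|$ for the chase along $BA$ (you derive $|QC|=\sqrt{1-\tau+\tau^2}$ from coordinates where the paper invokes the cosine rule, but it is the identical equation), and check that the resulting time dominates the constant time $y+3/2$ for an exit in $\Delta S_1AS_2$. No substantive differences.
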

\begin{proof} We established that $C$ is the critical point for $MC$ and $CA$.
	When the exit is located at $C$, the evacuation time will be $t=y+0.5+|BQ|+|QC|$ where $Q$ is the point that $R_1$ is $r$-intercepted. Since both agents travel equal distances at the point of interception, we get $|BQ|=|QC|-r$. On the other hand by using the Cosine Rule we have $|QC|=\sqrt{BQ^2+1-BQ}$. 
	By solving for $|BQ|$
            and substituting in the expression for the evacuation time $t$,
        we obtain 
	 $t=y+0.5+r+\frac{2(1-r^2)}{2r+1}$. 
	Observe that  for $r\in [0,1]$, the function $r+\frac{2(1-r^2)}{2r+1}$ is  decreasing,  with maximum and minimum of $2$ and $1$ respectively.
	Therefore $y+0.5+r+\frac{2(1-r^2)}{2r+1} \geq y+0.5+1=y+1.5$.
        so based on Lemmas \ref{lem:nodetour-segment1} and \ref{lem:nodetour-segment3}, this evacuation time is also larger than maximum evacuation time of segments $MW_2$ and $S_2A$.
As already mentioned above, if the exit is located on segment $S_2A$
then the evacuation time is independent of exit location and is equal to $y+1.5$.
We conclude that placing the exit at point $C$ results in maximum evacuation time of algorithm  ${\mathcal{A}_1}$, which gives us the following theorem.
\end{proof}

\subsection{Trajectories with  Detours} \label{sec:2r-detour}
In Theorem \ref{th:nodetour-c} we showed that placing the exit at point $C$ causes the maximum evacuation time when using the \NoD algorithm.  In this section we propose different trajectories that
improve the evacuation time by decreasing the evacuation time at $C$. These
trajectories are a modified version of \textit{Equal-Travel with Detour Algorithms} in \cite{ChuangpishitMNO17} for the face-to-face communication model,
i.e., for $r=0$. The inclusion of a detour in the trajectories of each agent
consist of the agent
 stopping exploration at some point of the perimeter and moving inside the triangle to improve  the evacuation time when the exit is located in some segments around $C$ or $B$. When the agents realize that the exit was not found in these segments, they return to the same point on the boundary where they left off and resume the exploration of the perimeter.

\subsubsection{The \OneD Algorithm:} \label{ssec:detour}

The trajectories are symmetric and thus we define the detour for $R_1$ only.
We fix point $Q_1$ on the side $AB$, see Figure \ref{subfig:2r-onedetour}. The exact location of this point will be specified later. 
  \begin{figure}
  \centering
  \includegraphics[scale=0.3]{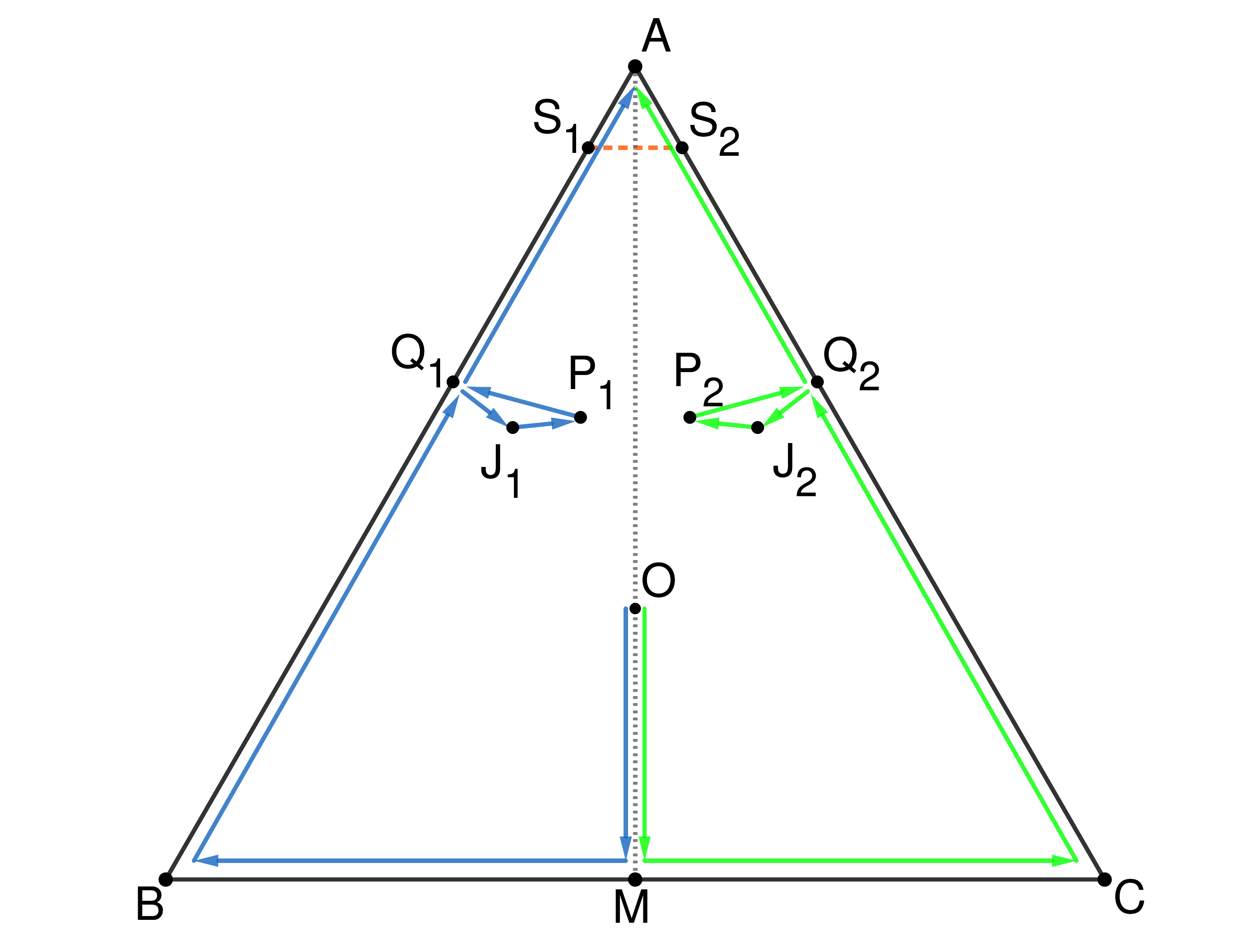}
  \caption{\OneD algorithm}
 \label{subfig:2r-onedetour}
  \end{figure}

Point $J_1$ is on segment $Q_1C$ such that it satisfies the equation
$|BQ_1|+|Q_1J_1|=|CJ_1|-r$.
Point $P_1$ is located on segment $J_1Q_2$ such that $P_1$ satisfies the equation 
$|Q_1J_1|+|J_1P_1|=|Q_2P_1|-r$.
Points $Q_2$, $J_2$ and $P_2$ are located symmetrically with those of points $Q_1$, $J_1$ and $P_1$ respectively, with respect to line $AM$.
\\
The trajectory of each agent is defined in Trajectories \ref{traj:2r1d}, see also Figure \ref{subfig:2r-onedetour}.

We show below that if $R_1$ reaches point $P_1$ and it is not notified about the exit by the other agent, then  it realizes that the exit has not been found yet. Thus it returns to point $Q_1$ where it started the detour
and resumes the exploration of  the perimeter. Algorithm \ref{alg:2r} with respect to Trajectory \ref{traj:2r1d} is referred to as the \OneD algorithm.
\begin{trajectory} \label{traj:2r1d}
$\ $\OneD \\
	\hspace*{3cm} $R_1:<O,B,Q_1,J_1,P_1,Q_1,A>$\\
	\hspace*{3cm} $R_2:<O,C,Q_2,J_2,P_2,Q_2,A>$
\end{trajectory}

\begin{lemma} \label{lem:onedetour-Q1J1}
  If the exit is located at some point $N$ on segment $MC$, then $R_1$ will be $r$-intercepted at or prior to reaching $J_1$.
\end{lemma}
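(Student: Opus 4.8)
The plan is to follow both agents from the moment the exit is found and to collapse the statement into a single triangle inequality, letting the defining relation $|BQ_1|+|Q_1J_1|=|CJ_1|-r$ absorb all the bookkeeping. First I would parametrize the exit by $s=|MN|$ with $0\le s\le \tfrac12$. Since the two trajectories are mirror images across the axis $AO$ and both agents move at unit speed, at the instant $t_N$ that $R_2$ reaches $N$ the agent $R_1$ occupies the mirror point $N'$ on $MB$ with $|MN'|=s$, having travelled exactly the same length; in particular $|N'B|=|NC|=\tfrac12-s$.

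Next I would compute the interception. After $t_N$ the agent $R_1$ continues along $N'\to B\to Q_1\to J_1$, so its remaining distance to $J_1$ equals $|N'B|+|BQ_1|+|Q_1J_1|$. I would let $R_2$ aim straight at $J_1$; as both agents cover equal lengths after $t_N$, it suffices to check that by the time $R_1$ arrives at $J_1$ the agent $R_2$ has come within $r$ of it, i.e.\ that $|N'B|+|BQ_1|+|Q_1J_1|\ge |NJ_1|-r$. Using $|N'B|=|NC|$ and the defining relation, the left-hand side becomes $|NC|+|CJ_1|-r$, so the inequality reduces to $|NC|+|CJ_1|\ge |NJ_1|$, which is exactly the triangle inequality for the points $N,C,J_1$, with equality precisely when $N=C$. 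This already yields the lemma: by the time $R_1$ reaches $J_1$ the pair is within range, so the earliest point at which $R_1$ can be $r$-intercepted is at or before $J_1$, and strictly before it whenever $N\ne C$. As a sanity check, the boundary case $N=M$ gives $N'=N=M$, so the agents coincide and interception is immediate, well before $J_1$.

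The main obstacle is not the final inequality but the two set-up claims it rests on. The first is the exact mirror position of $R_1$ at time $t_N$: this uses the symmetric equal-travel construction together with the fact that $R_2$ sweeps $MC$ in synchrony with $R_1$ sweeping $MB$, and one must verify that $R_1$ is indeed still on the arc $N'\to B\to Q_1\to J_1$ (and not already past $J_1$) at the relevant instant. The second is that the algorithm's prescription --- move to the closest point where $R_2$ can $r$-intercept $R_1$ --- produces an interception no later than the explicit straight-at-$J_1$ strategy used above, so that this construction genuinely upper-bounds the true interception point. A monotonicity argument via Lemma \ref{lem:betagamma}, shifting $N$ from $M$ toward $C$, would give an alternative to the direct triangle-inequality estimate, but the computation sketched above seems the shortest route.
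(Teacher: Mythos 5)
Your proof is correct and takes essentially the same route as the paper's: the paper reduces to the extremal case $N=C$, where the defining relation $|BQ_1|+|Q_1J_1|=|CJ_1|-r$ forces the interception to occur exactly at $J_1$, and then remarks informally that for $N$ before $C$ the agent $R_2$ ``moves toward $J_1$ earlier.'' Your triangle inequality $|NJ_1|\le |NC|+|CJ_1|$ is just the rigorous form of that head-start remark, so the two arguments coincide in substance.
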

\begin{figure}
  \vspace*{-8mm}
\begin{subfigure}{.5\textwidth}
  \centering
  \includegraphics[scale=0.22]{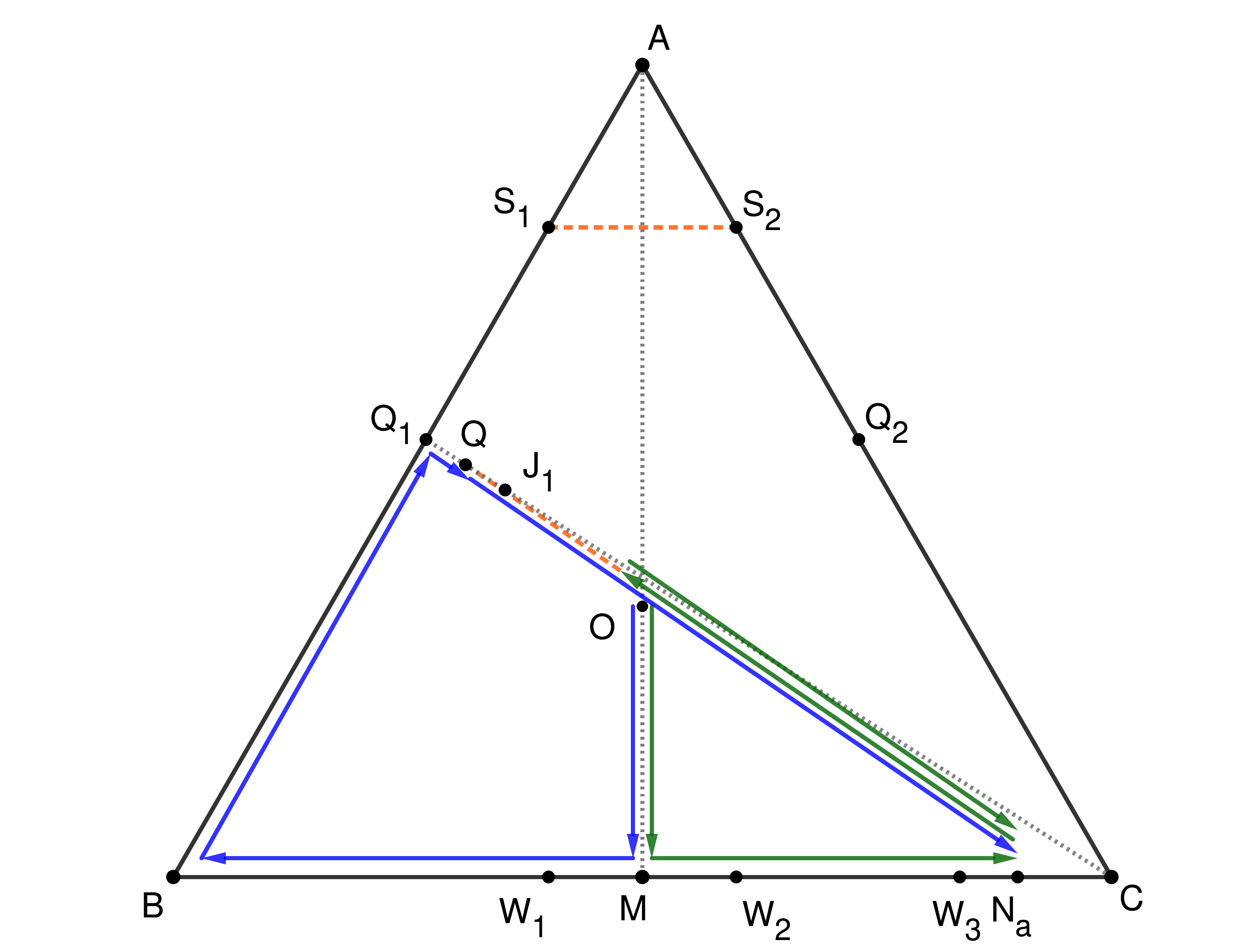}
  \caption{Exit is located on segment $MC$}
  \label{subfig:2r-onedetour-t3}
\end{subfigure}\\
\begin{subfigure}{.5\textwidth}
  \centering
  \includegraphics[scale=0.22]{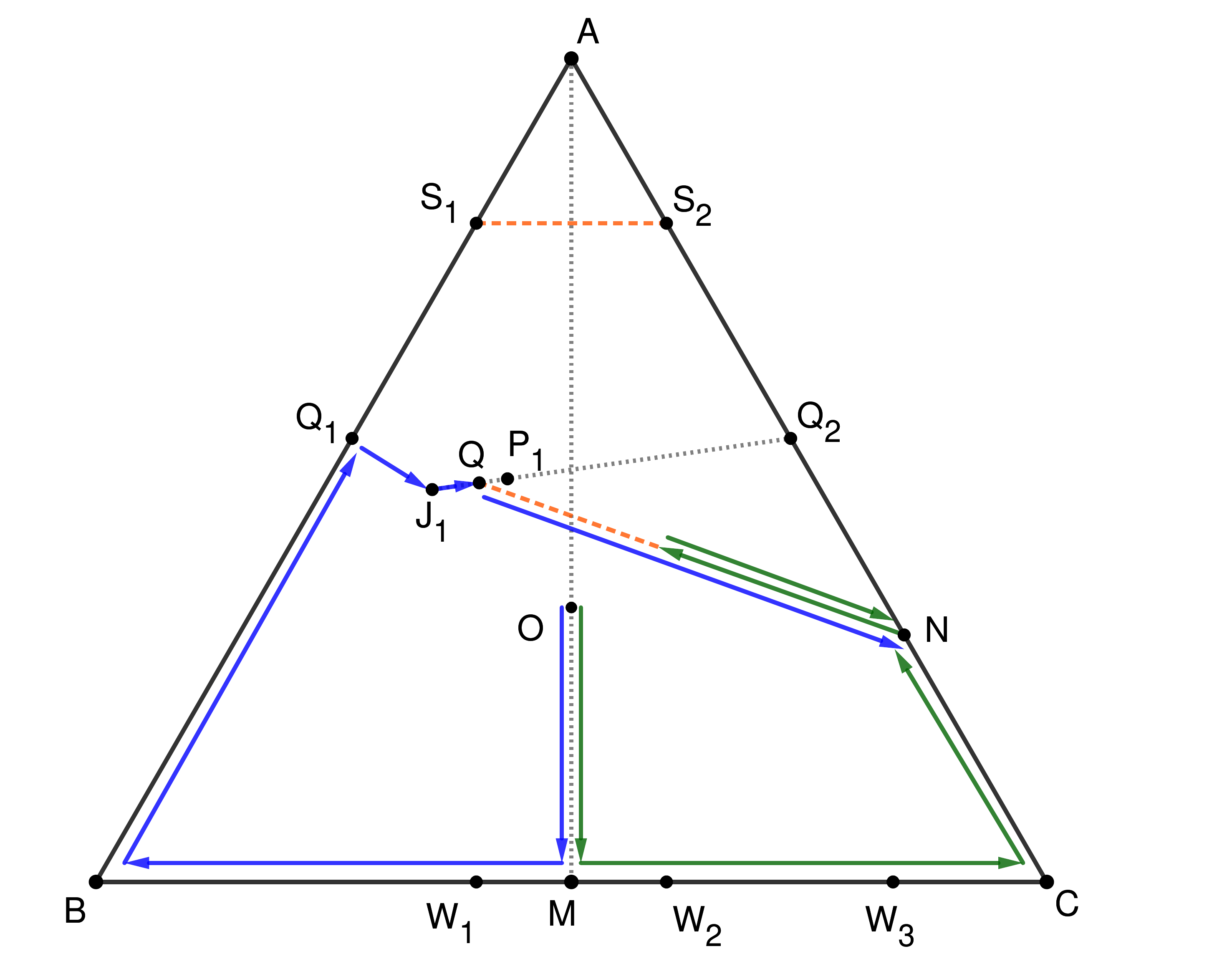}
  \caption{Exit is located on segment $CQ_2$}
  \label{subfig:2r-onedetour-t4}
\end{subfigure}
\begin{subfigure}{.5\textwidth}
  \centering
  \includegraphics[scale=0.22]{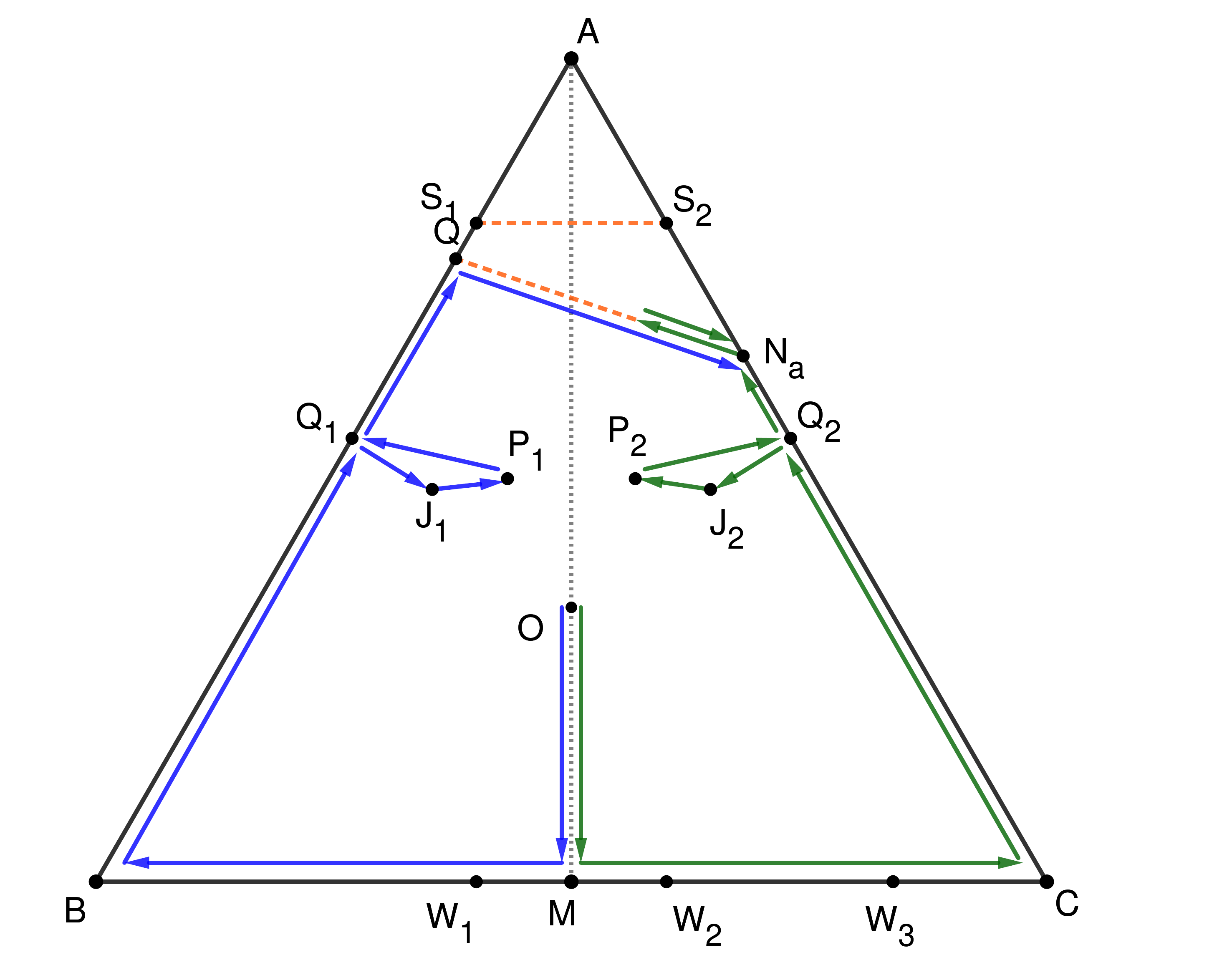}
  \caption{Exit is located inside segment $Q_2S_2$}
  \label{subfig:2r-onedetour-t5}
\end{subfigure}
        \caption{Trajectories in One -Detour algorithm, and trajectories of agents based on the position of the exit point in One Detour algorithm
        }
        \label{fig:2r-onedetour}
        \vspace*{-4mm}
\end{figure}
\begin{proof} (See Figure \ref{subfig:2r-onedetour-t3})
  If the exit is located at $C$ then, by the choice of $J_1$, agent  $R_1$ will be $r$-intercepted at point $J_1$. Hence if $N$ is located before $C$  then agent $R_2$ moves toward $J_1$ earlier and it  can $r$-intercept
 $R_1$  while it is on segment $Q_1J_1$.
\end{proof}

\begin{lemma} \label{lem:j1p1}
	Suppose the exit is located at some point $N$ on segment $CQ_2$, then $R_1$ will be $r$-intercepted while moving on segment $J_1P_1$.
\end{lemma}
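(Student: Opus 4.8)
The plan is to mirror the proof of Lemma~\ref{lem:onedetour-Q1J1}: I would pin down the two boundary cases $N=C$ and $N=Q_2$ using the defining equations of $J_1$ and $P_1$, and then cover the interior of segment $CQ_2$ by a monotonicity argument. The governing principle is \emph{equal travel}: both agents leave $O$ simultaneously and always move at unit speed, so at every instant they have covered equal arc-length along their respective trajectories. Since $N$ lies on $CQ_2$, which $R_2$ traverses while climbing side $AC$ before starting its own detour, $R_2$ discovers the exit after travelling exactly $d_2(N) = |OC| + |CN|$. Writing $X(s)$ for the point of $R_1$'s trajectory at arc-length $s$ from $O$, the $r$-interception occurs at the smallest $s$ for which $R_2$, heading straight out of $N$, can reach within distance $r$ of $X(s)$; by equal travel this is the first zero of
$$ g_N(s) \;=\; s - |N\,X(s)| + r - d_2(N). $$

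First I would settle the two endpoints of $CQ_2$. For $N=C$ we have $d_2(C)=|OC|$; substituting $X=J_1$ and invoking the defining relation $|BQ_1|+|Q_1J_1|=|CJ_1|-r$ together with $|OB|=|OC|$ gives $g_C=0$ exactly at $J_1$, which is consistent with Lemma~\ref{lem:onedetour-Q1J1}. For $N=Q_2$ we have $d_2(Q_2)=|OC|+|CQ_2|$; substituting $X=P_1$ and using the symmetry $|BQ_1|=|CQ_2|$ reduces $g_{Q_2}=0$ at $P_1$ to precisely the defining equation $|Q_1J_1|+|J_1P_1|=|Q_2P_1|-r$ of $P_1$. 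Hence the exit at $C$ yields interception exactly at $J_1$, and the exit at $Q_2$ yields interception exactly at $P_1$.

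The core of the argument is monotonicity: as $N$ slides from $C$ to $Q_2$, the interception point should advance monotonically from $J_1$ to $P_1$. I would establish this by comparing two exits $N,N'$ on $CQ_2$ with $N'$ farther from $C$. As $N,N'$ are collinear on side $AC$, one has $d_2(N')-d_2(N)=|NN'|$, so the triangle inequality $|NX|\le |NN'|+|N'X|$ yields $|NX|+d_2(N)\le |N'X|+d_2(N')$ for every point $X$, whence $g_N(s)\ge g_{N'}(s)$ for all $s$. Since each $g_N$ is non-decreasing in $s$ (the $+s$ term dominates the at-most-unit-rate variation of $|N\,X(s)|$), the first zero of $g_{N'}$ cannot precede that of $g_N$; that is, the interception arc-length is non-decreasing as $N$ moves toward $Q_2$. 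Together with the endpoint computations this sandwiches the interception arc-length between those of $J_1$ and $P_1$, and as $R_1$'s trajectory between these is the single straight segment $J_1P_1$, the interception must take place while $R_1$ is on $J_1P_1$.

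The step I expect to be the main obstacle is confirming that $J_1$ (resp.\ $P_1$) is genuinely the \emph{first} zero of $g_C$ (resp.\ $g_{Q_2}$) and not a later one; this amounts to checking that $R_1$'s distance to $C$ is decreasing along $B\to Q_1\to J_1$, which holds provided $Q_1$ lies in the half of $AB$ nearer $B$, a fact I would read off from the optimized position of $Q_1$. The monotonicity comparison itself is robust, relying only on the triangle inequality, so once the two endpoint normalizations are in place the conclusion follows with no further case analysis.
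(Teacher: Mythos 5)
Your proof is correct and rests on the same key step as the paper's: the triangle inequality $|NP_1|\le|NQ_2|+|Q_2P_1|$ combined with the defining (equal-travel) equations of $J_1$ and $P_1$, which is exactly how the paper shows the interception occurs no later than $P_1$. Your version is somewhat more systematic --- packaging the comparison as monotonicity of the first zero of $g_N$ and explicitly verifying the lower endpoint at $J_1$, which the paper leaves implicit --- but it is the same argument, not a different route.
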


\begin{proof} (See Figure \ref{subfig:2r-onedetour-t4})
	We know that if the exit is located at $Q_2$, then $R_1$ will be $r$-intercepted when it is at point $P_1$. In order to show that if the exit is before $Q_2$, agent $R_1$ can be intercepted before reaching $P_1$ it is enough to prove 
	$|CN|+|NP_1|-r \leq |CQ_2|+|Q_2P_1|-r$.
	For the purpose of contradiction suppose not, meaning 
	$|CN|+|NP_1|-r > |CQ_2|+|Q_2P_1|-r=|CN|+|NQ_2|+|Q_2P_1|-r$ and we get 
	$|NP_1|>|NQ_2|+|Q_2P_1|$ which according to the triangle inequality is impossible. Hence a contradiction.
\end{proof}

        We now split the trajectory of $R_2$ into segments $MC$, $CQ_2$, $Q_2S_2$, $S_2A$, 
        and determine the critical point for each segment and the evacuation time of the critical point for each segment.
\begin{lemma} \label{lem:onedetour-segment2}
On segment $MC$ point $C$ is the critical point, and the evacuation time for this segment is at most $y+0.5+|BQ_1|+|Q_1C|$.
\end{lemma}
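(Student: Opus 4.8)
The plan is to first pin down the evacuation time when the exit sits at the extreme point $C$, and then show via the interception lemma that no earlier exit on $MC$ does worse.

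First I would compute the time for the exit at $C$. Since $R_2$ explores $O\to M\to C$, it discovers the exit at time $t_C=y+0.5$, at which moment $R_1$ has travelled the same distance and sits at $B$, so the two agents are at distance $|BC|=1>r$ and an $r$-interception is required. By the defining equation $|BQ_1|+|Q_1J_1|=|CJ_1|-r$ of $J_1$, the two agents cover equal distances, and $R_1$ is $r$-intercepted exactly as it reaches $J_1$, at time $y+0.5+|BQ_1|+|Q_1J_1|$. Once informed, $R_1$ continues from $J_1$ to the exit at $C$; because $J_1$ lies on the segment $Q_1C$ we have $|Q_1J_1|+|J_1C|=|Q_1C|$, so $R_1$ reaches $C$ at time $y+0.5+|BQ_1|+|Q_1C|$. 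A short check shows $R_1$ is the last agent to arrive ($R_2$ is nearer to $C$ at the interception point), so the evacuation time for the exit at $C$ equals $y+0.5+|BQ_1|+|Q_1C|$.

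Next I would argue that $C$ is the critical point of $MC$. Let the exit be at an arbitrary $N\neq C$ on $MC$, found by $R_2$; we may assume the agents are at distance more than $r$, as otherwise they are already connected and the time is smaller. By Lemma \ref{lem:onedetour-Q1J1} the interception point $Q$ lies on the detour segment $Q_1J_1$, so $R_2$ searches along the line $L_2=BC$ while $R_1$ is intercepted moving along $L_1=Q_1C$, and note that both agents are heading toward their common endpoint $C$. I would then invoke Lemma \ref{lem:betagamma} with $S=NQ$: letting $\beta$ be the angle between $L_2$ and $S$ and $\gamma$ the angle between $L_1$ and $S$, the goal is to verify $2\cos\beta+\cos\gamma<1$, which by the lemma forces sliding the exit in $R_2$'s direction of travel (toward $C$) to strictly increase the evacuation time. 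Since $C$ is the furthest point of $MC$ in that direction, the maximum on $MC$ is attained at $C$, giving the claimed bound.

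The heart of the argument is the sign verification $2\cos\beta+\cos\gamma<1$, and this is where I expect the real work. In the No-Detour analysis (Lemma \ref{lem:nodetour-segment1}) the inequality was immediate because $R_1$ climbed $BA$ away from the base, forcing $\beta>\pi/2$ and hence $\cos\beta<0$; here $R_1$ instead moves along the detour line $Q_1C$ toward $C$, so $\beta$ need not be obtuse and the inequality genuinely uses the geometry. The cleanest route, which I would carry out in coordinates, exploits that $Q$ lies on $Q_1C$: the evacuation time for exit $N$ (with $R_1$ the trailing agent) is $y+0.5+|BQ_1|+|Q_1Q|+|QN|$, so, using $|Q_1Q|+|QC|=|Q_1C|$,
$$\big(y+0.5+|BQ_1|+|Q_1C|\big)-\big(y+0.5+|BQ_1|+|Q_1Q|+|QN|\big)=|QC|-|QN|.$$
Thus the whole claim reduces to showing the interception point satisfies $|QN|\le|QC|$, i.e. that $Q$ is never closer to $C$ than to $N$; this holds with equality exactly at $N=C$ and is precisely what the angle condition encodes. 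Establishing this monotonicity uniformly as $N$ ranges over $MC$ (equivalently, as $Q$ ranges over $Q_1J_1$) is the main obstacle, and I would settle it with the explicit coordinates $B=(-\tfrac12,0)$, $C=(\tfrac12,0)$, $A=(0,\tfrac{\sqrt3}{2})$ together with the interception constraint that pins $Q$ as a function of $N$.
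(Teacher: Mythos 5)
Your setup and the computation of the time $y+0.5+|BQ_1|+|Q_1C|$ for an exit at $C$ match the paper, and your reduction of the comparison to $|QN|\le|QC|$ is a correct and rather elegant reformulation when the interception point $Q$ lies on $Q_1C$. But the proof as written has a genuine gap: the one inequality that carries the whole lemma, $2\cos\beta+\cos\gamma<1$ (equivalently $|QN|\le|QC|$), is announced as ``the main obstacle'' and deferred to a coordinate computation you never perform. There is also an overstatement: Lemma \ref{lem:onedetour-Q1J1} only guarantees that $R_1$ is intercepted \emph{at or prior to} $J_1$, so $Q$ may lie on $MB$ or on $BQ_1$ rather than on the detour segment $Q_1J_1$; in those cases the identity $|Q_1Q|+|QC|=|Q_1C|$ that drives your reduction does not apply and a separate (easy) argument is needed.

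The missing step is precisely where the paper's proof is a one-liner, and your worry that ``$\beta$ need not be obtuse'' is unfounded. Put $B=(0,0)$, $C=(1,0)$, write $q=|BQ_1|$ and $L=|Q_1C|=\sqrt{q^2-q+1}\le 1$. The defining relation of $J_1$ gives $|Q_1J_1|=(L-q-r)/2$, and a short computation shows the $x$-coordinate of $J_1$ is at most $\tfrac12$: the inequality reduces to $q(L+q+r-2)\le 2r$, which holds because $q+r<L\le 1$ forces the left side to be nonpositive. Hence every candidate interception point --- anywhere on $MB$, on $BQ_1$, or on $Q_1J_1$ --- lies on the $B$-side of the perpendicular to $BC$ through $M$, while $N$ lies strictly on the $C$-side; so the ray from $N$ to $Q$ makes an obtuse angle with $R_2$'s direction of travel, $\cos\beta<0$, and $2\cos\beta+\cos\gamma<1$ follows immediately, exactly as in the \NoD analysis. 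This is the observation your proposal is missing; once it is in hand, either your $|QN|\le|QC|$ route (since $x_Q\le \tfrac12\le x_N\le 1$ gives $|QN|\le|QC|$ directly) or the paper's direct appeal to Lemma \ref{lem:betagamma} closes the argument.
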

\begin{proof} (See Figure \ref{subfig:2r-onedetour-t3}).
	For any arbitrary point on segment $MC$ (except the two endpoints), by Lemma \ref{lem:onedetour-Q1J1}, $R_1$ will be $r$-intercepted at some point $Q$ prior to  or at $J_1$, 
	and according to Lemma \ref{lem:betagamma}, since $\beta> {\pi}/{2}$, there exists another point in the direction of movement of $R_2$ such that placing the exit at that point will result in a higher evacuation time. We conclude that in segment $MC$, point $C$ is the critical point, and $R_1$ can reach $C$ in time  $y+0.5+|BQ_1|+|Q_1C|$.
\end{proof}

\begin{lemma} \label{lem:onedetour-segment4}
On segment $CQ_2$ point $C$ is the critical point.
\end{lemma}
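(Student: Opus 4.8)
The plan is to mirror the argument used for segment $MC$ in Lemma~\ref{lem:onedetour-segment2}, but now invoking the geometry established in Lemma~\ref{lem:j1p1}, which tells us that whenever the exit lies on segment $CQ_2$, agent $R_1$ is $r$-intercepted while moving along the detour segment $J_1P_1$. So first I would fix an arbitrary exit point $N$ strictly inside $CQ_2$, let $Q$ denote the $r$-interception point of $R_1$ on $J_1P_1$, and draw the interception segment $S$ joining $N$ to $Q$ so that the angles $\beta$ (between $R_2$'s line of travel and $S$) and $\gamma$ (between $R_1$'s line of travel and $S$) are defined exactly as in Lemma~\ref{lem:betagamma}.

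The key step is then to determine the sign of the quantity $2\cos\beta+\cos\gamma-1$ along this segment. The plan is to show that for every point on $CQ_2$ this quantity is strictly greater than $1$ (equivalently $2\cos\beta + \cos\gamma > 1$), so that Lemma~\ref{lem:betagamma} forces the evacuation time to increase as the exit is shifted \emph{against} the direction of $R_2$'s motion, i.e.\ toward $C$. This is the analogue of the computation in Lemma~\ref{lem:nodetour-segment3}, where the relation $\beta+\gamma=\tfrac{2\pi}{3}$ combined with $\beta\in(0,\tfrac{\pi}{3})$ gave $2\cos\beta-\cos(\tfrac{\pi}{3}+\beta)>1$. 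Here the geometry is different because $R_1$ now travels along the interior detour segment $J_1P_1$ rather than along the side $BA$, so I would first read off the relationship between $\beta$ and $\gamma$ imposed by the directions of $R_2$ (moving along $CA$) and $R_1$ (moving along $J_1P_1$), and then verify the inequality $2\cos\beta+\cos\gamma>1$ over the relevant range of $\beta$. Having done so, I conclude that shifting the exit toward $C$ always increases the evacuation time, so the maximum over $CQ_2$ is attained at the endpoint $C$; that is, $C$ is the critical point on this segment.

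The main obstacle I anticipate is pinning down the correct angular relation and the exact range of $\beta$ for the detour segment $J_1P_1$, since $J_1P_1$ is an interior chord whose orientation is determined implicitly by the defining equations $|BQ_1|+|Q_1J_1|=|CJ_1|-r$ and $|Q_1J_1|+|J_1P_1|=|Q_2P_1|-r$, not simply by a side of the triangle. If the orientation of $J_1P_1$ makes the clean identity $\beta+\gamma=\tfrac{2\pi}{3}$ unavailable, I would instead bound the quantity $2\cos\beta+\cos\gamma$ directly, using that $\beta$ stays bounded away from $\tfrac{\pi}{2}$ on $CQ_2$ (so $\cos\beta$ is positive and not too small) while $\gamma$ remains moderate. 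One subtlety to check is the behaviour at the endpoint $Q_2$ itself, where, by Lemma~\ref{lem:j1p1}, interception happens exactly at $P_1$; I would confirm the inequality remains strict there so that no interior maximum sneaks in. Once the sign of $2\cos\beta+\cos\gamma-1$ is settled uniformly on the open segment, the conclusion that $C$ is critical follows immediately from Lemma~\ref{lem:betagamma}, and the actual evacuation-time value at $C$ can be deferred to the subsequent consolidation of the segment analyses rather than being computed here.
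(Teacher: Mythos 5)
Your proposal follows essentially the same route as the paper's proof: invoke Lemma~\ref{lem:j1p1} to locate the $r$-interception point $Q$ on the detour, set up $\beta$ and $\gamma$ as in Lemma~\ref{lem:betagamma}, use the angle between the two directions of travel to constrain $\beta+\gamma$ (the paper notes $\angle QQ_2N>\pi/6$, hence $\beta+\gamma<5\pi/6$), and deduce $2\cos\beta+\cos\gamma>1$ so that shifting the exit toward $C$ increases the evacuation time. The one step you leave open, verifying that inequality over the relevant angular range, is also asserted without detailed computation in the paper, so your plan matches it in both structure and level of detail.
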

\begin{proof} (see Figure \ref{subfig:2r-onedetour-t4}).
  Suppose the exit is located at some point $N$, then by Lemma \ref{lem:j1p1} $R_1$ will be $r$-intercepted while moving on segment $Q_1J_1$ at some point $Q$. Note that $\angle QQ_2N>\pi/6$, hence $\beta +\gamma <5\pi/6$. On the other hand we have $\angle QQ_2N+\beta+\gamma=\pi$. It is easy to see that $2\cos(\beta)+\cos(\gamma)$ is always greater than 1. By Lemma \ref{lem:betagamma}, there exists another point in the opposite direction of $R_2$ which yields a larger evacuation time if the exit is located there.
\end{proof}

\begin{lemma} \label{lem:onedetour-segment5}
  Assume the exit is located at point $N$ inside segment $Q_2S_2$ and let
  $Z$ be a point on segment $Q_1A$ such that $Q_1Z+r=Q_2Z$.
  Then the evacuation time for this exit is at most $y+0.5+|BQ_1|+|Q_1J_1|+|J_1P_1|+|P_1Q_1|+|Q_1Z|+|ZQ_2|$.
\end{lemma}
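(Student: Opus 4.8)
The plan is to prove the stated upper bound in two moves: first show that within the segment $Q_2S_2$ the worst exit position is its lower endpoint $Q_2$ (nearest $C$), and then evaluate the evacuation time exactly when the exit sits at $Q_2$, obtaining the displayed quantity. Since the bound is independent of $N$ and equals the value at $Q_2$, these two facts together give the ``at most'' statement for every interior $N\in Q_2S_2$.

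I would establish the critical point with Lemma \ref{lem:betagamma}, exactly as in Lemmas \ref{lem:nodetour-segment3} and \ref{lem:onedetour-segment4}. A point of $Q_2S_2$ is discovered by $R_2$ only after it has finished its detour and resumed climbing $AC$; by the symmetry of the two trajectories about $AM$ and the equal speeds, $R_1$ has by then also returned to $Q_1$ and is climbing $Q_1A$, so the $r$-interception point $Q$ lies on $Q_1A$ above the mirror image of $N$, giving $|AQ|\le|AN|$. Hence $A$, $N$, $Q$ bound a triangle with apex angle $\pi/3$ and $\beta+\gamma=2\pi/3$ for $\beta=\angle ANQ$, $\gamma=\angle AQN$. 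Substituting $\gamma=2\pi/3-\beta$ turns $2\cos\beta+\cos\gamma$ into $\tfrac32\cos\beta+\tfrac{\sqrt3}{2}\sin\beta$, which a short computation identifies with $\tfrac32|AN|/|NQ|$; this exceeds $1$ because $|NQ|^2=|AN|^2+|AQ|^2-|AN||AQ|$ and $\tfrac54|AN|^2+|AN||AQ|-|AQ|^2>0$ whenever $|AQ|\le|AN|$. By Lemma \ref{lem:betagamma} the evacuation time therefore increases as the exit is pushed opposite to $R_2$'s motion, i.e.\ toward $Q_2$, so the maximum over $Q_2S_2$ is the limiting value at $Q_2$.

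For that limiting value, place the exit at $Q_2$. By symmetry $R_1$ is at $Q_1$ and $R_2$ at $Q_2$ at the common time $T=y+0.5+|BQ_1|+|Q_1J_1|+|J_1P_1|+|P_1Q_1|$, when both have completed and returned from their detours. In the regime $|Q_1Q_2|>r$ the agents are out of range, so $R_2$ moves to intercept while $R_1$ keeps climbing $Q_1A$; the defining relation $|Q_1Z|+r=|Q_2Z|$ is exactly the equal-travel condition that puts the interception at $Z$, with $R_1$ having walked $|Q_1Z|$ and $R_2$ the equal distance $|Q_2Z|-r$. Once informed, $R_1$ proceeds $Z\to Q_2$ and reaches the exit at time $T+|Q_1Z|+|ZQ_2|$, while $R_2$ regains $Q_2$ at time $T+2|Q_1Z|$; since $|ZQ_2|=|Q_1Z|+r>|Q_1Z|$, the evacuation time is $R_1$'s arrival time, which is the claimed bound.

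The part requiring the most care is the geometric setup of the critical-point step rather than any hard estimate. One must check, by a timing argument in the spirit of Lemma \ref{lem:j1p1}, that for every $N\in Q_2S_2$ the interception genuinely occurs during resumed exploration on $Q_1A$ (and not during the detour), and one must fix the orientation in Lemma \ref{lem:betagamma} so that $\beta$ is the interior angle $\angle ANQ$; only then does the sign of $2\cos\beta+\cos\gamma-1$ come out as needed. Once the configuration and the relation $|AQ|\le|AN|$ are secured, the inequality above and the distance accounting at $Q_2$ are both routine.
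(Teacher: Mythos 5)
Your proposal is correct and follows essentially the same route as the paper: apply Lemma \ref{lem:betagamma} (with $\beta+\gamma=2\pi/3$ and $2\cos\beta+\cos\gamma>1$) to push the worst case to the $Q_2$ end of the segment, then read off the evacuation time there from the equal-travel definition of the interception point $Z$. Your version is in fact somewhat more complete than the paper's, since you verify the inequality $2\cos\beta+\cos\gamma>1$ explicitly and justify that the interception occurs on $Q_1A$ after the detour, points the paper leaves implicit (its stated range for $\beta$ is even left as an unfinished sentence).
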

\begin{proof} (See Figure \ref{subfig:2r-onedetour-t5}).
Let the exit be at point $N$ on segment $Q_2S_2$ and the interception point be $Q$ on side $BA$. 
		We know that $\beta +\gamma =\frac{2\pi}{3}$,
		and for all points on segment $CS_2$, the angle $\beta$ is between 
		Then we get $2\cos(\beta)+\cos(\gamma)=2\cos(\beta)-\cos(\frac{\pi}{3}+\beta)$ which is strictly greater than one.
		Hence by Lemma \ref{lem:betagamma}, moving the exit point in the opposite direction of the movement of $R_2$ will result in higher evacuation time. Let $Z$ be the point on segment $Q_1A$ such that 
                $|Q_1Z|+r=|Q_2Z|$. Then we can conclude that the interception point for $N$ is in the segment $Q_1Z$ and therefore the evacuation time is at most the time needed to reach $Z$ plus $|ZQ_2|$,
which is equal to $y+0.5+|BQ_1|+|Q_1J_1|+|J_1P_1|+|P_1Q_1|+|Q_1Z|+|ZQ_2|$.
\end{proof}

As mentioned before, the evacuation time for an exit in segment $S_2A$ is
$t_{S_1}+r$, which is less than the evacuation time for exit located in segment $Q_2S_2$. 
Thus, combining the results of the previous lemmas  we can now give a value for $E_{\mbox{\OneD}}(2,r)$.

\begin{theorem}
	Let $t_1=y+0.5+|BQ_1|+|Q_1C|$ and  
	$t_2=y+0.5+|BQ_1|+|Q_1J_1|+|J_1P_1|+|P_1Q_1|+|Q_1Z|+|ZQ_2|$,
	where point $Z$ is the point that if the exit is located right after $Q_2$, agent $R_1$ will be $r$-intercepted at or before $Z$. Then $E_{\mbox{\OneD}}(2,r)=max\{t_1,t_2\}$.
\end{theorem}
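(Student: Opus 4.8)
The plan is to assemble the segment-by-segment lemmas that precede the statement, since each of them has already isolated the critical point of one piece of $R_2$'s exploration and bounded the corresponding evacuation time; the theorem then amounts to taking a maximum. By the symmetry of Trajectories \ref{traj:2r1d} about the median $AM$, I may assume the exit lies in the right half of the perimeter and is found by $R_2$. That right half is exactly the concatenation $M\to C\to Q_2\to S_2\to A$, i.e. the four segments $MC$, $CQ_2$, $Q_2S_2$ and $S_2A$, so the worst-case time $E_{\OneD}(2,r)$ is the maximum of the worst-case evacuation times taken over these four pieces.

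First I would merge the two pieces whose critical point is the vertex $C$. Lemma \ref{lem:onedetour-segment2} shows the time on $MC$ is maximized at $C$, where it equals $y+0.5+|BQ_1|+|Q_1C|=t_1$, and Lemma \ref{lem:onedetour-segment4} shows the critical point of $CQ_2$ is again $C$; since $C$ is the common endpoint, both pieces contribute the single value $t_1$, which is actually attained when the exit is placed at $C$. Next, Lemma \ref{lem:onedetour-segment5} bounds the time on $Q_2S_2$ by $t_2$, a bound approached as the exit tends to $Q_2$ from above (precisely the configuration defining the auxiliary point $Z$). Hence $MC\cup CQ_2$ realizes $t_1$ and $Q_2S_2$ realizes $t_2$, so each of these bounds is tight and the overall maximum will be an equality rather than merely an upper bound.

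It remains to dispose of $S_2A$. Once the exit enters the apex sub-triangle $\Delta S_1AS_2$ the two agents are always within distance $r$, so the evacuation time there is the position-independent quantity $t_{S_1}+r$; as recorded just before the statement this is smaller than the time for an exit in $Q_2S_2$, i.e. $t_{S_1}+r\le t_2$, so $S_2A$ never attains the maximum. Collecting the four pieces therefore gives $E_{\OneD}(2,r)=\max\{t_1,\,t_2,\,t_{S_1}+r\}=\max\{t_1,t_2\}$, as claimed. The substantive work has already been carried out inside Lemmas \ref{lem:onedetour-segment2}--\ref{lem:onedetour-segment5}, each of which invokes the monotonicity test of Lemma \ref{lem:betagamma} to exclude interior critical points; at the level of the theorem the only genuinely new verifications are that the four segments exhaust the right half and that the near-apex value $t_{S_1}+r$ is dominated by $t_2$. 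I expect this last domination to be the most delicate point, since it compares the short, permanently connected evacuation near $A$ against the detour-laden evacuation just past $Q_2$, and a clean argument may well require an explicit lower estimate of $t_2$ in terms of the detour lengths $|Q_1J_1|+|J_1P_1|+|P_1Q_1|$.
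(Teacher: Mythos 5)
Your proposal is correct and follows essentially the same route as the paper: the paper likewise splits $R_2$'s trajectory into the segments $MC$, $CQ_2$, $Q_2S_2$, $S_2A$, invokes Lemmas \ref{lem:onedetour-segment2}, \ref{lem:onedetour-segment4} and \ref{lem:onedetour-segment5} to locate the critical points at $C$ and just past $Q_2$, and dismisses $S_2A$ via the position-independent time $t_{S_1}+r$. Your closing remark that the domination $t_{S_1}+r\le t_2$ deserves an explicit check is a fair observation (the paper simply asserts it), but it does not change the argument's structure.
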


Observe that by increasing the size of segment $BQ_1$, time $t_1$ increases, and on the other hand, decreasing length of $BQ_1$, increases $t_2$. Best value for $|BQ_1|$ is obtained when
$t_1=t_2$. Clearly, there is exactly one value of $Q_1$ which equates $t_1$ and $t_2$.
However, because of the complexity of the equations, we do not have an explicit solution for $Q_1$ as a function of $r$. 
We computed results for different values of $r$ which  are shown in Table 
\ref{tab:r2twodetour}.
As shown there, the \OneD algorithm with one detour has a lower evacuation time than the \NoD algorithm for $0 <r< 0.7$. 

\subsubsection{No detour for  $r\geq 0.7375$:} \label{ssec:large_r}
As can be seen from Table \ref{tab:r2twodetour}, the improvement
provided by using a detour in the evacuation algorithm diminishes when $r$ increases, and it does not give any improvement for $r=0.7375$.
We formalize this in the lemma below:

\begin{lemma}\label{lem:no_detour}
  No detour should be used for  $r>0.7374048$. 
\end{lemma}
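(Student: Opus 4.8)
The goal is to prove Lemma \ref{lem:no_detour}, namely that for $r > 0.7374048$, the \NoD algorithm is no worse than the \OneD algorithm, so adding a detour cannot help. My plan is to compare the two evacuation times established earlier: $E_{\mbox{\NoD}} = y + 0.5 + r + \frac{2(1-r^2)}{2r+1}$ from Theorem \ref{th:nodetour-c}, and $E_{\mbox{\OneD}}(2,r) = \max\{t_1, t_2\}$ from the preceding theorem. Since the detour only helps if it strictly lowers the worst-case time, it suffices to show that for $r$ above the threshold, $\max\{t_1, t_2\} \geq E_{\mbox{\NoD}}$, i.e.\ any choice of the detour point $Q_1$ yields an evacuation time at least that of the no-detour algorithm.

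The key structural observation is that in the \OneD algorithm, $t_1 = y + 0.5 + |BQ_1| + |Q_1C|$ is exactly the time to reach the critical point $C$ when the detour starting point $Q_1$ replaces the $r$-interception on edge $BA$. First I would express $t_1$ purely in terms of $r$ and the single free parameter $q := |BQ_1|$. When $Q_1$ coincides with the optimal interception point $Q$ of the \NoD algorithm, $t_1$ equals $E_{\mbox{\NoD}}$; moving $Q_1$ away from that point only increases $|BQ_1| + |Q_1C|$ (this is the same triangle-geometry optimization that produced the $\frac{2(1-r^2)}{2r+1}$ term). Thus $t_1$ alone is minimized precisely at the no-detour configuration, and its minimum value equals $E_{\mbox{\NoD}}$. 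The detour can therefore only help if it allows $t_2 < E_{\mbox{\NoD}}$ while simultaneously keeping $t_1$ small — which forces $Q_1$ close to the no-detour interception point.

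The crux of the argument, then, is to show that for $r$ above the threshold the segment $Q_2S_2$ becomes so short (indeed, as $r \to 1$, the points $Q_2$ and $S_2$ and the whole detour structure collapse toward $A$) that the detour is simply never exercised for an exit that would otherwise be the bottleneck. Concretely, I would examine the constraints defining $J_1$, $P_1$, and $Z$ in terms of $r$, substitute them into $t_2 = y + 0.5 + |BQ_1| + |Q_1J_1| + |J_1P_1| + |P_1Q_1| + |Q_1Z| + |ZQ_2|$, and show that for any $Q_1$ the resulting $\max\{t_1,t_2\}$ is at least $E_{\mbox{\NoD}}$, with equality forced only in the degenerate (no-detour) limit. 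The threshold $r^* = 0.7374048$ is then the value at which the best-balanced configuration $t_1 = t_2$ first fails to beat $E_{\mbox{\NoD}}$; below it the detour wins, above it it does not. I expect the main obstacle to be exactly this balancing analysis: because $t_2$ has no closed form in $q$ (as the authors note, the equation $t_1 = t_2$ cannot be solved explicitly), the cleanest route is to show monotonicity of the gap $\max\{t_1,t_2\} - E_{\mbox{\NoD}}$ in $r$ and then verify the single numerical crossing at $r^*$, rather than attempting a symbolic proof. The geometric content — that a longer trip out to $Q_1$ and back costs more than it saves once $r$ is large because the savings region near $C$ has shrunk — is what pins down the existence of a sharp threshold.
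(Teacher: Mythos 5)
Your high-level plan (compare $\max\{t_1,t_2\}$ against $E_{\mbox{\NoD}}$ and locate the threshold where the balanced configuration $t_1=t_2$ stops beating no-detour) is in the right spirit, but the structural claim you build it on is inverted. You assert that $t_1=y+0.5+|BQ_1|+|Q_1C|$ is \emph{minimized} when $Q_1$ coincides with the no-detour interception point $Q$, and that moving $Q_1$ away from $Q$ only increases it. Writing $q=|BQ_1|$, we have $|BQ_1|+|Q_1C|=q+\sqrt{q^2-q+1}$, whose derivative $1+\frac{2q-1}{2\sqrt{q^2-q+1}}$ is strictly positive on $[0,1]$; so $t_1$ is strictly \emph{increasing} in $q$ and has no interior minimum at $Q$. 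The geometric feasibility condition $|Q_1J_1|=\frac{|Q_1C|-|BQ_1|-r}{2}\geq 0$ restricts $q$ to $[0,q^*]$, where $q^*$ is exactly the no-detour interception distance (defined by $|BQ|=|QC|-r$), and on that range $t_1\leq E_{\mbox{\NoD}}$ with equality only at the degenerate endpoint $q=q^*$. So the detour always weakly improves the time at $C$; its entire cost is borne by $t_2$. Your criterion ``the detour can only help if $t_2<E_{\mbox{\NoD}}$'' happens to be correct, but you reach it from a false premise, and your conclusion that the optimizer is ``forced close to'' the interception point to keep $t_1$ small is the opposite of what happens: the optimizer pushes $Q_1$ toward $B$ to shrink $t_1$ until $t_2$ catches up.

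The mechanism you propose for the threshold is also not the operative one. You attribute the collapse to the segment $Q_2S_2$ becoming short, whereas the paper's proof tracks the detour leg $|Q_1J_1|$ at the balanced optimum $t_1=t_2$: it decreases as $r$ grows and vanishes exactly when $r=\sqrt{|BQ_1|^2-|BQ_1|+1}-|BQ_1|$; solving this simultaneously with $t_1=t_2$ yields $|BQ_1|\approx 0.18435$ and $r\approx 0.7374048$, beyond which the balanced configuration would require a negative detour leg and is geometrically infeasible. Because your monotonicity picture of $t_1$ and $t_2$ in $q$ is reversed, the ``show the gap is monotone in $r$ and find the single crossing'' step cannot be carried out as written --- you would be perturbing $Q_1$ in the wrong direction. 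To repair the argument you would need to establish that on $[0,q^*]$ the function $t_1$ increases while $t_2$ decreases in $q$, that the minimizer of $\max\{t_1,t_2\}$ is either the crossing point or the endpoint $q^*$, and that for $r$ above the stated threshold the crossing leaves the feasible interval, forcing the optimum to the degenerate no-detour endpoint.
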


\begin{proof}
  Values $r$ and $|Q_1J_1|$ are inversely related. Increasing $r$ will decrease the value of $|Q_1J_1|$ up to a point when $|Q_1J_1|$ is equal to zero. At this point we would have $r=\sqrt{|BQ_1|^2+1-|BQ_1|}-|BQ_1|$. By substituting this value in $f(r,|BQ_1|)=g(r,|BQ_1|)$ and solving that equation we get the values of $0.1843512042$ and $0.7374048168$ for $|BQ_1|$ and $r$ respectively. If we increase $r$, we get negative value for $|Q_1J_1|$ which is invalid.
\end{proof}
\subsubsection{The \TwoD Algorithm:}
It is shown in \cite{ChuangpishitMNO17}, that  for $r=0$, i.e., the face-to-face communication, the evacuation time can be improved by using more than one detour.  
We now show that for {\em smaller} values of $r$, a further improvement in evacuation time can be similarly achieved by making more detours. 
Consider the situation in the  execution of the \OneD algorithm
when $R_1$ and $R_2$  reach vertices $B$ and $C$ respectively, assuming no agent have found the exit so far. The remaining search problem will be a triangle with two unexplored sides of length 1, call this problem $\mathcal{P}_1$.

  \begin{figure}
 \centering
 	\includegraphics[scale=0.32]{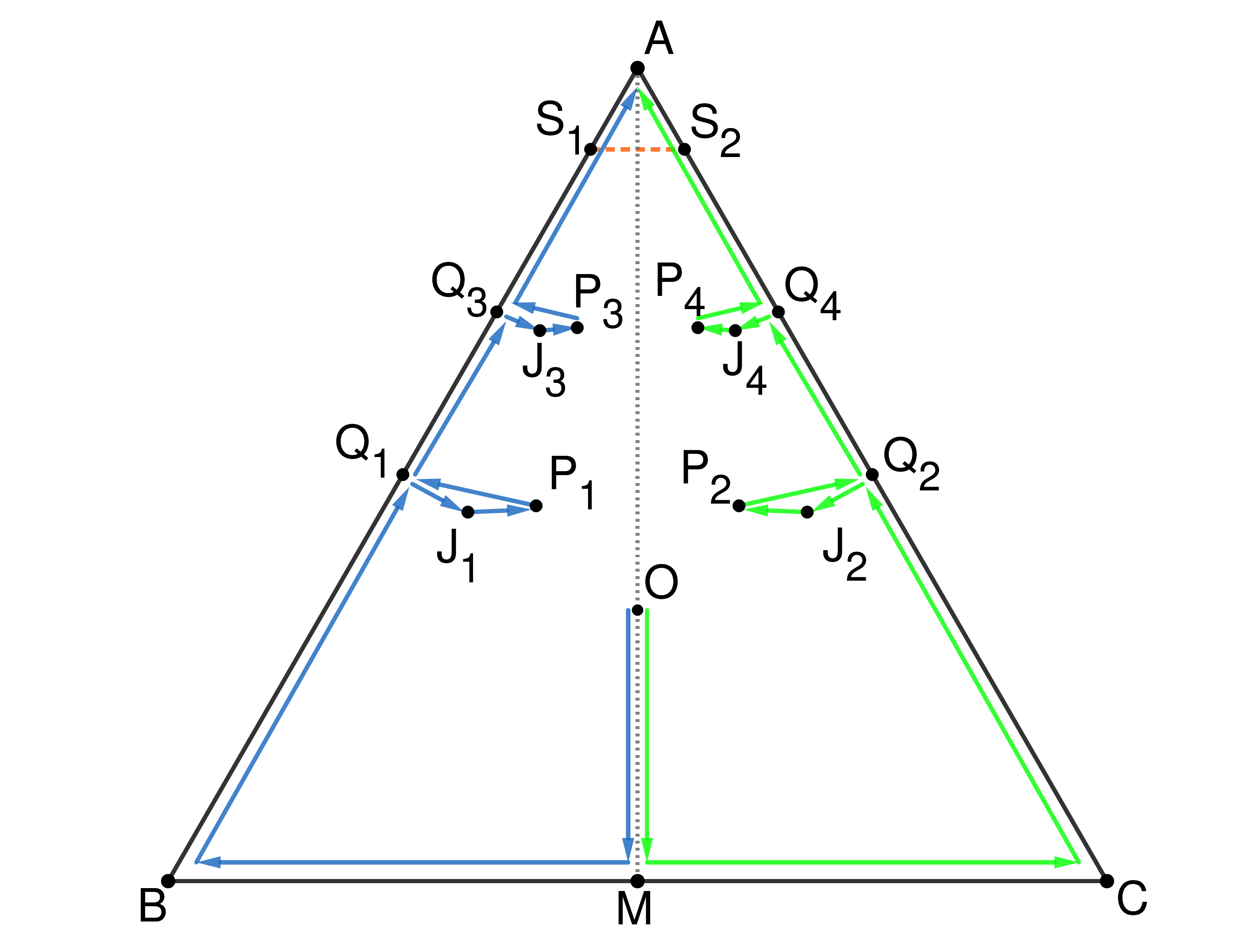}
	\caption{\TwoD algorithm}
 \label{subfig:2r-twodetours}
  \label{mixed}
\end{figure}

Now consider the time when the two agents finish their detour and get back to points $Q_1$ and $Q_2$ with no exit found.  Call the remaining search problem $\mathcal{P}_2$. It is obvious that $\mathcal{P}_2$ is a scaled down
version of $\mathcal{P}_1$, however with proportionally larger $r$.

Thus, if $r$ is not too large yet for problem $\mathcal{P}_2$, another detour could
 be done in the upper part of the triangle. The trajectory of two agents with two detours shown in Figure  \ref{subfig:2r-twodetours} is specified in Trajectories \ref{traj:2r2d}.
\begin{trajectory} \label{traj:2r2d}
$\ $ \TwoD \\
\hspace*{2cm}	$R_1:<O,M,B,Q_1,J_1,P_1,Q_1,Q_3,J_3,P_3,Q_3,A>$\\
\hspace*{2cm}	$R_2:<O,M,C,Q_2,J_2,P_2,Q_2,Q_4,J_4,P_4,Q_4,A>$
\end{trajectory}
Algorithm \ref{alg:2r} with respect to Trajectory \ref{traj:2r2d} is called \TwoD algorithm.
In the case of two detours, similarly as in the case of one detour,
it can be shown that there exists three critical points, namely $C$,
and the points right after $Q_2$ and $Q_4$. The evacuation times for these points will be as follows:
\begin{enumerate}
	\item $t_1=y+|MB|+|BQ_1|+|Q_1C|$
	\item$t_2=y+|MB|+|BQ_1|+|Q_1J_1|+|J_1P_1|+|P_1Q_1|
	+|Q_1Q_3|+|Q_3Q_2|$
	\item $t_3=y+|MB|+|BQ_1|+|Q_1J_1|+|J_1P_1|+|P_1Q_1|
	+|Q_1Q_3|+|Q_3J_3|+|J_3P_3|+|P_3Q_3|+|Q_3V|+|VQ_4|$\\
\end{enumerate}
\vspace{-6mm} \hspace{4.5mm} where $V$ is a point on segment $Q_3S_1$, such that $|Q_3V|=|VQ_4|-r$.\\

By equating these three values we obtain an optimized two detour evacuation algorithm.
It has been shown  in \cite{ChuangpishitMNO17} that for the face-to-face communication  detours can be recursively added to improve the evacuation time, though the improvement obtained by successive detours decreases rapidly.
In contrast, we showed above that for  $r> 0.7374$,  not even one detour improves the evacuation time. Similarly it can be shown that a  second detour is not helpful for $r>0.472504$.

\begin{remark}
  The size of the triangular detour(s) used above could be made slightly shorter by moving $P_i$ closer to $J_i$, which would improve the evacuation time slightly. 
  \end{remark}

\subsection{Lower Bound for Evacuating 2 Agents}
In this section we prove a lower bound for evacuation by two agents. The proof is essentially the same as the proof of the lower bound in \cite{ChuangpishitMNO17} for the case $r=0$, but needs to take into account the ability  of agents to communicate at distance $r$. We give the  proof here for completeness. First we need to generalize the Meeting Lemma used in \cite{ChuangpishitMNO17}. We say two points have opposite positions if one point is a vertex of $T$ and the other point is located on the opposite edge of that vertex.
\begin{figure} [h]
	\centering
	\includegraphics[scale=0.28]{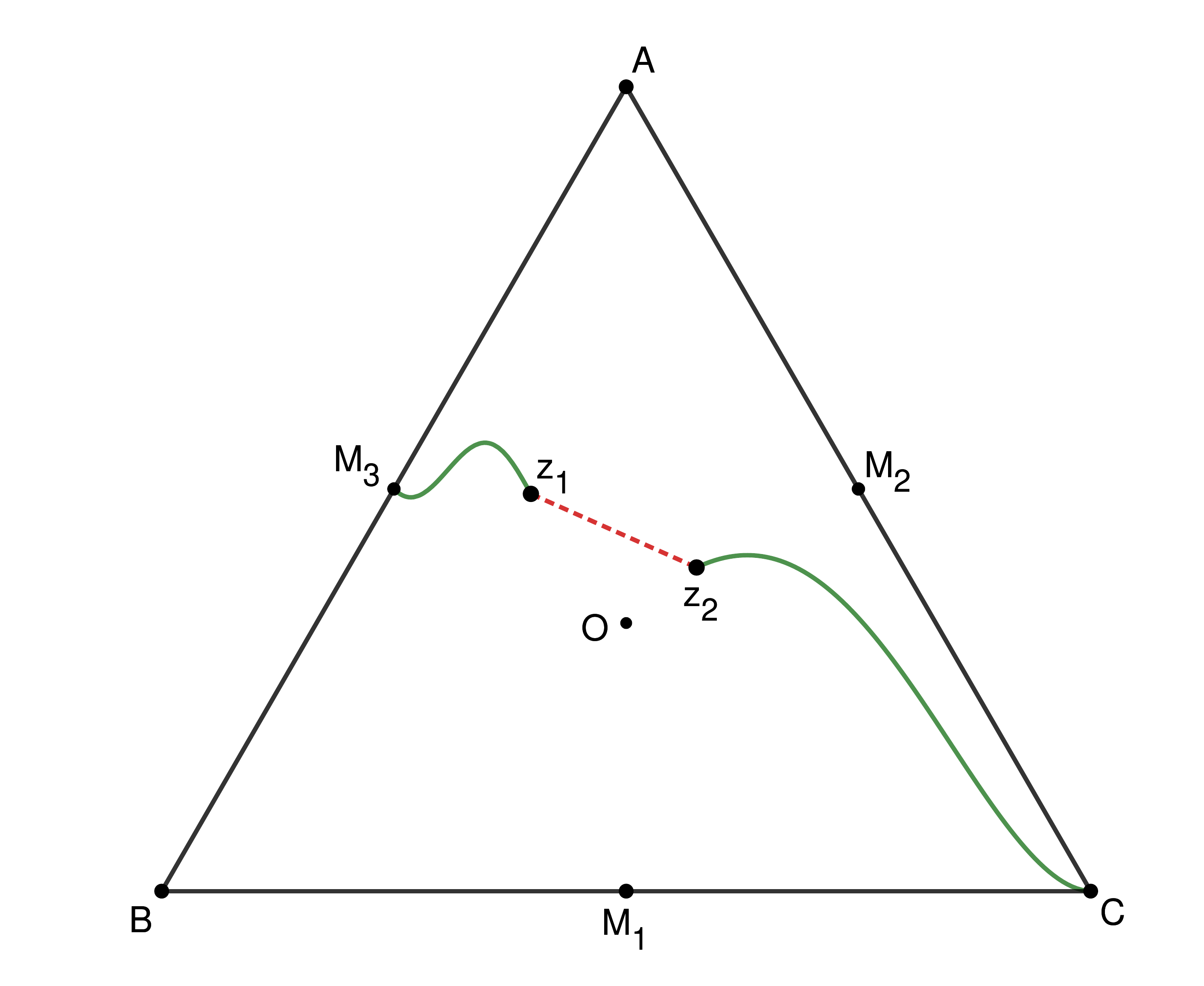}
	\caption{Modified Meeting Lemma}
	\label{fig:lower-bound-1}
\end{figure}
\begin{lemma} [Generalized Meeting Lemma]\label{lem:meeting_lemma}
Assume that $p_1,p_2 \in T$ and they have opposite positions, e.g. points $M_3$ and $C$ in Figure \ref{fig:lower-bound-1}. For any algorithm in which one of the agents visits $p_1$ at time $t'\geq 0.5+y$ and the other visits $p_2$ at time $t$ such that $t'<t<0.5+4y-r$, the two agents cannot communicate any information between time $t'$ and $t$.
\end{lemma}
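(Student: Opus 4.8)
The plan is to prove the lemma by contradiction, playing off two opposing estimates of the distance $|p_1p_2|$: a geometric \emph{lower} bound that comes from the two points being in opposite positions, and a kinematic \emph{upper} bound that comes from the speed-$1$ constraint together with the triangle inequality. The two hypotheses on the time window, namely $t'\geq 0.5+y$ and $t<0.5+4y-r$, are chosen precisely so that these two estimates become incompatible. Throughout, I would use that with only two agents there are no relays, so ``communicate'' means exactly ``be at distance at most $r$''.

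First I would record the geometric fact. Since $p_1$ and $p_2$ are in opposite positions, one of them is a vertex of $T$ and the other lies on the opposite edge. The distance from a vertex to a point of the opposite edge is minimized at the foot of the altitude, which for the equilateral triangle is interior to that edge, so the minimum is attained and equals the height $h$. Recalling $y=h/3$, this gives $|p_1p_2|\geq h=3y$ for \emph{every} pair of opposite positions, not merely the midpoint example $M_3,C$ of Figure~\ref{fig:lower-bound-1}.

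Next I would suppose, for contradiction, that the two agents communicate at some instant $s\in[t',t]$. Let the agent that is at $p_1$ at time $t'$ be at $u$ at time $s$, and the agent that is at $p_2$ at time $t$ be at $v$ at time $s$; communication means $|uv|\leq r$. Because both move at speed at most $1$ and $t'\leq s\leq t$, we have $|p_1u|\leq s-t'$ and $|vp_2|\leq t-s$, so the triangle inequality yields
\[
|p_1p_2|\leq |p_1u|+|uv|+|vp_2|\leq (s-t')+r+(t-s)=(t-t')+r.
\]
Combining with the geometric bound gives $3y\leq (t-t')+r$, i.e. $t-t'\geq 3y-r$. But the hypotheses force $t-t'<(0.5+4y-r)-(0.5+y)=3y-r$, a contradiction; hence no communication can occur at any $s\in[t',t]$.

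I do not expect a genuine obstacle here: this is the natural $r>0$ generalization of the $r=0$ Meeting Lemma of \cite{ChuangpishitMNO17}, where ``communicate'' degenerated to ``coincide''. The only two points deserving care are (i) verifying that the vertex-to-opposite-edge bound $|p_1p_2|\geq 3y$ holds for all admissible opposite positions (which is where the interiority of the altitude foot is used), and (ii) checking that the window width $3y-r$ is exactly the total distance two unit-speed agents can close toward each other while keeping a range-$r$ link open, so that the hypotheses are tight rather than merely sufficient.
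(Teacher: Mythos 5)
Your proof is correct and follows essentially the same route as the paper's: both derive the kinematic upper bound $|p_1p_2|\leq (t-t')+r$ from the triangle inequality at the communication instant and contradict it with the geometric lower bound $|p_1p_2|\geq h=3y$ using the stated time window. Your write-up is slightly more careful in that it explicitly justifies the vertex-to-opposite-edge bound, which the paper simply asserts.
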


\begin{proof}
Suppose the two agents exchange information between time $t'$ and $t$. They have to get close to each other, in order to communicate. 

Then there exists a time $t_z$ with $t ' \leq t_z \leq t$ such that $R_1$ is at point $z_1$ and $R_2$ is at $z_2$ at time $t_z$, and $|z_1  z_2| \leq r$.  Since $p_1$ and $p_2$ have opposite positions, $|p_1p_2|\geq 3y$. Therefore $|p_1z_1|+|z_1z_2|+|z_2p_2| \geq 3y$. On the other hand, we know that $t_z-t'\geq |p_1z_1|$ and $t-t_z\geq |p_2z_2|$. Combining these facts together, we obtain:
$$h\leq |p_1z_1|+|z_1z_2|+|z_2p_2| \leq t_z-t'+|z_1z_2|+t-t_z =t-t'+r$$
$$t-t'+r<0.5+h+y-r-t'+r \leq 0.5+h+y-0.5-y=h$$ 
which is a contradiction.
\end{proof}

Using a similar arguments used in \cite{ChuangpishitMNO17} we now establish the lower bound for the case $r\geq 0$ for two agents.

\begin{theorem}
  \label{th:lb2}
	Two agents are at a centroid of an equilateral triangle with sides 1. the evacuation time for two agents of transmission range $r$ positioned at the centroid of a triangle with sides one is at least $max\{1.5+y,1+4y-r\}$. 
\end{theorem}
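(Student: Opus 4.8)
The statement asserts a lower bound equal to a maximum of two quantities, so the plan is to prove separately that each of $1.5+y$ and $1+4y-r$ is a valid lower bound on the evacuation time $T$; the stated bound then follows immediately by taking the maximum.

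For $T\ge 1.5+y$ I would use a covering argument. The centroid $O$ lies at distance $y$ from the perimeter (the nearest boundary points being the three edge midpoints), so any agent needs time at least $y$ before it first touches the boundary; consequently, in time $T$ a single agent can visit a set of perimeter points of total length at most $T-y$. Since every point of the perimeter (total length $3$) must be visited by some agent before the exit can be guaranteed to be found, the two agents together must satisfy $2(T-y)\ge 3$, which gives $T\ge 1.5+y$.

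For $T\ge 1+4y-r$ I would run an adaptive adversary argument built on the Generalized Meeting Lemma (Lemma \ref{lem:meeting_lemma}). First, a path-length count shows that some vertex is discovered late: if all three vertices were first visited before time $1+2y$, then the two trajectories restricted to that interval would collectively pass through all three vertices, forcing one agent to pass through at least two of them; but a path from $O$ through two vertices has length at least $2y+1$, a contradiction. Hence there is a vertex $C$ whose first visit occurs at time $t_C\ge 1+2y$. The adversary places the exit at $C$; the finder reaches $C$ at $t_C$, and it remains to show that the second agent $R_1$ cannot reach $C$ before $t_C+(2y-r)$. The key point is that information that the exit is at $C$ cannot exist before $t_C$, so $R_1$ learns it only by being near $C$ at $t_C$ or by a later contact. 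Invoking Lemma \ref{lem:meeting_lemma} with $p_2=C$ and $p_1$ a point of the opposite edge visited by $R_1$ at a time $t'$ with $0.5+y\le t'<0.5+4y-r$ rules out timely communication in the window $[t',t_C]$, and the triangle inequality (using $|p_1C|=h=3y$) then forces $R_1$ to still be at distance at least $2y-r$ from $C$ when $C$ is found. Adding the two contributions yields $T\ge(1+2y)+(2y-r)=1+4y-r$.

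The main obstacle is the adaptive part of this second argument: the adversary cannot commit to the vertex $C$ in advance, because the algorithm might instead send \emph{both} agents to cluster near $C$, in which case communication is timely and the above delay collapses. In that event, however, neither agent can have explored the edge opposite to $C$, so the adversary must instead relocate the exit to an unexplored far point of that edge and repeat the estimate for the new opposite pair. Making this dichotomy precise—splitting on how the two trajectories partition the exploration of the three edges, and in each branch verifying that the hypotheses of Lemma \ref{lem:meeting_lemma} hold, namely $t'\ge 0.5+y$ and $t_C-t'<3y-r$ so that the no-communication window covers the critical interval—is the delicate step. The constants $2y$ (centroid-to-vertex), $3y=h$ (vertex-to-opposite-edge) and the communication slack $r$ must be tracked exactly for the two contributions to combine to $1+4y-r$.
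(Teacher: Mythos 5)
Your decomposition into the two bounds is sound, and your covering argument for $T\ge 1.5+y$ is correct (the paper takes this part as known from the wireless lower bound). For the $1+4y-r$ part you also pick the right key tool, the Generalized Meeting Lemma, but the adversary you build around it does not work, and the place where it fails is precisely the step you defer as ``the delicate step.'' Placing the exit at the last-visited vertex $C$ and charging the second agent an extra $2y-r$ is refuted by the paper's own \NoD trajectories: both agents go $O\to M\to B\to A$ and $O\to M\to C\to A$, so the latest-visited vertex is $A$, reached \emph{simultaneously} by both agents at time $1.5+y\approx 1.789$, which for $r<0.366$ is strictly less than $1+4y-r$. Your proposed repair --- relocate the exit to an unexplored point of the opposite edge --- is exactly where all the content of the proof lives, and it cannot be carried out with vertices and ``far points'' alone. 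There is also a quantitative gap you gloss over: knowing only $t_C\ge 1+2y$ does not place $t_C$ inside the window $t_C<0.5+4y-r$ required to invoke Lemma \ref{lem:meeting_lemma}, and even when the lemma applies, the second agent's residual distance to $C$ at time $t_C$ need not be $2y-r$; bounding it requires knowing where that agent has been, which your argument does not control.

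The paper's proof handles this by working with the six-point set $S=\{A,B,C,M_1,M_2,M_3\}$ rather than the three vertices: the adversary makes the exit the \emph{last} of these six points visited, analyzes the time $t$ at which the \emph{fifth} point $v_5$ is visited (this is what yields both $t\ge 1+y$ and $t<0.5+4y-r$, since reaching $v_6$ from $v_5$ costs at least another $0.5$), and then splits into three cases according to whether $v_5$ and $v_6$ are vertices or midpoints. Each case uses an input-switching (indistinguishability) argument --- the agents behave identically on the real input and on a modified input $I'$ up to the moment they could first have communicated, which the Meeting Lemma postpones --- and the hardest case requires the auxiliary Lemmas \ref{lem:lem5}--\ref{lem:lem7-9}, one of which tracks explicit $r$-interception points to derive a contradiction. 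Your proposal identifies the right lemma and the right overall shape, but without the six-point bookkeeping and the case analysis the bound $1+4y-r$ is not established.
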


\begin{proof}
	For the purpose of contradiction assume there exists algorithm $\mathcal{A}$ such that $E_{\mathcal{A}}(2,r)<1+4y-r$. Let us focus on the set of points $S=\{A,B,C,M_1,M_2,M_3\}$. We give an Adversary Argument. There exists some input $I$ in which the exit is the last of the point in $S$ visited by an agent. Suppose time $t$ is the time that the fifth point from the set $S$ is visited and $v_1$ through $v_6$ be the order that points are visited. Wlog we assume that $v_5$ is visited by $R_1$. Since at time $t$, the fifth vertex is visited, then 3 points must have been visited by one of the agents and $t\geq y+1$. On the other hand since the algorithm should satisfy $E_{\mathcal{A}}(2,r)<1+4y-r$, then $t<0.5+4y-r \leq 0.5+4y$ since $R_1$ needs extra time $0.5$ to get to the sixth vertex.
	\\
	We now examine the following exhaustive cases based on whether $v_5$ is a midpoint or a vertex. 
	\begin{description}
	\item[Case 1. Point $v_5$ is a vertex of $T$:] Wlog assume that $v_5$ is $C$. If $v_6$ is one of $A$, $M_3$ or $B$, then it takes at least $h$ for $R_1$ to evacuate the triangle and $E^*(2,r)\geq t+h\geq 1+4y$ which is a contradiction. We conclude that the $v_6$ should be either $M_1$ or $M_2$. Note that $R_1$ can have visited at most one of $A$, $M_3$ and $B$ by time $t$, hence $R_2$ should have visited at least two points of $A$, $M_3$ and $B$. Assume $v$ is the second vertex of the set of $A$, $M_3$ and $B$ visited by $R_2$ at time $t'$. Clearly $t'\geq 0.5+y$. By the Generalized Meeting Lemma the two agents cannot communicate between $t'$ and $t$ on input $I$.
	\\
	Now consider input $I'$ in which the exit is located at $v$. On this input $R_1$ and$R_2$ behave identical to input $I$ until time $t'$. After this time $R_2$ may try to $r$-intercept $R_1$ but by Modified Meeting Lemma we know that the $r$-interception does not occur before time $t$. Hence $R_1$ has to travel at least $h$ to get to the exit which indicates that evacuation time will be at least $1+y+h=1+4y$ on input $I'$, a contradiction. 

	\item[Case 2. $v_5$ is a midpoint of a side of $T$, and $v_6$ is another midpoint:]
	Wlog we assume that $v_5$ is $M_2$ and $v_6$ is $M_3$. If $R_1$ visits two vertices before arriving at $M_2$ then $E^*(2,r)\geq 2y+1.5$ which is a contradiction. We conclude that $R_2$ must have visited 2 vertices before $t$. It is obvious that $R_2$ cannot visit the second of these two vertices sooner than time $2y+1$. If the second vertex is $B$ then the adversary places the exit at $M_2$ and if the second vertex is $C$, it will place the exit at $M_3$. In both cases $E^*(2,r)\geq 2y+1+3y>1+4y$ which is a contradiction. 
	We conclude that the second vertex visited by $R_2$ must be $A$.
	\\
	 Observe that if $M_1$ is visited by $R_2$, then $E^*(2,r)\geq y+2 >1+4y$. Therefore $R_1$ should visit $M_1$ as well as either $B$ or $C$ before arriving at $M_2$. Let $P$ be the second point from set $S$ visited by $R_1$ at time $t_1$. Clearly $t_1\geq y+0.5$. On the other hand, $R_2$ must visit $A$ before time $0.5+4y-r$. Let this time be $t_2$. Clearly $t_1<t_2$, since if not, the time $R_1$ gets to $M_2$ will be at least $t_2+0.5\geq 2y+1.5$ which is a contradiction. By the modified meeting lemma, $R_1$ and $R_2$ cannot exchange information between $t_1$ and $t_2$. Now consider input $I'$ in which the exit is located at $P$. Agent $R_2$ has the same behaviour until it reaches point $A$ at time $t_2\geq 2y+1$ and has to travel at least $3y$ to get to the exit. Hence $E^*(2,r)\geq 2y+1+3y>1+4y$. A contradiction.

	\item[Case 3. $v_5$ is a midpoint of a side of $T$, and $v_6$ is a vertex:]
	Wlog assume $v_5$ is $M_2$, if $v_6$ is $B$ then $E^*(2,r)\geq 4y+1$ which is a contradiction. We conclude that $v_6$ is either $A$ or $C$. Wlog we assume $v_6$ is $A$. If a single agent visits both $B$ and $C$, it takes time at least $2y+2$ to get to $A$, hence $B$ and $C$ should be visited by different agents. Now we consider 2 cases: $R_1$ visits $C$ and $R_2$ visits $B$; and $R_1$ visits $B$ and $R_2$ visits $C$.
	\\
	Suppose $R_1$ visits $C$. First observe that $R_1$ cannot also visit $M_3$ before reaching point $M_2$ as doing so will result in $E^*(2,r)\geq 4y+1$ which is a contradiction. Therefore $M_3$ should be visited by $R_2$. If $M_1$ is visited by $R_2$, Lemma \ref{lem:lem5} assures that $E^*(2,r)\geq 4y+1$ and if $M_1$ is visited by $R_1$, then Lemma \ref{lem:lem6} assures the same thing.
	\\
	Now suppose $R_1$ visits $B$ before reaching point $M_2$ and $R_2$ visits $C$. It is obvious that $R_1$ cannot visit both $M_3$ and $M_1$ as by doing so, it will take at least $y+1.5>4y+0.5$ to reach $M_2$. 
	Lemma \ref{lem:lem7-9} now assures that $E^*(2,r)\geq 4y+1-r$ .
	\end{description}
\end{proof}
	For Lemmas \ref{lem:lem5} through \ref{lem:lem7-9}, we assume that $v_5=M_2$ and $v_6=A$, and $R_1$ visits $M_2$ at time $1+y\leq t_{M_2}<4y+0.5-r$. We use the following observation from \cite{ChuangpishitMNO17}.
	
\begin{obs} \label{obs:Ap}
	\cite{ChuangpishitMNO17} Let $p$ be a point on the boundary. If at time $1+4y-|Ap|$, both $A$ and $p$ are unvisited, then $E^*(2,r)\geq 1+4y$
\end{obs}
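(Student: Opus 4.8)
The plan is to prove this by an adversary (indistinguishability) argument anchored on the two candidate exit locations $A$ and $p$, exploiting the fact that the threshold $1+4y-|Ap|$ is chosen precisely so that, writing $t^*:=1+4y-|Ap|$, we have $t^*+|Ap|=1+4y$. The geometric content is minimal; the work is entirely in the bookkeeping of what the agents can and cannot know.

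First I would fix an arbitrary algorithm $\mathcal{A}$ and examine its trajectories under the assumption that no exit is ever found (the ``default'' trajectories); by hypothesis these leave both $A$ and $p$ unvisited up to time $t^*$. I would then consider the two inputs $I_A$ and $I_p$ in which the exit is placed at $A$ and at $p$ respectively. Since an agent obtains information about the exit only by physically reaching it, and neither $A$ nor $p$ is visited before $t^*$, the executions on $I_A$ and on $I_p$ coincide with the default execution, and hence with each other, up to the first moment $t_1\geq t^*$ at which some agent $R_i$ reaches one of the two points, say $q_1\in\{A,p\}$; write $q_2$ for the other point.

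The core step is the adversary's choice: declare the true exit to be at $q_2$. This is legitimate because $q_2$ is still unvisited at $t_1$, so this input is consistent with the common execution observed so far. On this input every agent must eventually stand at $q_2$; in particular $R_i$, which is located at $q_1\neq q_2$ at time $t_1$, must still traverse the straight-line distance $|q_1q_2|=|Ap|$. Since agents move at speed at most $1$, $R_i$ cannot reach the exit before time $t_1+|Ap|\geq t^*+|Ap|=1+4y$. As evacuation requires all agents, and in particular $R_i$, to reach the exit, we conclude $E_{\mathcal{A}}(2,r)\geq 1+4y$, and since $\mathcal{A}$ was arbitrary, $E^*(2,r)\geq 1+4y$.

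I expect the only delicate point to be the bookkeeping of the indistinguishability step: verifying carefully that up to $t_1$ the agents genuinely cannot separate $I_A$ from $I_p$, and that at $t_1$ the adversary may still freely assign the exit to whichever candidate remains unvisited. A feature worth stressing is that the communication range $r$ plays no role at all here, since no amount of messaging can transport $R_i$'s body; this is exactly why the observation certifies the stronger, $r$-independent estimate $1+4y$ rather than $1+4y-r$. The geometric input reduces to the triangle inequality, namely that the shortest path from $q_1$ to $q_2$ has length $|Ap|$, so the speed constraint converts the residual distance directly into the required $|Ap|$ units of time.
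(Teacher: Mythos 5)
Your argument is correct: this is the standard two-candidate adversary argument (place the exit at whichever of $A$ and $p$ is reached later; the agent standing at the other point at time $t_1\geq 1+4y-|Ap|$ must still travel the straight-line distance $|Ap|$, which is feasible inside the convex triangle, giving $t_1+|Ap|\geq 1+4y$), and your observation that communication range $r$ is irrelevant because the bound constrains physical travel rather than information is exactly the right justification for the $r$-independent conclusion. The paper itself offers no proof of this observation --- it is imported verbatim from \cite{ChuangpishitMNO17} --- and your reconstruction matches the intended argument; the only cosmetic caveat is that the conclusion is really $E_{\mathcal{A}}(2,r)\geq 1+4y$ for the particular algorithm satisfying the hypothesis (which is how the paper applies it), rather than a bound on the unconditional optimum $E^*$.
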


\begin{lemma} \label{lem:lem5}
	If $R_2$ visits $B$, $M_3$ and $M_1$, and $R_1$ visits $C$ and $M_2$, then $E_{\mathcal{A}}(2,r)\geq 1+4y-r$.
\end{lemma}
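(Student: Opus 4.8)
The plan is to argue by contradiction in the style of Case 3 of Theorem \ref{th:lb2}: assume $E_{\mathcal A}(2,r)<1+4y-r$ and derive a contradiction from the standing hypotheses that $v_5=M_2$ is reached by $R_1$ at time $t_{M_2}$ with $1+y\le t_{M_2}<4y+0.5-r$, that $v_6=A$, and that $R_2$ explores $\{B,M_3,M_1\}$ while $R_1$ explores $\{C,M_2\}$ (so $A$ is reached by $R_1$ after $M_2$). First I would collect the metric data that the argument needs: $|OM_i|=y$, $|OA|=|OB|=|OC|=2y$, $h=3y$, $|AM_1|=3y$, $|AM_3|=1/2$ and $|AB|=1$, together with the single structural fact that $B$, $M_1$, $M_3$ are the corners of an equilateral triangle of side $1/2$. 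That fact forces the second and third of $R_2$'s visits to occur no earlier than $1/2+y$ and $1+y$ respectively; combined with $A=v_6$, which gives $t_A>t_{M_2}\ge 1+y$, this is the timing backbone of the proof.

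The main engine is Observation \ref{obs:Ap}, used with $A$ together with the last point $w\in\{B,M_1,M_3\}$ that $R_2$ visits (so $t_w\ge 1+y$). If $w=B$ then $1+4y-|AB|=4y<1+y\le t_w$, and since $A$ is unvisited at time $4y$ the observation yields $E^*(2,r)\ge 1+4y>1+4y-r$; if $w=M_1$ the threshold $1+4y-|AM_1|=1+y$ is matched by $t_w$, and the same conclusion follows. Thus in every order in which $R_2$ finishes at $B$ or $M_1$ I would close the case immediately, in fact with the stronger bound $1+4y$.

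The residual situation is exactly the order(s) in which $R_2$ visits $M_3$ last: there $1+4y-|AM_3|=1/2+4y$ exceeds $1+y$, so Observation \ref{obs:Ap} is too weak and the communication range must cost us the additive $r$. Here I would invoke the Generalized Meeting Lemma \ref{lem:meeting_lemma}, exploiting that $M_2$ lies on the edge $CA$ opposite the vertex $B$ and that $M_1$ lies on the edge $BC$ opposite the vertex $A$, so that the late-visited left-hand points of $R_2$ are ``opposite'' to points on $R_1$'s route. Choosing, according to the order, an opposite pair whose first point is reached at a time $t'\ge 1/2+y$ and whose second point is reached by the other agent at a time $t$ with $t'<t<0.5+4y-r$ --- the upper bound being exactly the hypothesis $t_{M_2}<4y+0.5-r$ --- produces a communication blackout on $(t',t)$. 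With the exit placed adversarially (at $A$, or at the late left-hand point, whichever the order makes worse), the finder reaches it only after $t_{M_2}$, the blackout prevents the other agent from being warned earlier, and a distance-versus-interception estimate then forces the second arrival at the exit to be no sooner than $1+4y-r$.

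The hard part will be this residual analysis. Making the case split genuinely exhaustive --- over both orders ending in $M_3$, over the mirror-image assignments, and over whether $A$ is claimed by $R_1$ or $R_2$ --- and exhibiting in each a valid opposite pair is nontrivial: for instance in the order $B\to M_1\to M_3$ the obvious pairs fail (the vertex $B$ is reached too early for its pair, and the window to $A$ ends after $0.5+4y-r$), so one must either choose the exit location differently or combine two applications of Lemma \ref{lem:meeting_lemma}. The delicate quantitative point is then the $r$-interception geometry: I must show that when the blackout lifts the lagging agent is still separated from the exit by essentially the full height $3y$, less only the interception slack $r$, so that exactly one $r$ --- and no more --- is subtracted, yielding $1+4y-r$. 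Everywhere outside the residual order Observation \ref{obs:Ap} already delivers the stronger bound $1+4y$ almost for free.
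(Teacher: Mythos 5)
Your decomposition by the last point $w\in\{B,M_3,M_1\}$ on $R_2$'s tour handles two of the three subcases cleanly: for $w=B$ the threshold $1+4y-|AB|=4y$ precedes $t_w\ge 1+y$, and for $w=M_1$ the threshold $1+4y-|AM_1|=1+y$ is (just barely, and you should handle the equality case) not exceeded, so Observation \ref{obs:Ap} applied to the pair $(A,w)$ gives the stronger bound $1+4y$. But the subcase you yourself flag as ``residual'' --- $R_2$ finishing at $M_3$ --- is not a loose end to be tidied; it is where the entire content of the lemma lives, and your proposal does not prove it. Since $|AM_3|=\nicefrac{1}{2}$, the Observation \ref{obs:Ap} threshold is $\nicefrac{1}{2}+4y\approx 1.65$, while $t_{M_3}$ can be as small as $1+y\approx 1.29$, so the pair $(A,M_3)$ gives nothing; and as you note, the candidate opposite pairs for Lemma \ref{lem:meeting_lemma} each fail a hypothesis in at least one of the two orders (e.g.\ in $B\to M_1\to M_3$ the vertex $B$ can be reached at time $2y<\nicefrac{1}{2}+y$, so the pair $(B,M_2)$ is inadmissible, and $t_A$ need not fall below $0.5+4y-r$ for the pair $(A,M_1)$). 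You sketch a repair (``choose the exit location differently or combine two applications'') and a final $r$-interception estimate, but neither is carried out, and the claim that ``exactly one $r$ is subtracted'' is precisely the quantitative statement that would need a proof. As it stands this is a plan with an acknowledged hole, not a proof.

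For comparison: the paper itself does not reprove the lemma but asserts that the proof of Lemma~5 of \cite{ChuangpishitMNO17} goes through unchanged, and the proof of Lemma \ref{lem:lem6} reveals what that argument is --- one exhibits an unvisited point $P$ on the segment $CM_2$ (on $R_1$'s side of the triangle, which $R_2$, being committed to $B$, $M_3$, $M_1$ until time at least $1+y$, cannot reach in time) that is still unvisited at time $1+4y-|AP|$ while $A$ is also unvisited, and invokes Observation \ref{obs:Ap} once. That route never conditions on the order of $R_2$'s visits, never touches the Generalized Meeting Lemma, and delivers the stronger conclusion $E_{\mathcal{A}}(2,r)\ge 1+4y$ uniformly. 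The lesson is that your choice of witness pair $(A,w)$ with $w$ on $R_2$'s side is what creates the hard residual case; moving the witness point onto $R_1$'s side dissolves it. I would either switch to that argument or, if you want to keep your decomposition, actually supply the missing analysis for the order(s) ending in $M_3$.
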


\begin{figure}[ht]
\centering
\begin{subfigure}{.5\textwidth}
  \centering
  \includegraphics[scale=0.26]{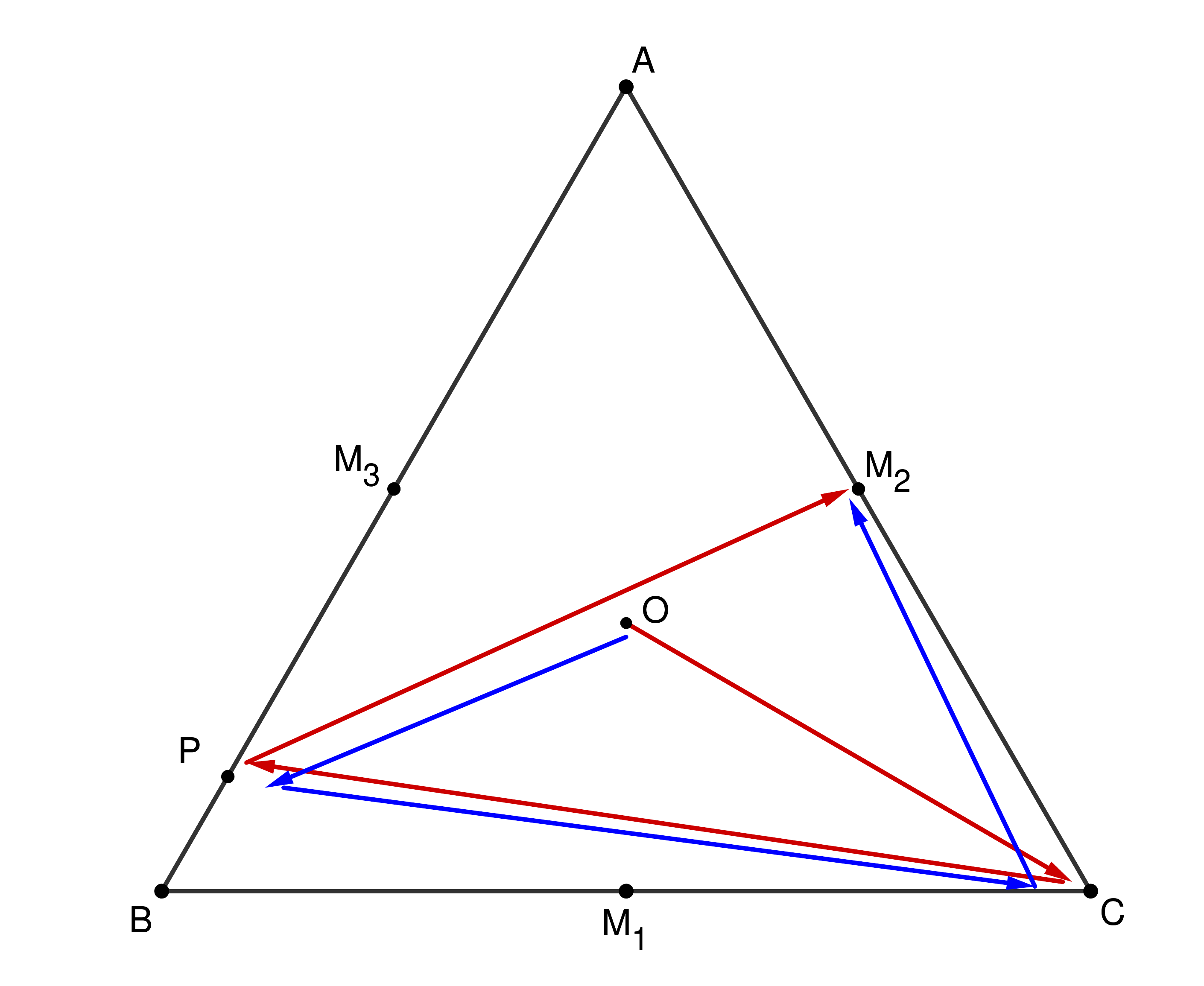}
  \caption{}
  \label{subfig:lemma-5-a}
\end{subfigure}%
\begin{subfigure}{.5\textwidth}
  \centering
  \includegraphics[scale=0.26]{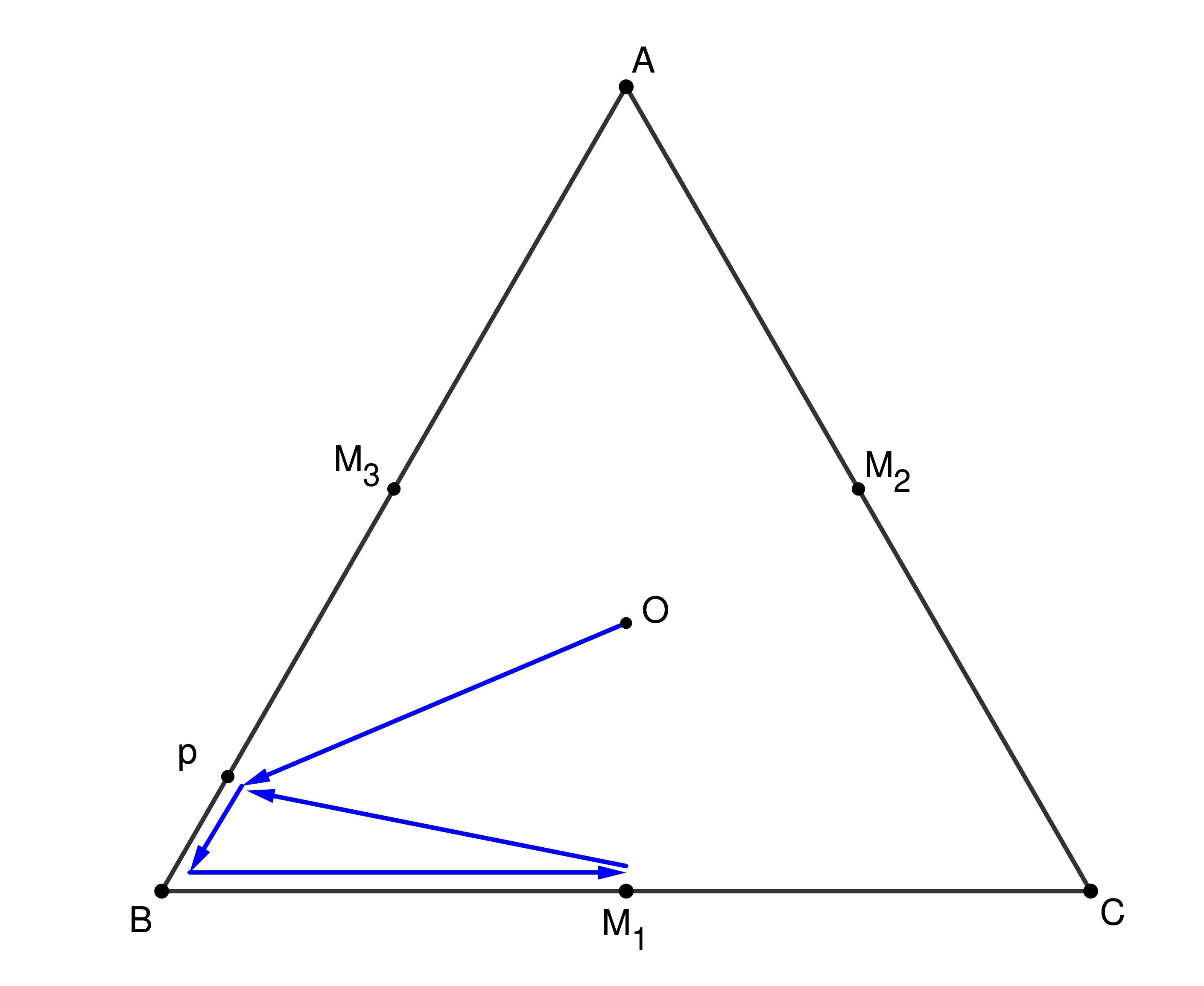}
  \caption{}
  \label{subfig:lemma-5-b}
\end{subfigure}
\caption{Illustration of (a) possible trajectories of $R_1$ and (b) trajectory of $R_2$ in support of Lemma \ref{lem:lem5} }
\label{fig:lemma-5}
\end{figure}
		
\begin{proof}  Identical to the proof of Lemma 5 in
  \cite{ChuangpishitMNO17}.
\end{proof}
\begin{lemma} \label{lem:lem6}
	If $R_2$ visits $B$, $M_3$, and $R_1$ visits $C$, $M_1$ and $M_2$, then $E_{\mathcal{A}}(2,r)\geq 1+4y-r$.
\end{lemma}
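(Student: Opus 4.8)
The plan is to recycle the adversary argument behind Lemma~\ref{lem:lem5} (equivalently, the corresponding lemma of \cite{ChuangpishitMNO17}) and upgrade it to positive range: every place where the $r=0$ proof forces the two agents to be \emph{co-located} before they can exchange the exit location, I would instead invoke the Generalized Meeting Lemma (Lemma~\ref{lem:meeting_lemma}) to forbid an $r$-interception during a critical time window. This substitution is exactly what degrades the face-to-face bound $1+4y$ into $1+4y-r$: the committed agent can now be told of the exit as soon as it is within distance $r$ of its partner, which can save it at most $r$ of travel.

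First I would record the constraints inherited from the surrounding case. We are given $v_5=M_2$, $v_6=A$, and that $R_1$ reaches $M_2$ at time $t_{M_2}\in[1+y,\,0.5+4y-r)$, with $A$ reached strictly later; in this lemma $R_1$ is the agent visiting $C,M_1,M_2$ while $R_2$ visits $B,M_3$ and one of them visits $A$ sixth. The structural consequence I want to extract is that $R_2$ spends its entire search on side $AB$ and finishes at $A$, so that the sub-segment of $BC$ adjacent to $B$ (between $M_1$ and $B$) is covered by neither agent before time $1+y$, while $A$ itself is unvisited until $t_{v_6}>1+y$. The point of choosing the exit on $BC$ is geometric: every point $p\in BC$ satisfies $|Ap|\ge h=3y$, so an agent that must reach $p$ from the $A$-side is forced to cross a full height.

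The adversary then places the exit at a point $p$ on this unvisited part of $BC$. I would first try Observation~\ref{obs:Ap}: if at time $1+4y-|Ap|$ both $A$ and $p$ are still unvisited, we immediately get $E^{*}(2,r)\ge 1+4y\ge 1+4y-r$. The remaining case is when the algorithm manages to visit $p$ a little earlier; there the exit is discovered by $R_1$ (the agent handling $BC$) at some time $t'\ge 0.5+y$, while $R_2$ is at the opposite position $A$ at a time $t$ with $t'<t<0.5+4y-r$. The Generalized Meeting Lemma then forbids any $r$-interception between $t'$ and $t$, so on the switched input in which the exit sits at $p$ the two agents behave identically up to $t$; $R_2$ can be informed only once it first comes within distance $r$ of $R_1$, and this shaves at most $r$ off its residual walk. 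Hence $R_2$ still travels at least $h-r=3y-r$ after time $1+y$, giving $E_{\mathcal{A}}(2,r)\ge (1+y)+(3y-r)=1+4y-r$.

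The step I expect to be the main obstacle is the bookkeeping forced by the freedom in how $R_1$ orders its visits to $C,M_1,M_2$: since $M_1$ can be reached as early as time $y$, I must select the opposite pair and the exit location so that the Generalized Meeting Lemma's lower threshold $t'\ge 0.5+y$ genuinely holds and so that the committed agent is actually pinned to the far vertex throughout the whole blackout window $(t',t)$, in particular verifying that $t=t_A$ can be taken below $0.5+4y-r$ under the standing assumption $E_{\mathcal{A}}(2,r)<1+4y-r$. Establishing that the earliest admissible $r$-interception still leaves $R_2$ at distance at least $h-r$ from the exit is the delicate quantitative point, and it is precisely where this argument diverges from the $r=0$ proof of Lemma~\ref{lem:lem5}.
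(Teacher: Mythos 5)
Your high-level plan (adversary argument plus the Generalized Meeting Lemma in place of co-location) matches the paper's, but the step you yourself flag as ``the main obstacle'' is in fact the entire content of the paper's proof of this lemma, and your sketch of how to get past it does not work. The Generalized Meeting Lemma only forbids communication while the second agent has yet to reach its opposite point \emph{before} time $0.5+4y-r$; the paper applies it to the pair $(C,M_3)$, which blacks out only the interval $(t_C,t_{M_3})$, and $t_{M_3}<0.5+4y-r$ holds because $M_3$ is among the first five points visited. You instead want to apply it to the pair $(p,A)$ with $t=t_A$, but $A=v_6$ is the \emph{last} point visited, so $t_A$ is only bounded by $1+4y-r$, the lemma's hypothesis can fail, and nothing prevents $R_1$ from $r$-intercepting $R_2$ while $R_2$ travels from $M_3$ up to $A$. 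Ruling out exactly such an interception is what the paper's quantitative argument does: it places the exit at $C$, lets $Q$ be the last point of $AB$ visited by $R_2$ before interception, sets $a=|QN_2|$ and $b=|N_1C|$ for the interception positions $N_1,N_2$, derives upper bounds on $a$ and $b$ from the timing constraints $t_B\geq|OP|+|BP|$ and $t_C\geq|OP|+(1-|BP|)$ (which in turn require first proving that $R_2$ visits $B$ before $M_3$, that $R_1$ visits $M_1$ before $C$, and that all of $BM_1$ is explored by the first communication time), and contradicts the geometric fact $a+b+r\geq 3y$. Your proposal contains no substitute for this computation.

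A second, more local error: your closing inequality $E_{\mathcal{A}}(2,r)\geq(1+y)+(3y-r)$ rests on the claim that being informed within range $r$ ``shaves at most $r$ off the residual walk.'' Communication range does not shorten the walk to the exit --- $R_2$ must still physically reach $p$, and being within distance $r$ of $R_1$ is not being within distance $r$ of the exit. In fact, if no communication occurred before $R_2$ reaches $A$ at time $t\geq 1+y$, the triangle inequality already gives $E\geq t+|Ap|\geq 1+4y$ with no $-r$ at all; the $-r$ in the lemma's statement arises from the shortened blackout window and the interception analysis, not from a discount on $R_2$'s travel. Finally, your structural premise that the portion of $BC$ adjacent to $B$ stays unexplored is not justified in this case: the paper shows $R_2$ must visit $B$ before $M_3$ and that the whole of $BM_1$ must be explored (partly by each agent) before the agents can exchange information, and it is precisely this split of $BM_1$ at the point $P$ that feeds the final contradiction.
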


\begin{figure} [h]
	\centering
	\includegraphics[scale=0.28]{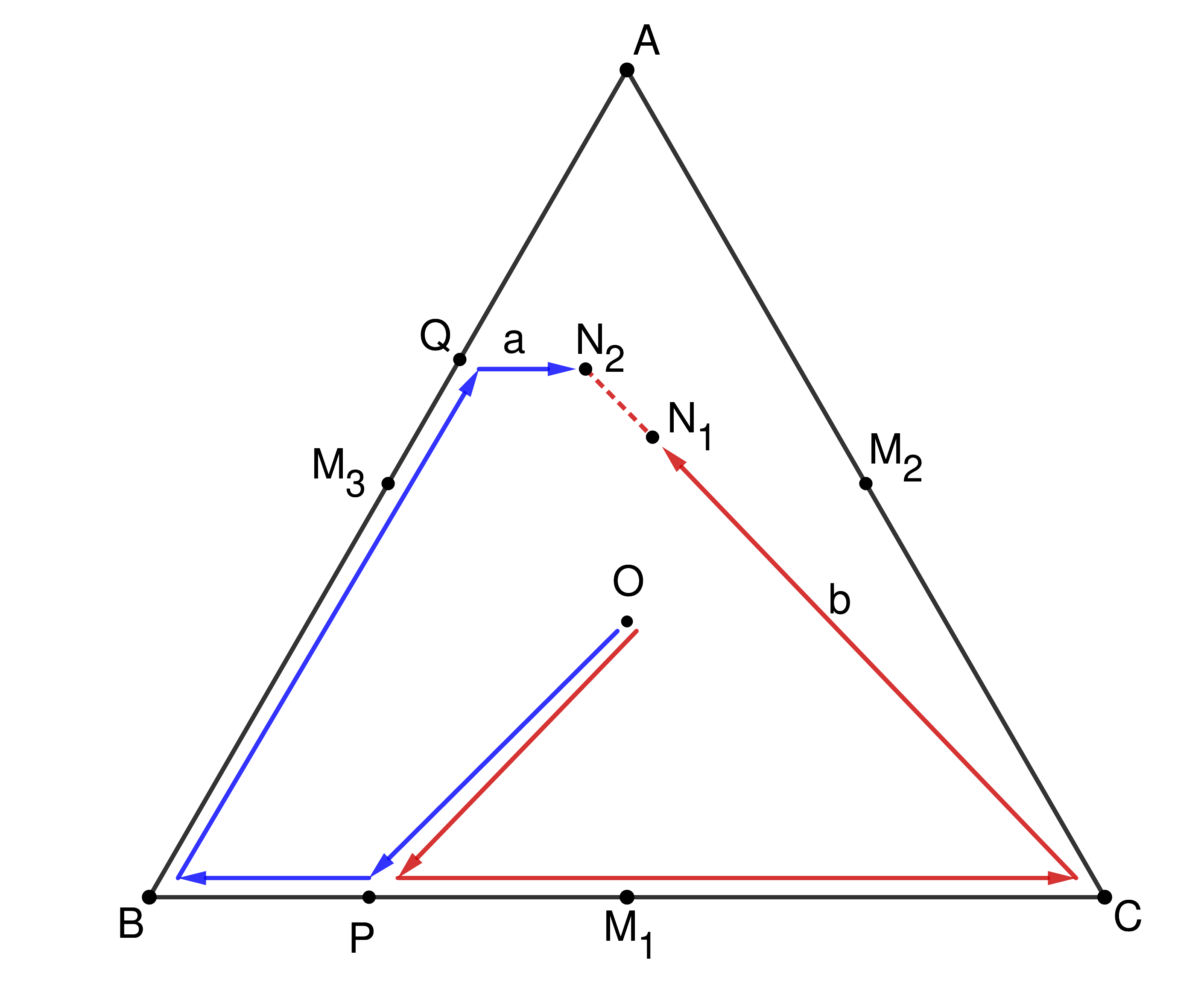}
	\caption{Trajectories when $R_2$ visits $B$ before $M_3$ and $R_1$ visits $M_1$ before $C$}
	\label{fig:lemma-6}
\end{figure}

\begin{proof}
	Recall that $t_{M_2}\geq 1+y$. If $R_1$ and $R_2$ don't meet between $t_B$ and $t_{M_2}$ then by placing the exit at $B$, it will take at least $1+4y$ for $R_1$ to get to the exit. So we conclude that they should communicate with each other between $t_B$ and $t_{M_2}$.
	If $R_2$ visits $M_3$ before $B$, then $t_B\geq 0.5+y$ and since $t_{M_2}<0.5+4y-r$, by the Generalized Meeting Lemma they could not exchange any information between $t_B$ and $t_{M_2}$. Therefore $R_2$ must visit $B$ before $M_3$.
	
	Now if $R_1$ visits $C$ before $M_1$, using a similar argument as in Lemma \ref{lem:lem5}, it can be verified that an unvisited point $P$ exists on segment $CM_2$ at time $1+4y-|AP|$. It follows from the Observation \ref{obs:Ap} that $E_{\mathcal{A}}(2,r)\geq 1+4y$. Hence $M_1$ should be visited before $C$.
	
	Observe that $t_C\geq 0.5+y$ and $t_{M_3} <0.5+4y-r$, so by the Generalized Meeting Lemma $R_1$ and $R_2$ cannot exchange information between time $t_C$ and $t_{M_3}$. We conclude that they should exchange information after $t_C$ and after $t_{M_3}$. At this point, the whole segment $BM_1$ must be visited; because if not, before time $t_{M_3}+3y/2$ there exist an unvisited point on $BM_1$ segment with distance at least $3y$ from $A$ and by Observation \ref{obs:Ap}, $E_{\mathcal{A}}(2,r)\geq t_{M_3}+9y/2\geq 0.5+13y/2$. Thus there must be a point $P$ on segment $BM_1$ such that $BP$ is explored by $R_2$ and $CP$ is explored by $R_1$. Thus $t_B\geq |OP|+|BP|$ and $t_C\geq |OP|+(1-|BP|)$. Suppose the exit is located at $C$, if the agents don't communicate before $R_2$ reaches $A$, then clearly $E_{\mathcal{A}}(2,r)\geq 1+4y$. So they have to exchange information before $R_2$ reaches $A$. Let $Q$ be the last point visited by $R_2$ on segment $AB$, before either it gets $r$-intercepted by $R_1$ or moves inside $T$ in order to get closer to the other agent. Let the interception points be $N_1$ and $N_2$, see Figure \ref{fig:lemma-6}, and let $a=|QN_2|$ and $b=|N_1C|$. Clearly points $Q$ and $N_2$ will merge and $a=0$, if $R_2$ is $r$-intercepted on segment $AB$.
	
	Note that $R_1$ reaches point $N_1$ at the same time that $R_2$ reaches point $N_2$ and they are $r$-apart. Observe that:
	
	\begin{description}
		\item[] For $R_2$ to get to the exit at $C$ in time, it must be that $t_{N_2}+r+b<1+4y-r$ and since $t_{N_1}\geq t_C+b$, we obtain:
		\begin{equation} \label{eq:b}
			b<\frac{1+4y-2r-t_C}{2}\leq \frac{1+4y-2r-|OP|-1+|BP|}{2}
		\end{equation}
		
		\item[] Now let $Q'$ be a point infinitesimally close to $Q$ on segment $QA$ and visited at time $t_{Q'}$. Then by Observation \ref{obs:Ap}, we have $t_{Q'}<1+4y-r-|AQ'|$. Also we have $t_{Q'}\geq t_{N_2}+a\geq t_B+|BQ'|+2a$, which implies that 
		\begin{equation} \label{eq:a}
			a<\frac{1+4y-r-|AQ'|-|BQ'|-t_B}{2}\leq \frac{1+4y-r-1-(|OP|+|BP|)}{2}
		\end{equation}
	\end{description}
		Note that $a+b+r\geq 3y$, yet from inequalities \ref{eq:b} and \ref{eq:a} we get:
		$$a+b+r<4y-|OP|-r/2\leq 3y$$
		which is a contradiction. We conclude that these $r$-interception points do not exist and if the exit is located at $C$, agent $R_2$ cannot get to the exit point in time.

\end{proof}
\begin{lemma} \label{lem:lem7-9}
	If $R_2$ visits $C$ and at least one of points $M1$ or $M_3$, and $R_1$ visits $B$ and at most one of points $M_1$ or $M_3$, then $E_{\mathcal{A}}(2,r)\geq 1+4y-r$.
\end{lemma}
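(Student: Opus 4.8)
The plan is to suppose, for contradiction, that some algorithm $\mathcal{A}$ achieves $E_{\mathcal{A}}(2,r)<1+4y-r$, and to push the same two tools used for Lemmas~\ref{lem:lem5} and~\ref{lem:lem6}: the Generalized Meeting Lemma (Lemma~\ref{lem:meeting_lemma}), which forbids any exchange of information across a suitable time window between an agent at a vertex and an agent at the opposite midpoint, and Observation~\ref{obs:Ap}, which forces $E_{\mathcal{A}}(2,r)\geq 1+4y$ whenever $A$ and some point far from $A$ are simultaneously unvisited at the critical time. Throughout I would use the standing hypotheses of this part of Case~3: all three midpoints are visited, $R_1$ reaches $M_2=v_5$ with $1+y\leq t_{M_2}<4y+0.5-r$, $R_1$ visits $B$ and $R_2$ visits $C$. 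Since $R_1$ visits at most one of $M_1,M_3$, I would split into sub-cases according to how $\{M_1,M_3\}$ is shared and which agent reaches $A=v_6$; these are exactly the configurations bundled into the statement.

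First I would clear away every configuration that exhibits an opposite-position pair realised by two \emph{different} agents, such as $R_1$ at $A$ together with $R_2$ at $M_1$, or $R_1$ at $M_3$ together with $R_2$ at $C$. In such a configuration I plant the exit at the vertex that only one of the two agents ever visits, so the other agent must either be warned or travel there itself. The opposite-position pair together with the timing $t'\geq 0.5+y$ and $t_{M_2}<4y+0.5-r$ lets Lemma~\ref{lem:meeting_lemma} rule out any warning during the window in which it would be useful; consequently the uninformed agent must cover the full vertex-to-opposite-side distance $3y$, and an elementary estimate (or a direct appeal to Observation~\ref{obs:Ap}) yields evacuation time at least $1+4y>1+4y-r$. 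These arguments reproduce, with $B$ and $C$ interchanged, the reasoning of Lemmas~\ref{lem:lem5} and~\ref{lem:lem6}, so I expect them to be routine.

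The configuration I expect to be the real obstacle is the one in which $R_1$ sweeps an entire side together with the opposite vertex while $R_2$ contributes only the complementary vertex and one midpoint (concretely, $R_1$ visits $B$, $M_1$, $M_2$, $A$, and $R_2$ visits only $C$ and $M_3$), because then no opposite pair is split between the agents and communication is geometrically possible, so Observation~\ref{obs:Ap} alone cannot finish the job. Here I would copy the fine interception analysis of Lemma~\ref{lem:lem6}: first argue, via Observation~\ref{obs:Ap}, that the relevant half of side $BC$ must already be covered, so there is a split point $P$ with the two complementary arcs explored by the two agents, giving the analogues of $t_B\geq |OP|+|BP|$ and $t_C\geq |OP|+(1-|BP|)$. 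Planting the exit at the vertex the far agent never visits, I would introduce the two $r$-interception offsets $a$ and $b$ (the distances from the last boundary point each agent occupies to the configuration in which they are $r$-apart), bound each from above through Observation~\ref{obs:Ap} applied to a point infinitesimally inside the still-unexplored arc and through the travel-time inequalities, and then contradict the geometric lower bound $a+b+r\geq 3y$ forced by the opposite positions.

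The hard part will be exactly this last piece of bookkeeping: making the two upper bounds on $a$ and $b$ tight enough that their sum is driven strictly below $3y-r$. As in Lemma~\ref{lem:lem6}, the slack needed comes from evaluating Observation~\ref{obs:Ap} at a point a hair inside the unexplored arc (which recovers an extra $|OP|$ and a further $r/2$) and from the two travel-time inequalities for the split point $P$; it is this configuration, where the agents \emph{can} in principle meet, that accounts for the weaker target $1+4y-r$ rather than $1+4y$. Once the chain of estimates delivers $a+b+r<3y$, the desired contradiction is reached and the lemma follows.
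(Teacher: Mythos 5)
Your plan matches the paper's intended argument: the paper gives no details for this lemma, stating only that the proof is identical to that of Lemma~7 in \cite{ChuangpishitMNO17}, and your reconstruction follows exactly the template the paper uses for Lemma~\ref{lem:lem6} --- the Generalized Meeting Lemma plus Observation~\ref{obs:Ap} to dispose of configurations with an opposite-position pair split between the agents, and the $a$, $b$ interception bookkeeping against $a+b+r\geq 3y$ for the remaining configuration. The one step you leave unexecuted, actually verifying that the estimates drive $a+b+r$ strictly below $3y$ in your hard sub-case, is likewise not carried out in the paper, so your proposal is as complete as the source it is being measured against.
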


\begin{proof}
  Identical to the proof of Lemma 7 in
  \cite{ChuangpishitMNO17}.
\end{proof}

The following graph and Table \ref{tab:r2twodetour} illustrates the evacuation time of two agents for different values for $r$ and different algorithms.

\begin{figure}[h]
  \centering
  \includegraphics[scale=0.35]{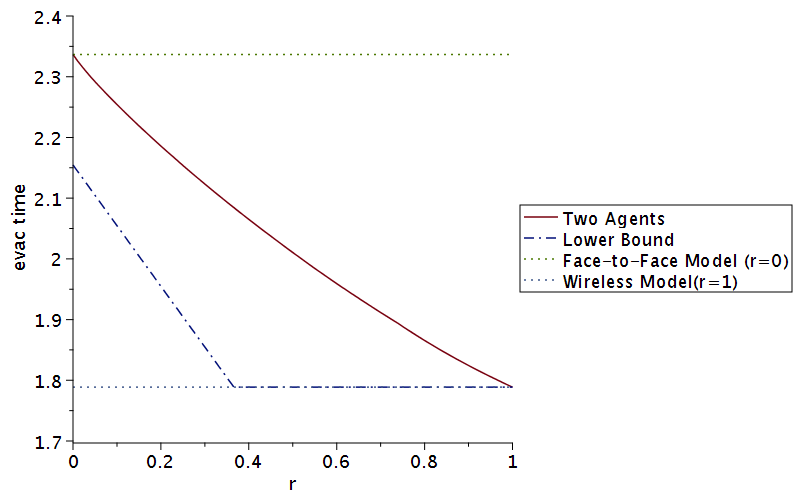}
 \caption{Comparison of evacuation time of 2 agents with transmission range of $r=0$, $r<1$ and $r=1$}
 \label{fig:2r-comparison}
\end{figure}

\begin{table}

	\begin{center}
		\begin{tabular}{ |c|c|c|c|c| } 
		\hline
		$r$ & Ev. Time & 	Ev. Time &  Ev. Time  & Lower bound\\                 &  of \NoD& of \OneD& of \TwoD &  Th. \ref{th:lb2}\\
\hline
	0.10 & 2.25424 & 2.27422 & 2.53867& 2.0547 \\
	0.20 & 2.18584& 2.19427 &  2.36010&2.0447\\
	0.30 & 2.12325& 2.12651 & 2.22617& 2.0347\\
	0.40 & 2.06506& 2.06593 & 2.12200&N/A \\
        0.50 & N/A & 2.01050&2.03867& N/A\\
        0.60 & N/A & 1.95926&1.97049&N/A\\
        0.70 & N/A & 1.91169&1.91367&N/A\\
	\hline
	\end{tabular}
\end{center}
	\caption{Evacuation times of 2 Agents using algorithms with  two, one or no detours.}
	\label{tab:r2twodetour}
\end{table}

\section{Evacuation of Three or Four Agents}\label{sec:3a}

Algorithms for the evacuation of three agents from the centroid have been previously proposed for both 
 $r=0$ in \cite{ChuangpishitMNO17}, and $r=1$ in
\cite{EtrSq}. These two algorithms use very different trajectories. Our \XthreeC and \XoneC strategies described in Section \ref{sec:4-1} and \ref{sec:4-2} can be considered generalizations of the algorithm for 3 agents for the $r=0$ case in  \cite{ChuangpishitMNO17} and the $r=1$ case in \cite{EtrSq} respectively. However, the best partitioning of the perimeter into segments is non-trivial to find, and requires significant experimentation. Additionally, the positions where the agents should connect after their initial exploration is also not obvious for arbitrary $r$, while it should clearly be the centroid for the $r=0$ case and no meeting is required for the $r=1$ case.

We describe the \XthreeC algorithm for 3 agents for the case $r>0$ 
in Section \ref{sec:4-1} 
and  the \XoneC algorithm for three agents in Section \ref{sec:4-2}. Finally in Section~\ref{sec:4-agents}, we briefly describe the algorithm for 4 agents. 
An important consideration in the design of algorithms for three agents 
is the fact that we can use one of them as a {\em relay} which can extend the range at which an agent can send a message with the location of the exit.
Both algorithms that are proposed in the following subsections follow the generic Algorithm \ref{alg:3r}, but they differ in the trajectories assigned to each agent.

\begin{algorithm}[h]
	\caption{Evacuation Algorithm for Three and Four Agents.}\label{alg:3r}
		\begin{algorithmic}
			\Function{Exploration}{}
				\State found$\gets$ false
				\While{not\textless found\textgreater \, and not\textless msg\_recd\textgreater}
					\State move along the predetermined trajectory
				\EndWhile				
				\State \Call{action}{}
			\EndFunction
		\end{algorithmic}
		\begin{algorithmic}
			\Function{Action}{}
				\If{found}
					\State P $\gets$ current location
					\While{the other two agents are not in effective communication range}
						\State continue moving on the trajectory
					\EndWhile
				\EndIf
				\State broadcast $<P>$
				\State go to $P$ and exit
			\EndFunction
		\end{algorithmic}
\end{algorithm}

\subsection{\em Explore 3 sides before Connecting (X3C)}  \label{sec:4-1}
\begin{figure}[h]
\begin{subfigure}{0.5\textwidth}
	\centering
	\includegraphics[scale=0.23]{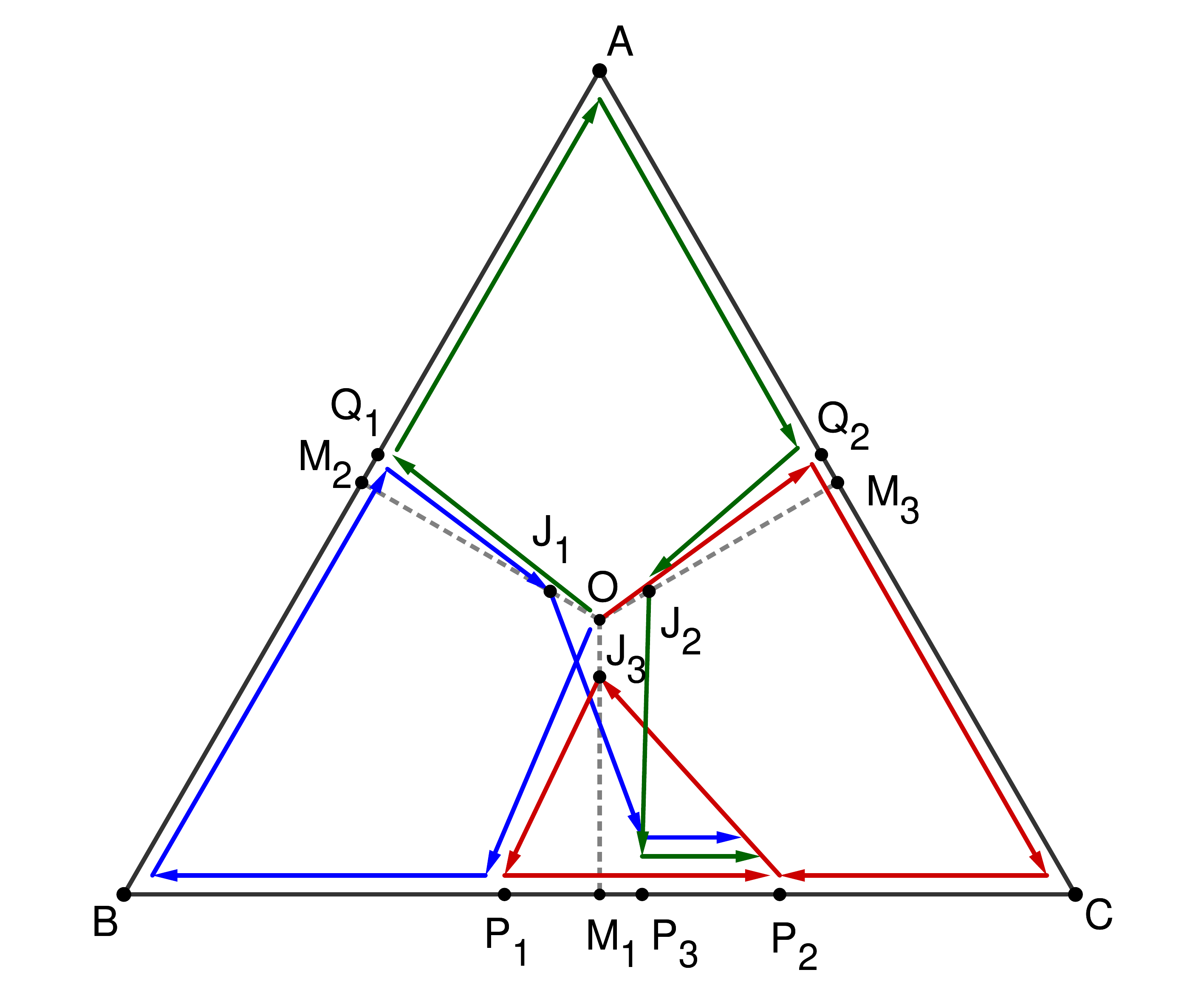}
	\caption{ \XthreeC Trajectories for 3 agents}
	\label{fig:r3asym}
\end{subfigure}
\begin{subfigure}{0.5\textwidth}
	\centering
	\includegraphics[scale=0.23]{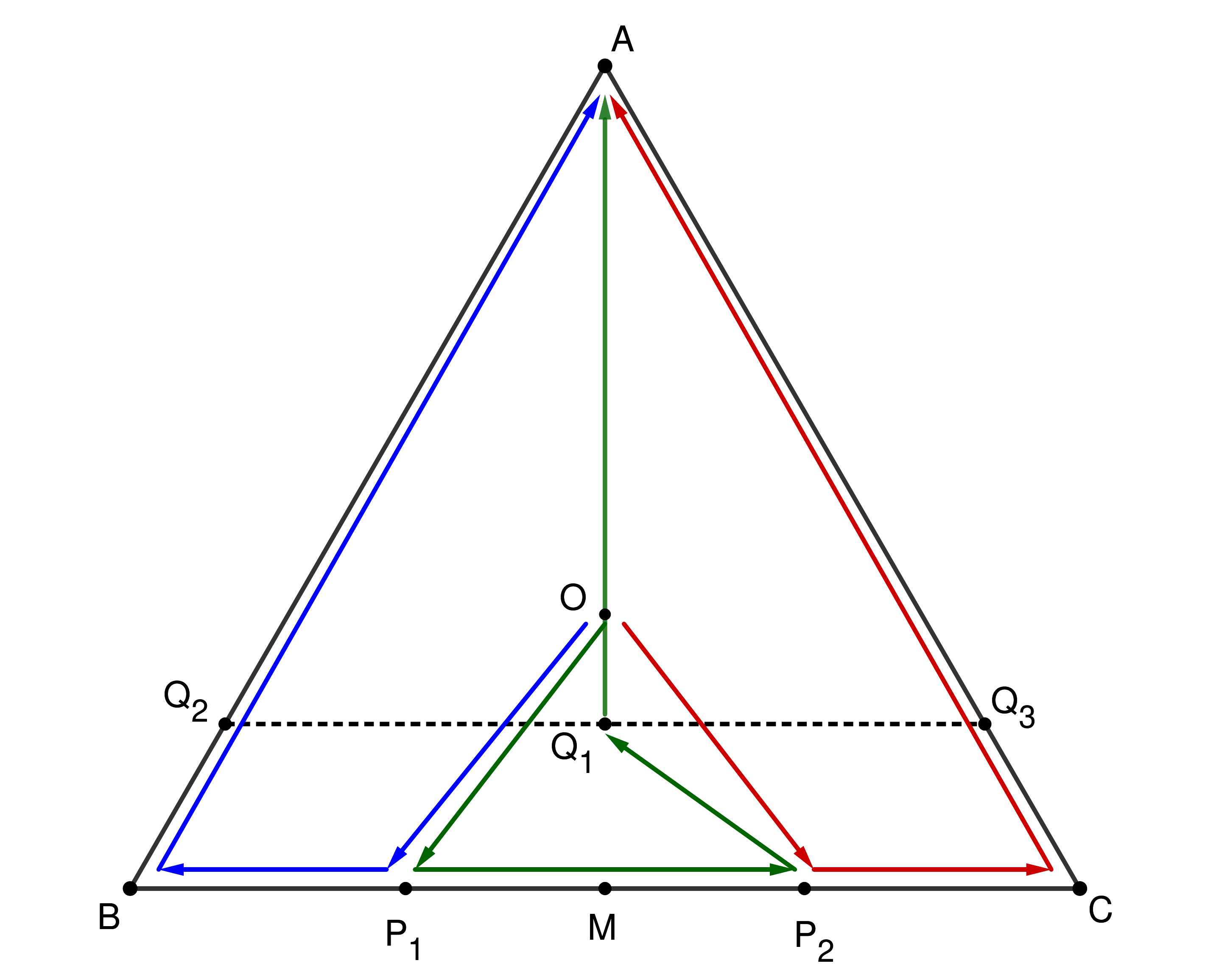}
	\caption{\XoneC trajectories for 3 agents}
	\label{fig:3r-semi-wireless}
\end{subfigure} 
\caption{}
\end{figure}
The {\em Explore 3 sides before Connecting (X3C)} Trajectories of the three agents
are  defined in Trajectories \ref{traj:3r-asym} and shown in Figure \ref{fig:r3asym}. 
We partition the perimeter of $T$ into 4 segments. Three of the segments are assigned to individual agents for exploration. After the exploration of these segments are finished, they move inside $T$ to {\em distributed meeting points} $J_1$, $J_2$ and $J_3$. These points have the following properties:
\begin{itemize}
	\item They are at distance $r$ from each other.
	\item Their distances to point $O$ are equal.
	\item Points $J_1$, $J_2$ and $J_3$ are located on line segments $OM_2$, $OM_3$ and $OM_1$ respectively.
\end{itemize}

After the information is exchanged at the distributed meeting points, if the exit is not found, they all move toward the fourth segment. 
\begin{trajectory}\label{traj:3r-asym} Three agents: \\
\hspace*{2cm}	$R_1: <O,P_1,B,Q_1,J_1,P_3 \mbox{ and wait for }R_3,P_2>$\\
\hspace*{2cm}	$R_2: <O,Q_1,A,Q_2,J_2,P_3 \mbox { and wait for }R_3,P_2>$\\
\hspace*{2cm}	$R_3: <O,Q_2,C,P_2,J_3,P_1,P_2>$
\end{trajectory}

From the above it follows that $|J-1C|=|J_2B|=|J_3A|$. At this point, due to the difference between the distance of each agent to point $P_1$, they don't move together. Only $R_3$ moves toward point $P_1$ and both $R_1$ and $R_2$ move toward $P_3$, the midpoint of segment $P_1P_2$ and wait there for $R_3$. If $R_3$ has found the exit, they move back toward point $P_1$ and if not, they move toward point $P_2$ together. It is obvious that $R_1$ and $R_2$ moving to $P_3$ does not have any negative effect on the worst case evacuation time, since if the exit is close to point $P_1$, agent $R_3$ from $P_1$ has to travel $\frac{|P_1P_2|}{2}$ to inform the other two agents and it takes another $\frac{|P_1P_2|}{2}$ for them to get to the exit, and if the exit is located near $P_2$, it again takes $|P_1P_2|$ for $R_3$ to get to the exit from $P_1$. 

We design the trajectories so that agents arrive at the distributed meeting points at the same time. Therefore we have:
\begin{enumerate}
	\item $t_1=|OP_1|+|P_1B|+|BQ_1|+|Q_1J_1|$
	\item $t_2=|OQ_1|+|Q_1A|+|AQ_2|+|Q_2J_2|$
	\item $t_3=|OQ_2|+|Q_2C|+|CP_2|+|P_2J_3|$
\end{enumerate}
On the other hand, at the end of the first phase when information is exchanged, there will be two critical points:
(1)	 for $R_2$ to reach point $B$, and
(2)	 for $R_3$ to finish the unexplored part of the triangle.
Putting the constraints together, we obtain the following equations:
\begin{enumerate}
	\item $t_1=t_2=t_3$ and
	\item $|J_2B|=|J_3P_1|+|P_1P_2|$
\end{enumerate}
Solving these equations with Maple software, we achieve the results.
Notice that if $r=0$, algorithm \XthreeC converges to Equal Travel Early Meeting 
algorithm in \cite{ChuangpishitMNO17} and 
our result for $r=0$ is identical to their results. 
This algorithm has the lowest evacuation time for $r=0.1761$,  and from then on
the total evacuation time starts to increase. 

\subsection{Explore 1 Side before Connecting (X1C)}\label{sec:4-2}
As $r$ increases, the evacuation time in the \XthreeC algorithm starts to get larger. The \XoneC strategy yields a lower evacuation time than \XthreeC for larger values of $r$. 
The trajectories of  agents are illustrated in Figure \ref{fig:3r-semi-wireless} and defined in Trajectories \ref{traj:3r-semi-wireless}.

\begin{trajectory}\label{traj:3r-semi-wireless}
$\ $ Three agents: \\
 \hspace*{2cm}	$R_1 \mbox{ follows the trajectory:} <O,P_1,B,A>$,\\
 \hspace*{2cm}	$R_2 \mbox{ follows the trajectory:} <O,P_1,P_2,Q_1,A>$,\\
 \hspace*{2cm}	$R_3 \mbox{ follows the trajectory:} <O,P_2,C,A>$,\\
 \hspace*{1cm}where $P_1$ and $P_2$ are located at the same distance from $M$.
\end{trajectory}

The locations of  points $Q_2$, $Q_1$ and $Q_3$ are selected so that
agents $R_1$, $R_2$ and $R_3$ reach them at the same time, say $t$, 
and after that, they will be in effective communication range with each other. 
\\
The location of point $Q_1$ depends on the value of $r$:

{\bf Case 1:} ($0 \leq r <0.5$) Point $Q_1$ is the midpoint of segment $Q_2Q_3$, where $Q_2$ and $Q_3$ are chosen so that $|AQ_2|=|AQ_3|=2r$. Since at time $t$ agents $R_1$ and $R_2$ should be at points $Q_2$ and $Q_1$ respectively, we have
	$|OP_1|+|P_1B|+|BQ_2|=|OP_1|+|P_1P_2|+|P_2Q_1|$.
	Using $|P_2Q_1|=\sqrt{|Q_1M|^2+(|P_1P_2|/{2})^2}$ and $|Q_1M|=3y-|AQ_1|$, we get
$(1-|P_1P_2|)/{2}+(1-2r)=|P_1P_2|+ $ \linebreak $ \sqrt{(3y-\sqrt{3r^2-4r+1})^2+(|P_1P_2|/{2})^2}$ 
from which $|P_1P_2|$ can be obtained as a function of $r$.
Evacuation time for this case is $|OP_1|+|P_1B|+|BQ_2|+|Q_2C|$.\\
{\bf Case 2:} ($0.5 \leq r <\nicefrac{2}{3}$) Point $Q_1$ is positioned on segment $MP_2$ such that 
	$|BP_1|=|P_1P_2|+|P_2Q_1| \mbox{ and } |BQ_1|=r$
	By solving these two equations we get $P_1P_2=r/2$. The evacuation time is $|OP_1|+|P_1B|+|BC|=\sqrt{y^2+({r}/{4})^2}+(\nicefrac{1}{2}-\nicefrac{r}{4})+1$.\\
{\bf Case 3:} ($r \geq \nicefrac{2}{3}$) Point $Q_1$ is at distance $\nicefrac{2}{3}$ from $B$. Then we have $|P_1P_2|=\nicefrac{1}{3}$ and the evacuation time is $|OP_1|+|P_1B|+|BC|=\sqrt{y^2+({1}/{6})^2}+{1}/{3}+1$.


As seen from Table \ref{tab:3r-results}, algorithm \XoneC has better evacuation time 
than \XthreeC for $r>0.22589$, and for $r\geq 0.7$ it achieves the same evacuation  
time as the algorithm in \cite{EtrSq} that uses  $r=1$. 

\subsection{Evacuation of Four Agents}
\label{sec:4-agents}
 \XthreeC and \XoneC can be generalized for 4 agents, with the difference that  the communication range for small value of $r$ can now be  extended to $2r$ in \XthreeC algorithm employing two agents as relays. 
\begin{figure}[t]
\begin{subfigure}{0.5\textwidth}
	\centering
	\includegraphics[scale=0.22]{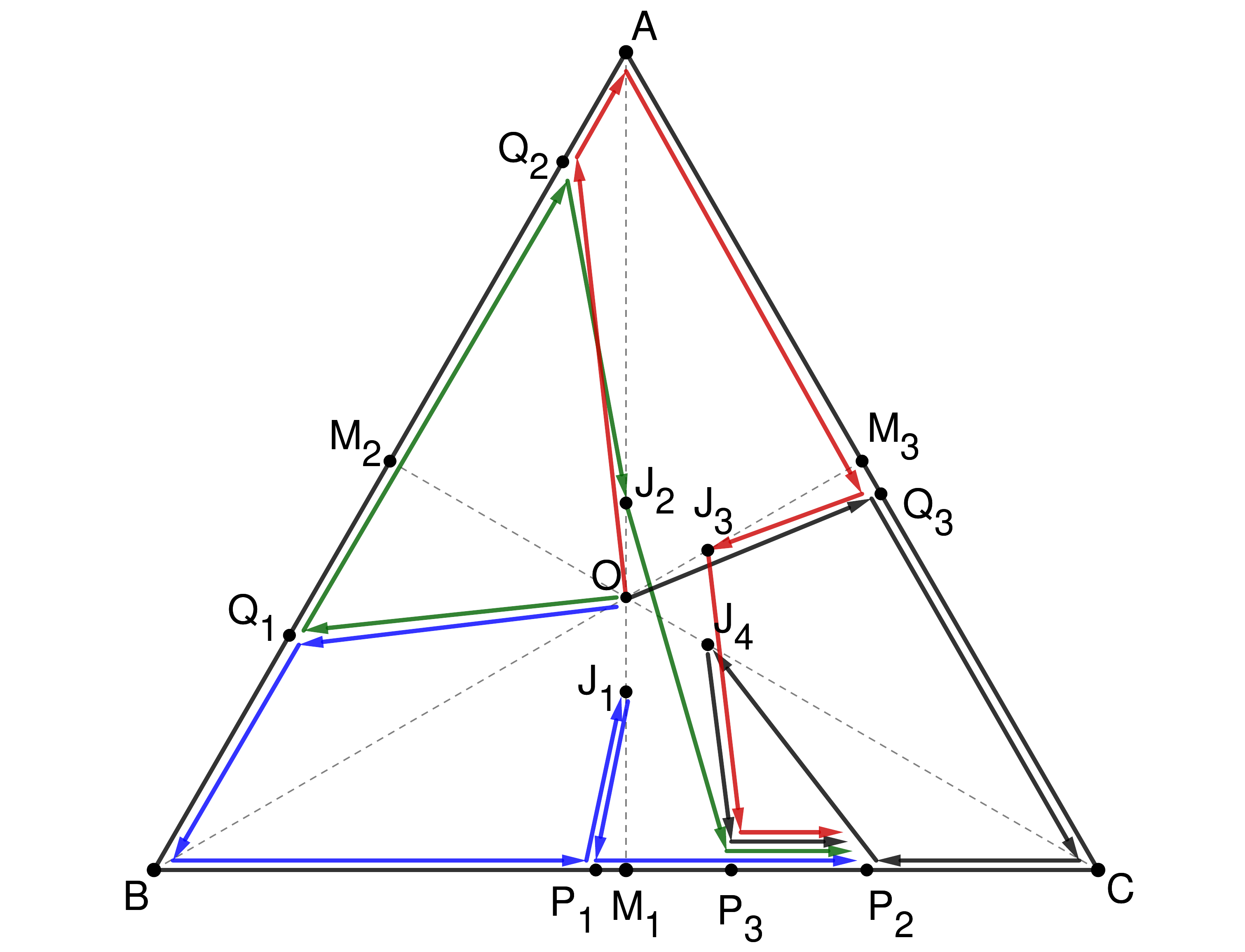}
	\caption{Illustration of \XthreeC algorithm for 4 agents}
	\label{fig:4asym}
\end{subfigure}
\begin{subfigure}{0.5\textwidth}
	\centering
	\includegraphics[scale=0.22]{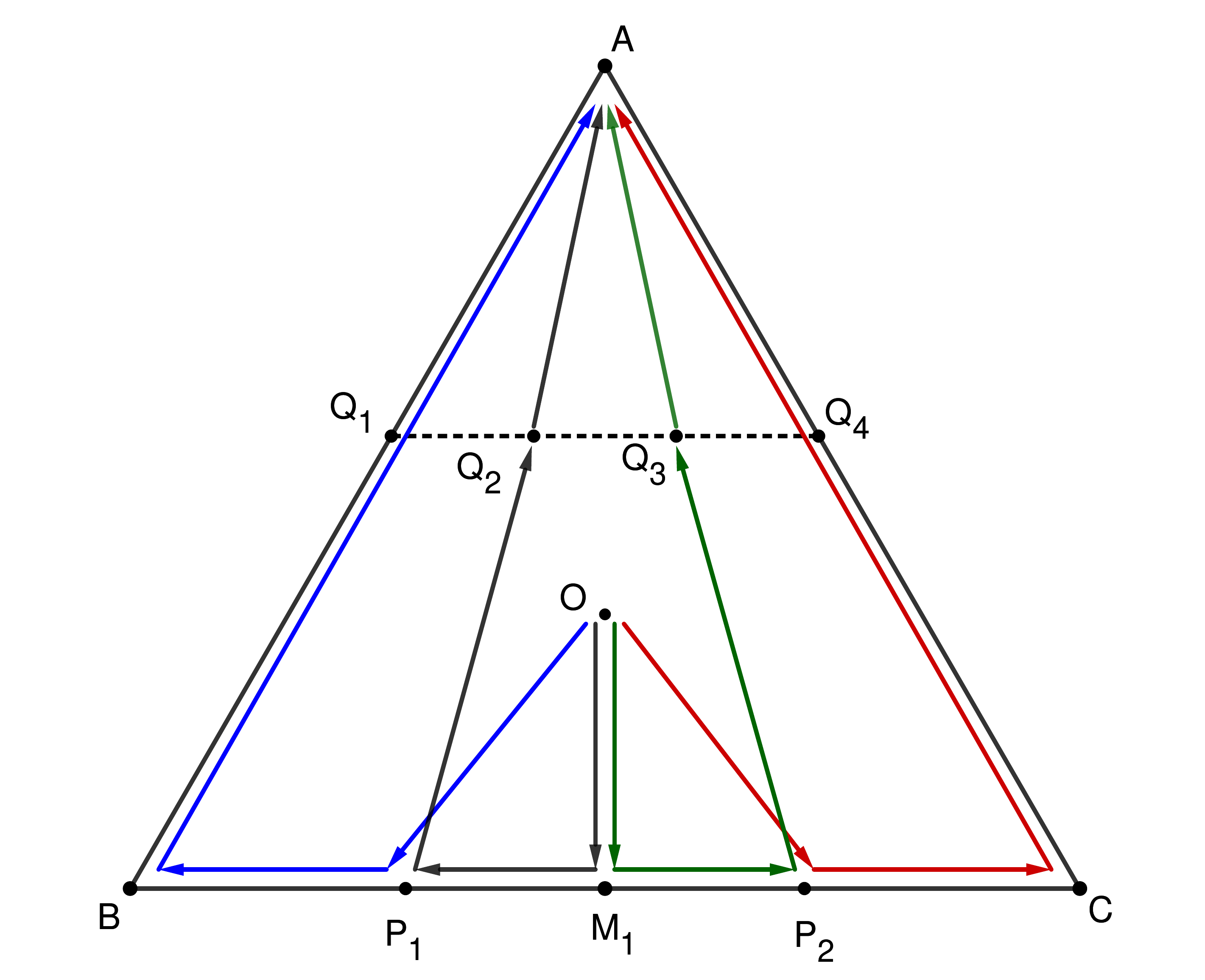}
	\caption{Illustration of \XoneC algorithm for 4 agents}
	\label{fig:4-semi-wireless}
\end{subfigure} 
\caption{}
\end{figure}
\begin{table}[!h]
\begin{center}
\begin{subtable}{0.45\textwidth}
	\begin{center}
		\begin{tabular}{ |c|c|c| } 
			\hline
		r  & \XthreeC &\XoneC  \\ \hline 
		0	&	2.08872		&	2.64971\\
			0.1 & 	2.07849		&	2.37052\\
			0.2 &	2.07642		&	2.13056\\
		0.22589&	2.07714		&	2.07572\\
			0.25 &	2.07828		&	2.02747\\
			0.3 & 	2.08210		&	1.93620\\
			0.4 & 	2.09689		&	1.78880\\
			0.5 & 	2.13037	 	&	1.68958\\
			0.6 & 	N/A	 		&1.67532\\
			$\geq 0.7$  & 	N/A	 		&1.666667\\
			\hline
		\end{tabular}
	\end{center}
\caption{ }
	\label{tab:3r-results}
\end{subtable}
\hfill
\begin{subtable}{0.45\textwidth}
	\begin{center}
		\begin{tabular}{ |c|c|c| } 
			\hline
			$r$ & 	\XthreeC & 		\XoneC \\ \hline 
			0.0 & 	1.98157 & 	2.59944\\
			0.1 & 	1.96199 & 	2.19408\\
			0.11619	&	1.95993 & 2.13688\\
			0.1721 & 1.95993 & 1.95993 \\
			0.2 & 	1.95993 & 	1.88392\\
			0.3 & 	N/A		 & 	1.67649\\ 
			0.4 & 	N/A		 & 	1.62573\\
			0.5 & 	N/A		 & 	1.61912\\ 
			0.6 & 	N/A		 & 	1.61302\\ 
			0.7 & 	N/A		 & 	1.61050\\ 
			1.0 & 	N/A		 & 	1.61050\\ 
			\hline
		\end{tabular}
	\end{center}
\caption{ }
	\label{tab:r4}	
\end{subtable}	
\end{center}
	\caption{A comparison of evacuation times of Algorithms \XthreeC and \XoneC for (a) three agents and (b) four agents.}
\end{table}

\subsection{\XthreeC Trajectories for 4 agents}
We describe the \XthreeC algorithm for 4 agents here, see Figure
 ~\ref{fig:4asym}. The triangle is divided into 5 segments, each agent is responsible for exploring one of these segments and after the exploration of these 4 segments is finished, all the agents will move to the inside of the triangle in order to exchange information. In the case that the exit is not found, the agent closest to the fifth segment, in our case $R_1$, will start exploring that segment and the other 3 agents will move to the midpoint of fifth segment and will wait for the other agent. The trajectories of all 4 agents are shown in Figure \ref{fig:4asym}. 

%


\begin{trajectory}\label{traj:4r-asym}
$\ $ Four agents\\
\hspace*{0.5cm} $R_1 \mbox{ follows the trajectory:} <O,Q_1,B,P_1,J_1,P_1,P_2>$\\
\hspace*{0.5cm} $R_2 \mbox{ follows the trajectory:} <O,Q_1,Q_2,J_2,P_3,\mbox{ wait for }R_1,P_2>$\\
\hspace*{0.5cm}	$R_3 \mbox{ follows the trajectory:} <O,Q_2,A,Q_3,J_3,P_3,\mbox{ wait for }R_1,P_2>$\\
\hspace*{0.5cm}	$R_4 \mbox{ follows the trajectory:} <O,Q_3,C,P_2,J_4,P_3,\mbox{ wait for }R_1,P_2>$
\end{trajectory}
Same as Algorithm \XthreeC the length of trajectories traversed by each agent up to the the point where they reach the distributed meeting points, i.e. points $J_{1-4}$, are equal. Also points $P_1$ and $P_2$ are chosen in a way to satisfy the equation $|J_1P_1|+|P_1P_2|=|J_1C|$.

In this algorithm it is very critical for $R_2$ to reach point $P_3$ before $R_1$ does, as $R_2$ is the farthest agent to $P_3$. And this is true for small values of $r$, but as $r$ gets larger than 0.11619, $|J_2P_3|$ gets larger than $|J_1P_1|+|P_1P_3|$ and hence, the evacuation time of the algorithm starts to increase at this point.

\subsection{\XoneC trajectories for 4 agents}
The difference with the \XoneC algorithm for 3 agents us that the trajectories of agents are symmetric with regards to line $AO$, see Figure  \ref{fig:4-semi-wireless}.  At first agents travel to different points on edge $BC$ and start exploring that edge. After the exploration of edge $BC$ is finished, two of the agents continue exploring the other two edges while the two remaining agents will move inside the triangle in order to maintain a communication link between two agents moving alongside the edges. The trajectories of agents are illustrated in Figure \ref{fig:4-semi-wireless}.


\begin{trajectory} \label{traj:4r-semi-wireless}
$\ $ Four agents\\
\hspace*{2cm}	$R_1 \mbox{ follows the trajectory:} <O,P_1,B,A>$\\
\hspace*{2cm}	$R_2 \mbox{ follows the trajectory:} <O,M_1,P_1,Q_2,A>$\\
\hspace*{2cm}	$R_3 \mbox{ follows the trajectory:} <O,M_1,P_2,Q_3,A>$\\
\hspace*{2cm}	$R_4 \mbox{ follows the trajectory:} <O,P_2,C,A>$
\end{trajectory}

Note that if $r$ is small and the exit is located at vertex $C$, the agents have to travel a long way to the top of $T$ so that they could communicate and then they have to travel all the way down to the exit point. Hence this algorithm is not efficient for small values of $r$. 
\\
Since the trajectories are symmetric, we only analyze the trajectories of $R_1$ and $R_2$ and assume the exit is located at vertex $C$. Based on the value of $r$ we have the following cases for the position of points $Q_1$ and $Q_2$:
\begin{description}
	\item[Case1, $r<\nicefrac{1}{3}$:] Point $Q_1$ is at distance $3r$ from $A$ on edge $AB$. Point $Q_2$ is $r$-apart from point $Q_1$ on segment $Q_1Q_4$. Point $P_1$ is chosen in a way to satisfy the following equation:
	 $$|OM_1|+|M_1P_1|+|P_1Q_2|=|OP_1|+|P_1B|+|BQ_1|$$
	 
	 \item[Case2, $\nicefrac{1}{3}\leq r <0.6436493404$:] In this case, since $|P_1P_2|>r$, when $R_2$ and $R_3$ reach points $P_1$ and $P_2$, they have to move toward each other to be in communication range. Points $Q_2$ and $Q_3$ are placed on the edge $BC$, and points $P_1$ and $Q_2$ are obtained from solving the following two equations:
	 \begin{itemize}
		\item $y+|M_1P_1|+|P_1Q_2|=|OP_1|+|P_1B|$ and
		\item $|M_1P_1|-|P_1Q_1|=\nicefrac{r}{2}$ 
	\end{itemize}	  
	 The evacuation time is $|OP_1|+|P_1B|+1$.
	 \item[Case3, $r\geq 0.6436493404$:] With large enough $r$, this problem converges to the wireless problem proposed in \cite{CKKNOS}. Points $P_1$ and $P_2$ are chosen in a way so that all the 4 agents finish exploring side $BC$ at the same time, i.e. $y+|M_1P_1|=|OP_1|+|P_1B|$. When the edge $BC$ is explored all the agents move toward vertex $A$. At all time of the execution of the algorithm, all the agents are in effective communication range. The evacuation time will be $y+|M_1P_1|+1 \approx 1.610499805$.
\end{description}

Note that when $R_2$ and $R_3$ move toward vertex $A$, they move with speed less that 1 in a way that they always be on the line going through $R_1$ and $R_4$. The comparison of evacuation times of \XthreeC and \XoneC for 4 agents are shown in Table \ref{tab:r4}


%
See Table \ref{tab:r4} for evacuation times of  \XthreeC and \XoneC for four agents.

\section{Evacuation of $k>4$ agents}\label{sec:ka}

It is shown in \cite{EtrSq} that $1+2y\approx 1.5773$ is a lower bound on the 
evacuation time of
 $k$ wireless agents (i.e., $r=1)$ from the centroid of the triangle, for any number $k$ of agents. It follows that this is also a lower bound for any $0\leq r \leq 1$. It is shown in \cite{EtrSq} that this lower bound time can be achieved with 6 agents with $r=1$. We show below that for any  $0<r<1$, evacuation can be done in time $1+2y$ using the \CXP strategy, however, the minimum number of agents needed is inversely proportional to $r$.   

First, we show that for $0<r<1$, the lower bound on the  evacuation time of $1+2y$ cannot be achieved with a constant number of agents.

\begin{figure}{r}
	\centering
 \includegraphics[scale=0.45]{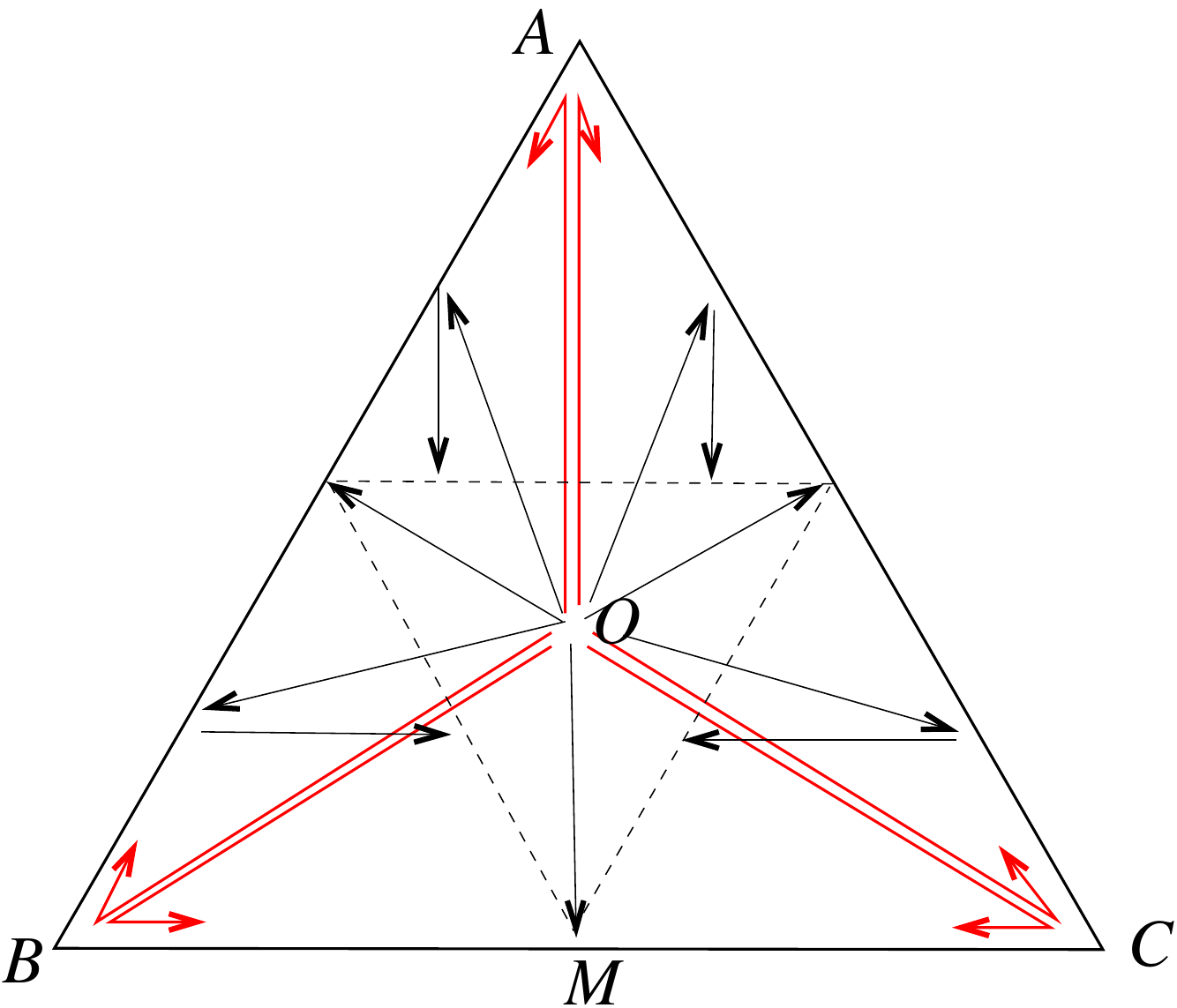}
 \caption{Trajectories of $k$ agents  for $k=12$, where 6 agents (red) are exploring, and the remaining agents form a relay network.}
 \label{subfig:kagents}
        \vspace*{-5mm}
\end{figure}

\begin{theorem} \label{th:krobots}
  Given transmission range $r$, the number of 
agents needed to achieve the optimal evacuation time $1+2y$ is 
at least $1/r + 1$.
\end{theorem}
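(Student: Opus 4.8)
The plan is to give an adversary argument: I will show that any algorithm $\mathcal{A}$ achieving $E_{\mathcal{A}}(k,r)=1+2y$ must, at one critical instant, sustain a connected relay chain spanning a full side of $T$, and a chain of hops of length at most $r$ joining two points at distance $1$ needs at least $1/r+1$ nodes.

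First I would record the facts that make $1+2y$ tight and leave no slack. The centroid $O$ is at distance exactly $2y$ from each vertex, so no vertex is visited before time $2y$; and the entire perimeter must be explored, since otherwise the adversary hides the exit at an unexplored point. I would then isolate an elementary timing inequality. Put the exit at a vertex $V$, discovered at time $t_V\ge 2y$, and let $R_i$ be an agent that learns the exit location at time $s=t_V+d$ while at Euclidean distance approaching $1$ from $V$ (e.g.\ near another vertex). A short law-of-cosines computation shows $R_i$ cannot then reach $V$ by $1+2y$ unless $t_V=2y$ and $d=0$: an agent that has left the top vertex $A$ at time $2y$ and travelled distance $d$ along $AB$ arrives at the opposite vertex $C$ no earlier than $2y+d+\sqrt{d^2-d+1}$, and since $d+\sqrt{d^2-d+1}>1$ for every $d>0$, this strictly exceeds $1+2y$. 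Hence a far agent must be fed the exit location at the \emph{instant} $t_V$ of discovery, with zero slack, and $V$ itself must be found by an agent that beelines $O\to V$ so that $t_V=2y$.

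Next I would produce the far agent and fix $V$ adversarially. The three vertices $A,B,C$ are mutually at distance $1$ and each at distance $2y$ from $O$, so an agent sitting in the top region near time $2y$ can be moving straight toward at most one of the two lower vertices and is receding from the other. Letting the adversary place the exit at the vertex $V$ that this agent is \emph{not} approaching, the agent is at distance approaching $1$ from $V$ at the discovery time, and by the inequality above it must learn about $V$ exactly at $t_V$. Since the perimeter near the far vertex must be explored, such an agent genuinely exists in the worst case; the purpose of the no-slack analysis is precisely to certify that it is still at distance approaching $1$ from $V$ when $V$ is found. Finally I invoke the communication model: a distant agent is informed without delay only when the agents form a connected network at that instant (a virtual meeting along a connected chain). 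Thus at time $t_V$ there is a connected chain $x_0=V,x_1,\dots,x_m$ with $|x_jx_{j+1}|\le r$ and $|x_0x_m|\to 1$; then $1\le\sum_j|x_jx_{j+1}|\le mr$, so $m\ge 1/r$ and the chain uses at least $m+1\ge 1/r+1$ distinct agents, whence $k\ge 1/r+1$.

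The main obstacle is the middle step: rigorously pinning the far agent at Euclidean distance arbitrarily close to $1$ from the chosen exit vertex at the exact discovery time, i.e.\ squeezing every bit of slack out of the $1+2y$ bound and ruling out the escape in which the relevant agent has conveniently pre-oriented itself toward $V$. I expect to handle this by exploiting the threefold symmetry of the vertices and letting the adversary select the exit at whichever vertex is the worst for the configuration of explorers that the perimeter-covering requirement forces. Once that configuration is established, the connectivity observation and the chain-counting are routine.
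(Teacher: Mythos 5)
Your argument is correct and follows essentially the same route as the paper's (much terser) proof: force the discovery of a vertex exit and its relay to an agent at distance $1$ to occur at the exact instant $2y$, then count the agents in a connected chain of $r$-hops spanning distance $1$ to get $1/r+1$. The paper pins down the ``far agent'' as the first agent to reach any vertex, with the adversary placing the exit at one of the other two vertices --- the cleanest way to discharge the existence step you flag as your main obstacle.
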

 \begin{proof} 
Let $t$ be the time the first agent, say $R_1$, reaches a vertex, say $A$. Clearly,
$t\geq 2y$. Since the adversary can place the exit  at either $B$ or $C$, for the evacuation time to be exactly $1+2y$, it must be that $t=2y$, and furthermore, another agent must have reached either $B$ or $C$ or both, and must be able to instantly communicate the presence of the exit to $R_1$. For this communication to happen, an additional $1/r-1$ agents are needed, for a total of $1/r+1$ agents.
\end{proof} 

Next we show that for any $0<r<1$, the \CXP strategy can achieve this lower bound with a sufficient number of agents. 
\begin{theorem} \label{th:krobots2}
For any $0 < r < 1$, the evacuation of  $k=6 +2\lceil(\frac{1}{r}-1)\rceil$ agents of transmission range $r$ from the centroid of an equilateral triangle can be done in time $1+2y\approx  1.5773$, which is optimal. 
\end{theorem}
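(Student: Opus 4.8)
The plan splits into optimality and achievability. Optimality is immediate: \cite{EtrSq} proves that $1+2y$ is a lower bound on the evacuation time for every $r$ and every number of agents, so any schedule meeting it is time-optimal, and the entire task is to exhibit a \CXP schedule for $k=6+2\lceil \tfrac1r-1\rceil$ agents that evacuates in time at most $1+2y$. The guiding idea is a reduction to the wireless model: I would use six \emph{explorer} agents whose trajectories reproduce the optimal $6$-agent wireless solution of \cite{EtrSq}, and the remaining $2\lceil\tfrac1r-1\rceil=2(\lceil\tfrac1r\rceil-1)$ agents solely as \emph{relays}, positioned so that all $k$ agents form a single connected communication network at every instant. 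Granted permanent connectivity, each discovery of the exit is broadcast to all agents instantaneously, exactly as for $r=1$, so the evacuation time coincides with that of the wireless explorers, namely $1+2y$.

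For the explorers I would take the symmetric connected sweep: from $O$ the six agents reach the three vertices (distance $2y$), two arriving at each vertex at time $2y$, after which each sweeps from its vertex toward the midpoint of one of the two incident sides, so that the whole perimeter is covered by time $2y+\tfrac12$ and the two sweepers of each side meet at its midpoint. Assuming instantaneous broadcast, the worst case is the exit at a vertex: it is seen at time $2y$, and every other agent is then at straight-line distance at most $1$ from it (the side length), hence reaches it by $2y+1=1+2y$. The second tight case is the exit at a side's midpoint: it is seen at $2y+\tfrac12$, and at that instant the farthest agent sits at the midpoint of an adjacent side, at distance exactly $\tfrac12$ (a midsegment), again giving $2y+1$. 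A critical-point argument in the style of Lemma~\ref{lem:betagamma} should then show every intermediate exit position is dominated by these two, so the explorers realize exactly $1+2y$.

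The relays must keep the network connected throughout this sweep. At time $2y+t$ (for $t\in[0,\tfrac12]$) the two explorers on each side sit at distances $t$ and $1-t$ from a common vertex, and I would connect all six into a \emph{path} rooted at the top vertex $A$: a chain down side $AB$ and a chain down side $AC$ reaching the two near-$B$ and near-$C$ explorers, plus two short bridges of length $t$ attaching the two $BC$-sweepers to these chains at $B$ and $C$, while the central gap of $BC$ (which closes as $t\to\tfrac12$) is left unspanned. The total length to be bridged is then $2(1-t)+2t=2$, constant in $t$, and the number of interior relays needed is $2(\lceil (1-t)/r\rceil-1)+2(\lceil t/r\rceil-1)^{+}$. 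Using the elementary subadditivity $\lceil x\rceil+\lceil y\rceil\le\lceil x+y\rceil+1$ with $x=(1-t)/r$, $y=t/r$ gives $\lceil (1-t)/r\rceil+\lceil t/r\rceil\le\lceil \tfrac1r\rceil+1$, so this count never exceeds $2(\lceil\tfrac1r\rceil-1)=2\lceil\tfrac1r-1\rceil$, exactly the relay budget; equality holds at $t=0$, where the configuration is the two length-$1$ chains along $AB$ and $AC$ that give the strategy its name. As $t$ grows the upper chains shorten and free relays that are redeployed onto the vertex bridges, and because the changes are continuous the relays can track them at speed at most $1$.

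The part needing the most care is making the connectivity claim rigorous as a \emph{continuous-time} invariant rather than a snapshot count: I must verify that the communication graph stays connected at every instant (including the initial fan-out from $O$ and the moments when relays are reassigned between chains and bridges), and that the discovering agent is always in the network at the instant it sees the exit, so that the wireless worst-case analysis of the second paragraph applies unchanged. The other non-routine piece is the full case analysis over exit positions confirming that $1+2y$ is truly the maximum; the vertex and midpoint cases are tight and the rest should follow by monotonicity, but the exits on the bottom side near $M$ --- discovered late yet reachable only through the bridges --- are the ones to check most carefully. Finally one should record the boundary behaviour of the count: as $r\to1$ the bridge and chain terms vanish and the six explorers alone suffice, recovering the known wireless result with $k=6$.
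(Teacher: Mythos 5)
Your proposal is correct and follows essentially the same construction as the paper: six explorers execute the optimal wireless sweep (two per vertex, each covering a half-edge, so that an exit found at time $2y+t$ is within $1-t$ of every agent), while the $2\lceil \frac{1}{r}-1\rceil$ relay agents start out strung along $AB$ and $AC$ and reposition as the explorers pass them. The only cosmetic difference is that the paper migrates the passed relays to an interior triangle rather than to vertex bridges, and its proof is just as terse as yours about the two points you flag (maintaining connectivity as a continuous-time invariant, and the relays' own arrival at the exit within the budget), so there is no substantive divergence to report.
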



\begin{proof}
 Let $i=  \lceil \frac{1}{r}-1\rceil$. The trajectories of  the agents are as shown in Figure \ref{subfig:kagents}. Each vertex is reached by two agents and they explore the perimeter of the triangle from that vertex until the mid-point on each edge. Notice that the exploration terminates at time $2y+0.5$, and if an agent finds the exit at time $2y+t$, other exploration agents are at distance at most $1-t$ from it.

Furthermore, $2i$ agents go to  edges $AB$ and $AC$ into equidistant positions to form a relay network for the 6 agents doing the exploration. 
 When an exploring agent reaches a relay agent, the relay agent starts to move to its final position on the interior  dashed triangle. In this way  the relay agents can perform the relay function for the exploring agents, and are also  able to reach the exit when it is found within the  bound $1+2y$.
\end{proof}
\bibliographystyle{plain}
\bibliography{refs}
\newpage

\end{document}